\documentclass[11pt]{article}

\usepackage[a4paper,margin=1in]{geometry}
\usepackage{amsmath,amssymb,amsthm,mathtools}
\usepackage{microtype}
\usepackage{xcolor}
\usepackage{booktabs}
\usepackage{authblk}
\usepackage{tikz}
\usetikzlibrary{arrows.meta,calc,decorations.pathreplacing,positioning}
\usepackage[
  colorlinks=true,
  linkcolor=blue!55!black,
  citecolor=blue!55!black,
  urlcolor=blue!55!black
]{hyperref}

\newtheorem{theorem}{Theorem}[section]
\newtheorem{lemma}[theorem]{Lemma}
\newtheorem{proposition}[theorem]{Proposition}
\newtheorem{corollary}[theorem]{Corollary}
\theoremstyle{definition}
\newtheorem{definition}[theorem]{Definition}
\newtheorem{remark}[theorem]{Remark}

\newcommand{\ALG}{\operatorname{ALG}}
\newcommand{\OPT}{\operatorname{OPT}}
\newcommand{\cost}{\operatorname{cost}}
\newcommand{\E}{\mathbb{E}}
\newcommand{\Env}{\mathcal{E}}
\newcommand{\DPEnv}{\operatorname{DP\text{-}Envelope}}

\title{Online Multi-Level Aggregation with Per-Batch Maximum Delay}
\author{
Tianhang Lu\thanks{The first two authors contribute equally.},
Runtian Ren,
Shengcai Liu,
Ke Tang
}
\affil{
Guangdong Provincial Key Laboratory of Brain-Inspired Intelligent Computation,\\
Department of Computer Science and Engineering,\\
Southern University of Science and Technology, Shenzhen 518055, China\\
liusc3@sustech.edu.cn
}
\date{}

\begin{document}

\maketitle

\begin{abstract}
  We study online multi-level aggregation on finite rooted trees with a
  per-batch maximum-delay objective.  A service pays for a rooted subtree and
  for the maximum waiting time among the requests cleared by that service.
  We show that the offline optimum admits a consecutive-arrival-block normal
  form and can be computed by a polynomial-time dynamic program.  We then
  turn this program into an online timer.  The resulting DP-Envelope family
  has nested block partitions: its deterministic endpoint is
  $2$-competitive, while integrating over the family by sampling one global
  parameter with density $e^\theta/(e-1)$ leads to an
  $e/(e-1)$-competitive randomized algorithm against an oblivious adversary.
  The deterministic guarantee matches the known fixed-node lower bound, and
  we prove a matching randomized lower bound.  Consequently, hierarchical
  branching causes no loss in competitive ratio: on every nondegenerate
  rooted tree, both optimal constants coincide with those of fixed-node
  batching.  More generally, the upper-bound arguments apply to every
  realizable static service system with a normalized, nondecreasing,
  submodular joint service cost, identifying submodularity rather than tree
  geometry as the underlying structural requirement.
\end{abstract}

\section{Introduction}
\label{sec:introduction}

Aggregation is a recurring mechanism for exploiting economies of scale.
Orders placed together may share a setup cost, packets acknowledged together
may share one control message, and data collected at different locations may
share part of a transmission route.  This tradeoff appears in inventory and
joint replenishment~\cite{askoy1988multi,goyal1989joint,joneja1990joint},
lot sizing and logistics~\cite{sindhuchao2005integrated,brahimi2006single,
jans2008modeling,quadt2008capacitated,bushuev2015review,karimi2003capacitated},
and communication networks~\cite{yuan2003synchronization,leung2007overview,
khanna2002control}.  Multi-level aggregation (MLA) isolates its algorithmic
core on a rooted weighted tree: requests arrive at tree nodes, and a service
selects a connected subtree containing the root.  Requests whose root paths
overlap can then share the corresponding part of the service cost.

The literature on online aggregation has almost exclusively measured delay
additively, by charging every request for its own waiting time.  Additive
delay is appropriate when the aggregate amount of waiting is the relevant
quantity, but it does not directly model tail latency, service-level
agreements, or incident-based penalties in which the oldest request in a
batch determines its urgency.  Nonadditive delay objectives have consequently
received renewed attention.  In particular, recent work on TCP
acknowledgment develops a general theory of batch-aware and batch-oblivious
delay costs~\cite{bhore2026online}, building on earlier objectives that
penalize long delays~\cite{albers2005dynamic}.  We study the following batch-aware objective for MLA.  A service issued at time $t$, using
a rooted subtree $S$ to clear a nonempty batch $B$, costs
\begin{equation}
  \label{eq:intro-objective}
  w(S)+\max_{r\in B}\bigl(t-a(r)\bigr).
\end{equation}
Thus the total delay is the sum of one maximum waiting time per service
batch.  It is neither the classical sum of all request delays nor one global
maximum over the complete schedule.

This change of objective is algorithmically consequential.  Classical
additive-delay algorithms accumulate urgency request by request, whereas in
\eqref{eq:intro-objective} an arbitrary number of younger requests can join a
batch without increasing its delay once the oldest request is fixed.  In
classical request-additive MLA, the hierarchy is a genuine source of
difficulty: algorithms and guarantees are organized around the depth and
levels of the underlying tree~\cite{bienkowski2020mlap,bienkowski2021new,
buchbinder2017depth,azar2019framework,mcmahan2021dcompetitive}.  It is not a
priori clear whether this dependence persists when delay is charged only
once per batch.

Already on the line, a direct adaptation of additive-delay methods gives the
wrong answer.  A natural multiplicity-insensitive version of
\emph{Balance}~\cite{bienkowski2013chain}, which we call
\emph{Max-Balance}, retains the usual dyadic prefixes but declares a prefix
tight according to its oldest pending request.  Appendix
\ref{app:balance-separation} proves a tight $\Theta(\log(N+1))$ ratio for
this rule on $N$-request instances.  The bad instance keeps one request alive
at each spatial scale: local clocks force repeated services across all
scales, while the offline optimum waits and uses one long service.  The
original additive potential fares even worse if used unchanged, with ratio
$\Omega(\!\sqrt N)$ already at one location.  These separations show that a
global batching rule is necessary and lead to the following structural
question.
\begin{quote}
  \emph{Does hierarchical branching intrinsically make online aggregation
  with per-batch maximum delay harder than fixed-node batching?}
\end{quote}

Our answer is no.  On every nondegenerate rooted tree, the optimal
deterministic competitive ratio is $2$, and the optimal randomized ratio
against an oblivious adversary is $e/(e-1)$.  These are exactly the
fixed-node constants.  The deterministic lower bound is known from the TCP
restriction~\cite{dooly2001tcp}; we prove the matching randomized lower bound
by a finite-support distribution at one fixed node.  Thus, under per-batch
maximum delay, the hierarchical difficulty of MLA disappears at the level of
competitive ratio: arbitrary branching is no harder than one positive-cost
service location.  In particular, the result does more than improve the
logarithmic behavior of local Balance-style rules.

The coincidence with the fixed-node constants is not a reduction to the
fixed-node problem.  A tree service may jointly cover requests on incomparable
branches, and its cost shares edges across those branches; neither the batches
nor their marginal costs decompose by location or level.  The hierarchy-aware
local methods used for request-additive delay therefore do not lift to this
objective, as the Max-Balance separation already indicates.  Establishing the
same constants requires a global argument that treats the tree cost as a
submodular set function and turns an exact offline dynamic program into the
online rule.

\paragraph{Techniques.}
The mechanism behind this collapse is global and DP-driven.  A temporal
uncrossing argument first shows that an offline optimum may serve consecutive
arrival blocks, leading to an exact polynomial-time dynamic program.  For a
revealed busy block $B$, let $C(B)$ denote the cost of one joint service and
$D(B)$ its offline value.  DP-Envelope uses the slack
$D(B)-C(B)$ as an anchor and sets the parameterized deadline
\[
  a_s+D(B)-C(B)+\theta C(B),
\]
where $a_s$ is the first arrival in $B$.  Extension and Cut inequalities
govern how blocks combine, while diminishing returns makes the deadlines
causal; together these properties show that the greedy arrival-block
partitions coarsen monotonically as $\theta$ increases.  The endpoint
$\theta=1$ yields the deterministic analysis.

The randomized algorithm integrates over the entire coarsening chain.  For
the partition $P_\theta$ produced with parameter $\theta$, define
\[
  Q(\theta)=\sum_{B\in P_\theta}
  \bigl(D(B)-(1-\theta)C(B)\bigr).
\]
Between parameter breakpoints the partition is fixed, so
$Q'(\theta)=\sum_{B\in P_\theta}C(B)$, while the exact online block-cost
identity gives $\ALG_\theta=Q(\theta)+Q'(\theta)$.  Consequently,
\[
  \frac{d}{d\theta}\bigl(e^\theta Q(\theta)\bigr)
  =e^\theta\ALG_\theta.
\]
Coarsening makes every breakpoint jump of $Q$ nonnegative.  Integrating the
identity over $[0,1]$ therefore leads to the density
$e^\theta/(e-1)$ and the ratio $e/(e-1)$.  The distribution is not guessed
independently of the algorithm; it is the integrating-factor density induced
by the DP-Envelope potential identity.  One global value of $\Theta$ is
sampled before the input and retained throughout the execution.

The rooted tree is not essential to either upper bound.  Its joint service
cost is a weighted coverage function, hence normalized, nondecreasing, and
submodular.  The proof uses subadditivity for temporal uncrossing,
diminishing returns for monotonicity of the envelope slack, and the interval
quadrangle inequality for the Extension/Cut inequalities and parameter
coarsening.  Static realizability ensures that serving one batch does not
alter the cost of later batches.  The same argument therefore applies to
every realizable static service system satisfying these properties.

\paragraph{Our contributions.}
Our contributions in this work can be summarized as follows; Table
\ref{tab:intro-results} gives a compact comparison.
\begin{enumerate}
  \item We determine the optimal online constants on every nondegenerate
  rooted tree: $2$ deterministically and $e/(e-1)$ randomized against an
  oblivious adversary.  Thus hierarchical branching causes no loss in
  competitive ratio relative to fixed-node batching.

  \item We turn the offline dynamic program into a causal online timer.  The
  resulting DP-Envelope family satisfies Extension and Cut inequalities,
  induces greedy arrival-block partitions that coarsen monotonically with
  the parameter, and unifies the deterministic endpoint with the randomized
  global-shift analysis.

  \item We introduce DP-Envelope Global-Shift.  A single global sample with
  density $e^\theta/(e-1)$ yields an expected competitive ratio of
  $e/(e-1)$ against an oblivious adversary.  We complement the upper bound
  with a matching finite-support distributional lower bound on the
  fixed-node restriction, proving optimality on every nondegenerate rooted
  tree.

  \item We prove a consecutive-arrival-block normal form for the offline
  optimum.  This leads to an exact dynamic program requiring $O(n^2)$
  interval-cost evaluations and to a polynomial-time implementation on an
  explicitly represented rooted tree.  The same dynamic program serves both
  as the benchmark and as the online timer.

  \item We isolate the service-cost properties used by the proofs.  The
  offline dynamic program and both online upper bounds extend to realizable
  static service systems with a normalized, nondecreasing, submodular joint
  service cost.  Polynomial implementations follow from a polynomial-time
  value-and-realization oracle.

  \item We show that local scale-wise urgency is fundamentally mismatched
  with per-batch maximum delay: the natural Max-Balance rule has a tight
  $\Theta(\log(N+1))$ ratio, and the original additive Balance potential can
  have ratio $\Omega(\sqrt N)$ even at one location.
\end{enumerate}

For exposition, Section~\ref{sec:line-warmup} first develops the rooted
half-line, where the timer and its nested partitions admit a direct
space--time interpretation.

\begin{table}[b]
  \centering
  \small
  \setlength{\tabcolsep}{7pt}
  \begin{tabular}{@{}lccc@{}}
    \toprule
    Setting & Offline & Deterministic & Randomized (oblivious) \\
    \midrule
    Rooted half-line
      & polynomial & $2$ (tight) & $e/(e-1)$ (tight) \\
    Rooted weighted tree
      & polynomial & $2$ (tight) & $e/(e-1)$ (tight) \\
    Submodular service-cost class
      & oracle-polynomial & $2$ (class-tight) & $e/(e-1)$ (class-tight) \\
    \bottomrule
  \end{tabular}
  \caption{Summary of the resulting bounds.  The deterministic lower bound
  is known, while the randomized lower bound is established in this work.
  ``Class-tight'' does not assert tightness for every individual service
  system.}
  \label{tab:intro-results}
\end{table}

\paragraph{Related work.}
\emph{Aggregation with additive delays.}
The TCP acknowledgment problem is the canonical fixed-setup aggregation
problem.  Its optimal competitive ratios are $2$ for deterministic
algorithms~\cite{dooly2001tcp} and $e/(e-1)$ for randomized
algorithms~\cite{seiden2000guessing,karlin2001dynamic}.  Joint replenishment
introduces several item types with a shared ordering cost; see, for example,
\cite{buchbinder2008online,bienkowski2014better}.  Control-message
aggregation on a line was studied in~\cite{khanna2002control,brito2004competitive},
and \emph{Balance} gives a $5$-competitive algorithm for the
request-additive delay objective~\cite{bienkowski2013chain}.

MLA places these problems in a rooted-tree hierarchy.  The depth-one and
depth-two cases capture TCP acknowledgment and joint replenishment,
respectively.  Deterministic algorithms for additive delays and deadlines
have been developed through tree-specific and general metric frameworks;
see~\cite{bienkowski2020mlap,bienkowski2021new,buchbinder2017depth,
azar2019framework,mcmahan2021dcompetitive,turoczy2025memory}.  A stochastic
version of additive-delay MLA under Poisson arrivals was studied
in~\cite{mari2024online}.

\emph{Beyond additive delays.}
Albers and Bals studied TCP objectives that emphasize the longest waiting
times~\cite{albers2005dynamic}.  
Bhore, Paw{\l}owski, and Umboh recently initiated a systematic study of TCP acknowledgment under
general batch-aware and batch-oblivious delay functions
\cite{bhore2026online}.  For the sum-over-batches model with an arbitrary
monotone batch-delay function, they show that the optimal worst-case guarantee
over this class is $\Theta(\log n)$.  Our delay function is the specific maximum waiting
time within each batch, while the service side is enriched from one fixed
setup to a line, a tree, and ultimately a submodular joint cost.  For this
specific fixed-node objective, Dooly, Goldman, and Scott established the
optimal deterministic ratio $2$~\cite{dooly2001tcp}.  We establish the
matching randomized lower bound $e/(e-1)$ in Appendix
\ref{app:known-lower-bounds}.  Other
recent departures from classical monotone holding costs include MLA with
nonmonotone delays~\cite{azar2026beyond}, joint replenishment with holding
and backlog costs~\cite{azar2026online,moseley2025putting,shmoys2026improved}, and universal
optimization for subadditive replenishment~\cite{ezra2026universal}.

\paragraph{Organization.}
Section~\ref{sec:preliminaries} defines the model and records the tree
coverage inequalities.  Section~\ref{sec:line-warmup} develops the line
metric as a warm-up.  Section~\ref{sec:offline} gives the offline normal form
and dynamic program.  Sections~\ref{sec:deterministic} and
\ref{sec:randomized} present the deterministic and randomized results,
respectively.  Section
\ref{sec:extensions} develops the static submodular service theorem, and
Section~\ref{sec:conclusion} concludes.  Appendix
\ref{app:known-lower-bounds} records the fixed-node lower bounds, and
Appendix~\ref{app:balance-separation} gives the
tight logarithmic bound for the local \emph{Max-Balance} adaptation and its
separation from the original additive-delay rule.

\section{Problem Formulation and Preliminaries}
\label{sec:preliminaries}

\paragraph{Rooted tree.}
Consider a finite tree $T=(V,E)$ rooted at $\rho\in V$.
Every edge $e\in E$ has a nonnegative weight $w_e$.
For a connected subtree $S\subseteq T$ containing $\rho$, let
\[
  w(S)=\sum_{e\in E(S)} w_e.
\]
For $v\in V$, let $P(\rho,v)$ denote the unique path from $\rho$ to $v$.
We call $T$ \emph{nondegenerate} if it contains a node $v$ for which
$w(P(\rho,v))>0$.

\paragraph{Requests and arrival epochs.}
An input $I$ is a finite collection of request occurrences.
A request $r$ has an arrival time $a(r)\ge 0$ and a location
$v(r)\in V$.
Requests remain pending until they are served.
Distinct request occurrences retain distinct identifiers even when they
have the same location and arrival time.

We group simultaneous arrivals into epochs with distinct times
\[
  a_1<a_2<\cdots<a_n,
\]
and write $R_i$ for the finite nonempty set of requests arriving at time
$a_i$.
The entire set $R_i$ is revealed atomically.
For $i\le j$, define
\[
  R[i,j]=\bigcup_{h=i}^{j}R_h,
  \qquad
  R[i,i-1]=\varnothing.
\]
We use \emph{arrival block} for a consecutive sequence of complete epochs,
\emph{service batch} for the requests cleared by one service, and
\emph{busy block} for an arrival block accumulated by an online DP-Envelope
busy period.  These notions need not coincide for an arbitrary schedule.

\paragraph{Services.}
At any time $t$, an algorithm may issue a service by selecting a connected
subtree $S\subseteq T$ containing the root.
The service clears every request that is pending at time $t$ and whose
location belongs to $S$.
Denote the nonempty set of requests cleared by this service by $B$.
The service cost is
\begin{equation}
  \label{eq:service-cost}
  \cost(B,t,S)
  =
  w(S)+\max_{r\in B}\bigl(t-a(r)\bigr).
\end{equation}
Services that clear no request may be deleted and are therefore excluded.
A feasible schedule eventually serves every request, and its total cost is
the sum of~\eqref{eq:service-cost} over all its services.
An infeasible schedule is assigned cost $+\infty$.
We write $\OPT(I)$ for the minimum cost of an offline schedule.

The delay term in~\eqref{eq:service-cost} is one maximum waiting time per
nonempty service batch.
Accordingly, our objective is
\[
  \sum_s
  \left(
    w(S_s)+\max_{r\in B_s}(t_s-a(r))
  \right).
\]
It is neither the classical request-additive objective
$\sum_s w(S_s)+\sum_r(t_r-a(r))$ nor a single maximum-delay term for the
entire schedule.

\begin{remark}[Arrival-first convention]
  \label{rem:arrival-first}
  If an arrival epoch coincides with a scheduled service time, the whole
  epoch is inserted before the service decision is evaluated.
  On a tree, simultaneous requests can lie on incomparable branches and
  cannot in general be represented by one farthest request.
  This convention will later determine the weak inequalities used at greedy
  boundaries and parameter breakpoints.
\end{remark}

\paragraph{Rooted footprints.}
For a finite request set $A$, define its rooted footprint and its minimum
spatial service cost by
\begin{equation}
  \label{eq:coverage-cost}
  U(A)=\{\rho\}\cup\bigcup_{r\in A}P(\rho,v(r)),
  \qquad
  \kappa(A)=w(U(A)).
\end{equation}
Here $\kappa$ is a set function on distinctly labelled request occurrences,
not merely on their distinct locations.
The subtree $U(A)$ is the unique inclusion-minimal rooted subtree that can
serve all requests in $A$; hence $\kappa(A)$ is the minimum spatial cost of
any service covering $A$.
For an interval of arrival epochs, abbreviate
\begin{equation}
  \label{eq:interval-coverage}
  C(i,j)=\kappa(R[i,j]),
  \qquad
  C(i,i-1)=0.
\end{equation}

\begin{lemma}[Submodularity of rooted-tree coverage]
  \label{lem:tree-submodular}
  The function $\kappa$ is normalized, nonnegative, monotone, and
  submodular.  In particular, for any finite request sets $A$ and $B$,
  \begin{equation}
    \label{eq:submodularity}
    \kappa(A)+\kappa(B)
    \ge
    \kappa(A\cup B)+\kappa(A\cap B).
  \end{equation}
  Hence $\kappa$ is subadditive.
\end{lemma}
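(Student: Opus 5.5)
The plan is to write $\kappa$ as a weighted coverage function over edges and read off each property from that representation. For each edge $e\in E$, let $X_e\subseteq V$ be the set of nodes in the subtree below $e$; that is, $u\in X_e$ iff $e$ lies on $P(\rho,u)$. Then $e\in E(U(A))$ iff some $r\in A$ has $v(r)\in X_e$. I will first check that $U(A)$ is a connected subtree containing $\rho$: it is a union of root paths together with $\rho$. Since its edges are exactly the edges lying on some $P(\rho,v(r))$, we get
\[
  \kappa(A)=\sum_{e\in E} w_e\,\mathbf{1}\bigl[A\cap\{r: v(r)\in X_e\}\neq\varnothing\bigr].
\]
The indicator term is the standard ``hits a fixed set'' function $h_e(A)=\mathbf{1}[A\cap Y_e\neq\varnothing]$, with $Y_e=\{r:v(r)\in X_e\}$.

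The properties then follow term by term. Normalization holds because $U(\varnothing)=\{\rho\}$ has no edges, so $\kappa(\varnothing)=0$. Nonnegativity follows from $w_e\ge 0$. Monotonicity holds because $A\subseteq A'$ implies $h_e(A)\le h_e(A')$. For submodularity, it suffices, since $w_e\ge0$ and nonnegative combinations preserve the inequality, to verify $h_e(A)+h_e(B)\ge h_e(A\cup B)+h_e(A\cap B)$. I will do this by cases on the right-hand side. If $A\cap B$ meets $Y_e$, then both $A$ and $B$ meet it, and both sides equal $2$. Otherwise the right-hand side is at most $h_e(A\cup B)$, and $A\cup B$ meets $Y_e$ only if $A$ or $B$ does, so the left-hand side is at least as large.

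Subadditivity then follows by taking \eqref{eq:submodularity} and dropping the nonnegative term $\kappa(A\cap B)$. This gives $\kappa(A)+\kappa(B)\ge\kappa(A\cup B)$.

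The only step requiring care is the edge-indicator representation of $E(U(A))$: an edge belongs to the union of root paths exactly when some request lies below it. This holds by uniqueness of paths in a tree. Everything else is routine, and I expect no real obstacle.
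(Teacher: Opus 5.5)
Your proposal is correct and takes essentially the paper's approach: the paper also writes $\kappa(A)=\sum_{e\in E} w_e\chi_e(A)$, with $\chi_e(A)$ the indicator that some request of $A$ lies below $e$, reads off each property edge by edge using $w_e\ge 0$, and obtains subadditivity from submodularity together with $\kappa(A\cap B)\ge 0$. Your explicit case check for the hitting indicator spells out the one step the paper simply asserts.
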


\begin{proof}
  For an edge $e$, let $V_e$ denote the component not containing the root after
  deleting $e$.  Define $\chi_e(A)=1$ if some request in $A$ is located in
  $V_e$, and $\chi_e(A)=0$ otherwise.  Then
  \[
    \kappa(A)
    =
    \sum_{e\in E} w_e\chi_e(A).
  \]
  For each edge, the displayed coverage indicator is normalized,
  nonnegative, monotone, and submodular as a function of the set of request
  occurrences.  Nonnegative weighted sums preserve these properties.
  Subadditivity follows from~\eqref{eq:submodularity},
  $\kappa(A\cap B)\ge0$, and $\kappa(\varnothing)=0$.
\end{proof}

\begin{corollary}[Diminishing returns]
  \label{cor:diminishing-returns}
  If $A\subseteq B$, then for every finite request set $Y$,
  \[
    \kappa(A\cup Y)-\kappa(A)
    \ge
    \kappa(B\cup Y)-\kappa(B).
  \]
\end{corollary}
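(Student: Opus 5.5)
The plan is to derive Corollary~\ref{cor:diminishing-returns} directly from the submodular inequality~\eqref{eq:submodularity} of Lemma~\ref{lem:tree-submodular}. The only work is choosing the two sets so that their union is $B\cup Y$ and their intersection is $A$ or a set with the same coverage.

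First handle the case where $Y$ is disjoint from $B$. Apply~\eqref{eq:submodularity} to the pair $A\cup Y$ and $B$. Since $A\subseteq B$ and $Y\cap B=\varnothing$, their union is $B\cup Y$ and their intersection is $A$. This gives
\[
  \kappa(A\cup Y)+\kappa(B)\ge\kappa(B\cup Y)+\kappa(A),
\]
which rearranges to the claim.

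For general $Y$, set $Y'=Y\setminus B$, so that $B\cup Y=B\cup Y'$. Monotonicity of $\kappa$ gives $\kappa(A\cup Y)\ge\kappa(A\cup Y')$, because $A\cup Y'\subseteq A\cup Y$. So the left-hand side for $Y$ is at least the left-hand side for $Y'$. The right-hand side is unchanged when $Y$ is replaced by $Y'$. The disjoint case then finishes the argument.

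Alternatively, we can apply~\eqref{eq:submodularity} to $A\cup Y$ and $B$ without splitting into cases. The union is $B\cup Y$, and the intersection is $A\cup(Y\cap B)\supseteq A$. Monotonicity gives $\kappa(A\cup(Y\cap B))\ge\kappa(A)$, and the same rearrangement yields the inequality. I will present this one-line version.

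There is no real obstacle. The only point needing care is that the intersection of $A\cup Y$ and $B$ may be strictly larger than $A$, and monotonicity from Lemma~\ref{lem:tree-submodular} covers exactly that case.
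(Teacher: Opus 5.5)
Your proposal is correct and matches the paper's proof: both apply submodularity to $A\cup Y$ and $B$, observe that the union is $B\cup Y$ while the intersection contains $A$, and finish by monotonicity. Your preliminary case split via $Y'=Y\setminus B$ is also valid, though unnecessary.
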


\begin{proof}
  Apply submodularity to $A\cup Y$ and $B$.
  Their union is $B\cup Y$, while their intersection contains $A$.
  The claim follows by monotonicity.
\end{proof}

\begin{corollary}[Interval quadrangle inequality]
  \label{cor:interval-quadrangle}
  For $i\le k\le p\le j$,
  \begin{equation}
    \label{eq:interval-quadrangle}
    C(i,p-1)+C(k,j)
    \ge
    C(i,j)+C(k,p-1).
  \end{equation}
\end{corollary}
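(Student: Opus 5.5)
We prove the interval quadrangle inequality~\eqref{eq:interval-quadrangle} by a single application of submodularity (Lemma~\ref{lem:tree-submodular}) to two interval request sets:
\[
  A=R[i,p-1],\qquad B=R[k,j].
\]
By definition~\eqref{eq:interval-coverage}, $\kappa(A)=C(i,p-1)$ and $\kappa(B)=C(k,j)$. So it remains to identify $A\cup B$ and $A\cap B$ as the interval sets appearing on the right-hand side.

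\textbf{Union.} Since $i\le k\le p$, we have $k\le (p-1)+1$, so the index ranges $[i,p-1]$ and $[k,j]$ overlap or abut. Because $p\le j$, their union is exactly $[i,j]$. Hence
\[
  A\cup B=R[i,j],\qquad \kappa(A\cup B)=C(i,j).
\]

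\textbf{Intersection.} The epochs $R_h$ are pairwise disjoint, since request occurrences carry distinct identifiers and each arrives at a single epoch time. A request therefore lies in $A\cap B$ exactly when its epoch index lies in $[i,p-1]\cap[k,j]$. Using $i\le k$ and $p-1<j$, this index range is $[k,p-1]$, so
\[
  A\cap B=R[k,p-1],\qquad \kappa(A\cap B)=C(k,p-1).
\]

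\textbf{Degenerate case.} If $k=p$, the range $[k,p-1]$ is empty. Then $A\cap B=\varnothing$, and $\kappa(\varnothing)=0$ matches the convention $C(k,k-1)=0$, since $\kappa$ is normalized. The case $i=p$ is handled the same way: $A=\varnothing$, $C(i,i-1)=0$, and the inequality reduces to $C(k,j)\ge C(k,j)+0$ (with $k=p=i$ forced).

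Substituting these four identifications into~\eqref{eq:submodularity} gives~\eqref{eq:interval-quadrangle}. The only point needing care is the index bookkeeping above, in particular confirming that the union has no gap, which is where $k\le p$ is used. There is no substantive obstacle.
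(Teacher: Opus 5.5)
Your proposal is correct and matches the paper's proof: it applies submodularity (Lemma~\ref{lem:tree-submodular}) to $R[i,p-1]$ and $R[k,j]$, whose union is $R[i,j]$ and intersection is $R[k,p-1]$ because request occurrences are distinctly labelled. Your extra bookkeeping for the empty-range cases $k=p$ and $i=p$ is accurate but not needed beyond the paper's one-line argument.
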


\begin{proof}
  Apply Lemma~\ref{lem:tree-submodular} to $R[i,p-1]$ and $R[k,j]$.
  Since request occurrences are distinctly labelled, their union is
  $R[i,j]$ and their intersection is $R[k,p-1]$.
\end{proof}

The interval quadrangle inequality is the key spatial input to the Extension
and Cut inequalities, greedy superadditivity, and parameter coarsening below.

\paragraph{Competitive analysis.}
A rooted tree and all of its edge weights are fixed and known to the online
algorithm before the input begins.  At time $t$, an online decision may depend
only on this tree, the complete arrival and service history revealed by time
$t$, the passage of time up to $t$, and the algorithm's internal random bits.
The algorithm receives neither future arrivals nor an end-of-input or horizon
signal.  In the randomized setting, its random seed is sampled independently
of the input.

A deterministic online algorithm is $\alpha$-competitive if
\[
  \ALG(I)\le \alpha\OPT(I)
\]
for every finite input $I$.
A randomized algorithm is $\alpha$-competitive against an oblivious
adversary if, for every finite input fixed independently of the algorithm's
random outcome,
\[
  \E[\ALG(I)]\le \alpha\OPT(I).
\]
The expectation is only over the internal randomness of the algorithm.
We use a strict multiplicative definition, without an additive constant.

Requests whose location is connected to the root by a zero-cost path may
be served immediately upon arrival at zero cost and then omitted from the
subsequent analysis.  If this deletion empties an arrival epoch, we discard
that epoch and reindex the remaining nonempty epochs.  This preprocessing
preserves both feasibility and the optimum: by nonnegativity of the edge
weights, the union of all zero-cost root paths has total weight zero and
contains no positive-root-path-cost request location.  Serving this union at
the arrival epoch therefore clears only zero-cost requests at zero delay.
Conversely, removing such requests from any offline batch can only decrease
that batch's maximum waiting time and does not increase its spatial cost.
The upper bounds allow nonnegative edge weights; the matching fixed-node
lower bounds additionally assume that the tree is nondegenerate.

\section{Warm-Up: The Line Metric}
\label{sec:line-warmup}

We begin with the half-line rooted at the origin.  The line case already
contains the temporal difficulty of the problem, but removes the notation
needed to describe branching.  It also makes the two roles of the offline
dynamic program transparent: it computes the benchmark and, at the same
time, supplies the online timer.  The full tree analysis starts in
Section~\ref{sec:offline}.  For the analysis of any fixed finite input, the
half-line view is equivalent to the finite weighted rooted path obtained by
retaining the root and the finitely many request locations (and subdividing at
those locations), so it is consistent with the finite-tree model of Section
\ref{sec:preliminaries}.  This is an analytical identification, not advance
information supplied to the online algorithm: all deadlines below use only
coordinates and arrivals revealed so far.

\subsection{The line dynamic program}

Identify the line with $\mathbb R_{\ge 0}$ and place the root at $0$.  For a
finite request set $A$, denote its farthest location by
\[
  m(A)=\max_{r\in A}x(r),
  \qquad m(\varnothing)=0.
\]
A service that clears $A$ uses the interval $[0,m(A)]$, so its spatial cost
is $m(A)$.  For epochs $i,\ldots,j$, write
\[
  M(i,j)=m(R[i,j]).
\]

For a consecutive block $R[i,j]$, serving all its requests at time $a_j$
costs
\[
  M(i,j)+a_j-a_i.
\]
More generally, a batch $A$ with first and last arrival times $\alpha(A)$
and $\beta(A)$ has abstract batch cost
$m(A)+\beta(A)-\alpha(A)$.  If two batches have overlapping arrival
intervals, merging them cannot increase their total cost: the longer line
segment costs no more than the sum of the two old segments, and the length
of the union of two intersecting time intervals is no more than the sum of
their lengths.  Repeated merging therefore leaves consecutive arrival
blocks.  Hence the offline value on an interval satisfies
\begin{equation}
  \label{eq:line-dp}
  D_L(i,j)=
  \min_{i\le p\le j}
  \left\{
    D_L(i,p-1)+M(p,j)+a_j-a_p
  \right\},
  \qquad D_L(i,i-1)=0.
\end{equation}
This is the line specialization of the dynamic program proved in
Section~\ref{sec:offline}.

For a current busy block $B=[s,j]$, define
\[
  E_L(B)=D_L(B)-M(B).
\]
The quantity $E_L(B)$ is the part of the offline value not needed to reach
the farthest request.  Fix $\theta\in[0,1]$ and set the block deadline to
\begin{equation}
  \label{eq:line-deadline}
  \tau_\theta(B)
  =a_s+E_L(B)+\theta M(B)
  =a_s+D_L(B)-(1-\theta)M(B).
\end{equation}
An arrival no later than the current deadline joins the block and causes
the deadline to be recomputed.  Otherwise the algorithm serves the interval
$[0,M(B)]$ when the timer expires.  Since both $M(B)$ and $E_L(B)$ are
nondecreasing when a new epoch is appended, the timer never moves to the
left.  The completed block costs exactly
\begin{equation}
  \label{eq:line-online-block-cost}
  D_L(B)+\theta M(B).
\end{equation}

\subsection{A four-request example}

The following instance will be used throughout the warm-up:
\begin{equation}
  \label{eq:line-toy-instance}
  (a_i,x_i)\in
  \left\{
    (0,3),\ (1,1),\ (5/2,4),\ (4,2)
  \right\}.
\end{equation}
For $\theta=1$, all four requests form one busy block.  The successive
deadline values are $3,4,13/2,$ and $8$, and the final service is issued at
time $8$ with length $4$.  Figure~\ref{fig:line-space-time} shows the
resulting space--time picture.  The first request determines the batch
delay, while the third request determines the spatial cost.

\begin{figure}[t]
  \centering
  \begin{tikzpicture}[x=1.18cm,y=0.82cm,>=Latex,font=\small]
    \draw[->,thick] (-0.2,0) -- (8.8,0) node[right] {time};
    \draw[->,thick] (0,-0.15) -- (0,5.0) node[above] {location};
    \foreach \y in {1,2,3,4}
      \draw[gray!35] (0,\y) -- (8.45,\y);
    \foreach \x/\lab in {1/1,2.5/{5/2},4/4,8/8}
      \draw (\x,0.08) -- (\x,-0.08) node[below=2pt] {$\lab$};
    \foreach \y in {1,2,3,4}
      \draw (0.08,\y) -- (-0.08,\y) node[left=2pt] {$\y$};

    \draw[densely dashed,gray] (0,3) -- (8,3);
    \draw[densely dashed,gray] (1,1) -- (8,1);
    \draw[densely dashed,gray] (2.5,4) -- (8,4);
    \draw[densely dashed,gray] (4,2) -- (8,2);

    \fill[red!75!black] (0,3) circle (2.2pt)
      node[above right=-1pt] {$r_1$};
    \fill[red!75!black] (1,1) circle (2.2pt)
      node[above right=-1pt] {$r_2$};
    \fill[red!75!black] (2.5,4) circle (2.2pt)
      node[above right=-1pt] {$r_3$};
    \fill[red!75!black] (4,2) circle (2.2pt)
      node[above right=-1pt] {$r_4$};

    \draw[blue!65!black,line width=2.2pt] (8,0) -- (8,4);
    \node[blue!65!black,rotate=90,anchor=south] at (8.12,2)
      {service of length $4$};
    \draw[decorate,decoration={brace,mirror,amplitude=5pt},red!70!black]
      (0,-0.55) -- (8,-0.55)
      node[midway,below=7pt] {maximum waiting time $8$};

    \foreach \x/\h/\lab in {3/0.55/{3},4/0.7/{4},6.5/0.85/{13/2}}
      {
        \draw[gray!65,dashed] (\x,0) -- (\x,\h);
        \node[gray!65,above] at (\x,\h) {$\tau=\lab$};
      }
    \node[align=left,anchor=west] at (4.55,4.65)
      {each accepted arrival\\moves the timer to the right};
    \draw[->,gray!70] (5.5,4.45) to[bend left=18] (6.45,0.95);
  \end{tikzpicture}
  \caption{The DP-Envelope timer on the line for the instance in
  \eqref{eq:line-toy-instance} with $\theta=1$.  Requests are points in the
  time--location plane.  The terminal vertical segment is the rooted service,
  and the horizontal dashed segments represent the requests' waiting times.
  The drawing is not to scale near the deadline labels.}
  \label{fig:line-space-time}
\end{figure}

The same instance also illustrates the role of the parameter.  Directly
from~\eqref{eq:line-dp},
\[
\begin{array}{c|cccccc}
  B &[1]&[1,2]&[1,3]&[1,4]&[3]&[3,4]\\ \hline
  M(B)&3&3&4&4&4&4\\
  D_L(B)&3&4&13/2&8&4&11/2
\end{array}
\]
Substituting these values into~\eqref{eq:line-deadline} yields the nested
partitions in Figure~\ref{fig:line-coarsening}.  Increasing $\theta$ delays
service and can merge adjacent blocks, but never splits a block that has
already formed.

\begin{figure}[t]
  \centering
  \begin{tikzpicture}[x=1.02cm,y=0.9cm,>=Latex,font=\small]
    \def\bx{0.95}
    \newcommand{\blockbox}[4]{%
      \draw[rounded corners=1.5pt,fill=#4,draw=black!65]
        ({#1},{#3-0.28}) rectangle ({#2},{#3+0.28});}

    \node[anchor=east] at (-0.25,0) {$0\le\theta<1/3$};
    \blockbox{0}{1.55}{0}{blue!10}
    \blockbox{1.75}{3.30}{0}{blue!10}
    \blockbox{3.50}{5.05}{0}{blue!10}
    \blockbox{5.25}{6.80}{0}{blue!10}
    \foreach \x/\q in {0.775/1,2.525/2,4.275/3,6.025/4}
      \node at (\x,0) {$\q$};

    \node[anchor=east] at (-0.25,1.15) {$1/3\le\theta<3/8$};
    \blockbox{0}{3.30}{1.15}{green!12}
    \blockbox{3.50}{5.05}{1.15}{green!12}
    \blockbox{5.25}{6.80}{1.15}{green!12}
    \node at (1.65,1.15) {$1\;2$};
    \node at (4.275,1.15) {$3$};
    \node at (6.025,1.15) {$4$};

    \node[anchor=east] at (-0.25,2.30) {$3/8\le\theta<1/2$};
    \blockbox{0}{3.30}{2.30}{orange!15}
    \blockbox{3.50}{6.80}{2.30}{orange!15}
    \node at (1.65,2.30) {$1\;2$};
    \node at (5.15,2.30) {$3\;4$};

    \node[anchor=east] at (-0.25,3.45) {$1/2\le\theta\le1$};
    \blockbox{0}{6.80}{3.45}{red!10}
    \node at (3.40,3.45) {$1\;2\;3\;4$};

    \draw[->,thick] (7.25,-0.15) -- (7.25,3.75)
      node[above,align=center] {larger $\theta$\\coarser partition};
    \node[anchor=west,align=left] at (8.0,2.65)
      {$\Theta$ is sampled once\\and used for every\\busy period};
    \draw[->,red!70!black,dashed] (8.45,2.45) -- (7.35,1.75);
    \node[anchor=west] at (8.0,1.05) {$f(\theta)=\dfrac{e^\theta}{e-1}$};
    \draw[red!70!black,thick,domain=0:1,samples=30]
      plot ({8.15+1.35*\x},{0.15+0.55*(exp(\x)-1)/(exp(1)-1)});
    \draw[->] (8.05,0.12) -- (9.70,0.12) node[right] {$\theta$};
  \end{tikzpicture}
  \caption{The greedy partitions of the toy instance are nested as
  $\theta$ increases.  Global-Shift samples one value $\Theta$ and uses the
  corresponding partition rule throughout the input.  This global coupling
  is what permits integration over the one-parameter family.}
  \label{fig:line-coarsening}
\end{figure}

\subsection{Deterministic and randomized bounds}

We next summarize the analysis that will be proved for trees in
Sections~\ref{sec:deterministic} and~\ref{sec:randomized}.  Define
\begin{equation}
  \label{eq:line-h-theta}
  h^L_\theta(i,j)=D_L(i,j)-(1-\theta)M(i,j).
\end{equation}
The line coverage function $m(A)=\max_{r\in A}x(r)$ is submodular.  Hence its
interval costs satisfy the quadrangle inequality
\[
  M(i,p-1)+M(k,j)
  \ge M(i,j)+M(k,p-1)
  \qquad (i\le k\le p\le j).
\]
Together with~\eqref{eq:line-dp}, this inequality leads to two facts.  First,
$h^L_\theta$ is superadditive over the greedy blocks produced by the timer.
Second, the greedy partition coarsens as $\theta$ increases.  On the line,
these statements say precisely that a farther spatial prefix has diminishing
marginal cost once a longer prefix is already present.  Lemma
\ref{lem:extension-cut} and Lemma~\ref{lem:theta-coarsening} establish the
corresponding statements on an arbitrary rooted tree.

At the deterministic endpoint $\theta=1$, let $P_1$ denote the busy-block
partition.  Superadditivity and the feasibility of concatenating the block
optima imply
\begin{equation}
  \label{eq:line-d-decomposition}
  \sum_{B\in P_1}D_L(B)=D_L(1,n)=\OPT.
\end{equation}
Since $M(B)\le D_L(B)$, equations
\eqref{eq:line-online-block-cost} and
\eqref{eq:line-d-decomposition} yield
\[
  \ALG_1
  =\sum_{B\in P_1}\bigl(D_L(B)+M(B)\bigr)
  \le 2\OPT.
\]

For randomization, let $P_\theta$ denote the greedy partition at parameter
$\theta$ and set
\begin{equation}
  \label{eq:line-q-potential}
  Q_L(\theta)
  =\sum_{B\in P_\theta}h^L_\theta(B).
\end{equation}
The function is piecewise affine.  Its jumps are nonnegative because a
larger parameter only merges adjacent blocks.  Between two breakpoints,
the partition is fixed and
\begin{equation}
  \label{eq:line-potential-derivative}
  \frac{d}{d\theta}\bigl(e^\theta Q_L(\theta)\bigr)
  =e^\theta\ALG_\theta.
\end{equation}
The tree analysis reproduces this identity; its additional work is to prove
coarsening and to control the one-sided behavior at parameter breakpoints.
Integrating~\eqref{eq:line-potential-derivative}, while subtracting the
nonnegative jump contributions, leads to
\[
  \int_0^1 e^\theta\ALG_\theta\,d\theta
  \le e\OPT.
\]
Thus, if one value $\Theta$ is sampled with density
$e^\theta/(e-1)$ and retained for the full input, then
\[
  \E[\ALG_\Theta]
  \le \frac{e}{e-1}\OPT.
\]
The use of one global parameter is essential: independent parameters for
different busy periods would not correspond to a single nested family
$P_\theta$, and the integral in~\eqref{eq:line-potential-derivative} would
no longer describe the algorithm.

\begin{theorem}[Line warm-up]
  \label{thm:line-warmup}
  On the rooted half-line, $\DPEnv(1)$ is $2$-competitive, while
  DP-Envelope Global-Shift is $e/(e-1)$-competitive against an oblivious
  adversary.  Both ratios are optimal.
\end{theorem}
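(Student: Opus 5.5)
The plan is to obtain Theorem~\ref{thm:line-warmup} as the half-line specialization of the tree results, rather than reprove everything from scratch. For any fixed finite input, the half-line is identified with a finite weighted rooted path. On a path, the footprint of a request set $A$ is the segment $[0,m(A)]$, so $\kappa(A)=m(A)$ and $C(i,j)=M(i,j)$. With these identifications, \eqref{eq:line-dp} is exactly the tree dynamic program of Section~\ref{sec:offline} with $D=D_L$. The deadline \eqref{eq:line-deadline} is exactly the DP-Envelope deadline $a_s+D(B)-C(B)+\theta C(B)$. Since every deadline uses only revealed coordinates and arrivals, the online algorithm on the line is literally the tree algorithm run on the revealed path. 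The upper bounds therefore follow once the tree statements of Sections~\ref{sec:deterministic} and~\ref{sec:randomized} are available. I will nonetheless carry out the argument in the order sketched in the warm-up, so the line proof is self-contained modulo Lemma~\ref{lem:extension-cut} and Lemma~\ref{lem:theta-coarsening}.

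\textbf{Step 1 (offline normal form).} Use the merging argument. Two batches with overlapping arrival intervals are replaced by their union served at the later service time. Spatial cost is a max, so it does not increase. The delay term is bounded by the length of the union of the two time intervals, which is at most the sum of the two lengths. Repeating this merge yields consecutive blocks, and hence \eqref{eq:line-dp} computes $\OPT=D_L(1,n)$.

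\textbf{Step 2 (block cost).} A completed busy block $B=[s,j]$ is served at $\tau_\theta(B)$, and its oldest request arrived at $a_s$. The block therefore pays
\[
M(B)+\tau_\theta(B)-a_s=D_L(B)+\theta M(B),
\]
which is \eqref{eq:line-online-block-cost}. Monotonicity of the timer holds because $M$ and $E_L=D_L-M$ are nondecreasing under appending an epoch. For $E_L$, this follows from the quadrangle inequality applied inside \eqref{eq:line-dp}.

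\textbf{Step 3 (deterministic bound).} At $\theta=1$, the Extension/Cut inequalities give superadditivity of $h^L_1=D_L$ over the greedy blocks, so $\sum_{B\in P_1}D_L(B)\ge D_L(1,n)$. Concatenating block optima gives the reverse inequality, which yields \eqref{eq:line-d-decomposition}. Then $M(B)\le D_L(B)$ gives $\ALG_1\le2\OPT$.

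\textbf{Step 4 (randomized bound).} Here the main obstacle is controlling the breakpoints, so I will proceed carefully. On each open interval where $P_\theta$ is constant, $Q_L$ is affine with slope $\sum_{B\in P_\theta} M(B)$. Step~2 then gives $\ALG_\theta=Q_L+Q_L'$ and hence \eqref{eq:line-potential-derivative}.

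At a breakpoint, coarsening (Lemma~\ref{lem:theta-coarsening}) says blocks only merge. I will check that merging does not decrease $\sum h^L_\theta$ by applying superadditivity of $h^L_\theta$ at the breakpoint parameter. The arrival-first convention of Remark~\ref{rem:arrival-first} must be used here to fix which one-sided partition is in force at the breakpoint. Right-continuity versus left-continuity matters only on a finite set of $\theta$, which has measure zero.

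Integrating over $[0,1]$ and discarding the nonnegative jumps gives
\[
\int_0^1e^\theta\ALG_\theta\,d\theta\le eQ_L(1)-Q_L(0).
\]
Now $Q_L(1)=\OPT$ by \eqref{eq:line-d-decomposition}, and $Q_L(0)=\sum_{B\in P_0}(D_L(B)-M(B))\ge0$. Hence $\int_0^1e^\theta\ALG_\theta\,d\theta\le e\OPT$, and normalizing by $e-1$ gives $\E[\ALG_\Theta]\le\frac{e}{e-1}\OPT$.

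\textbf{Step 5 (optimality).} The half-line with a single positive location is the fixed-node (TCP) restriction. The deterministic lower bound $2$ is the one of \cite{dooly2001tcp}, and the randomized lower bound $e/(e-1)$ is the one proved in Appendix~\ref{app:known-lower-bounds}. Both are applied to requests at a single point $x>0$.
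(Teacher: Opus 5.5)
Your route matches the paper's: identify the half-line with a finite weighted rooted path, so that $C=M$ and $D=D_L$; let the tree machinery (Extension/Cut, coarsening, the potential identity) carry both upper bounds; and embed the fixed-node lower bounds at a single positive location. The approach is sound, but three local statements need repair, and the second is a real, if small, gap.

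First, in Step~3 the two inequalities are attributed backwards. Greedy superadditivity of $h^L_1=D_L$ gives $\sum_{B\in P_1}D_L(B)\le D_L(1,n)$, whereas concatenating block optima gives $\ge$. The direction both upper bounds actually use, $\sum_{B\in P_1}D_L(B)\le\OPT$, is the nontrivial one, and it comes from the Cut inequality, not from concatenation.

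Second, the endpoint $\theta=1$ in Step~4 is not handled. Your measure-zero remark only justifies ignoring the values of $\ALG_\theta$ at breakpoints. The telescoped integral produces the one-sided boundary terms $eQ_L(1^-)-Q_L(0^+)$, not $eQ_L(1)-Q_L(0)$.
\begin{itemize}
\item At $0$ this is harmless, since $h^L_0(B)=D_L(B)-M(B)\ge0$ for every interval.
\item At $1$ it is not automatic. $P_1$ can strictly coarsen the left-limit partition $P_{1^-}$: for requests $(0,1),(1,1)$, epoch $2$ is accepted exactly when $\theta\ge1$. Your identity $Q_L(1)=\OPT$ concerns $P_1$ only.
\end{itemize}
You need $Q_L(1^-)\le\OPT$. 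To get it, observe that inside each block of $P_1$ the blocks of $P_{1^-}$ form a weakly $1$-greedy partition, because limits of strict rejections become weak. Then apply the weak-premise Cut inequality. This amounts to treating $\theta=1$ as one more breakpoint in your jump argument, and it is exactly the paper's endpoint lemma.

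Third, in Step~1 the merging should be done on abstract batches with ideal cost $m(A)+\beta(A)-\alpha(A)$, after which each consecutive block is realized at its last arrival. Literally postponing an actual service to the later service time can let an intermediate service clear the postponed requests and raise that service's delay.
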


The preceding calculation contains the upper-bound argument.  For
tightness, place every request at one fixed positive location.  The model
then becomes a fixed-setup batching problem.  Corollaries
\ref{cor:deterministic-tight} and
\ref{cor:randomized-tight} state the resulting optimality conclusions; for
completeness, Appendix~\ref{app:known-lower-bounds} records the known
deterministic lower bound and our randomized lower bound.

\section{Exact Offline Optimization}
\label{sec:offline}

We next establish the offline structure used by both online algorithms.  A
service may combine requests from different branches, but its temporal cost
depends only on the first and last arrival in the batch.  Subadditivity of
the rooted footprint allows such batches to be uncrossed in time.

The empty input has offline cost zero and needs no service.  Throughout this
section, we therefore assume that the input is nonempty.

\subsection{Temporal uncrossing}
\label{subsec:offline-uncrossing}

For a nonempty request set $B$, denote
\[
  \alpha(B)=\min_{r\in B}a(r),
  \qquad
  \beta(B)=\max_{r\in B}a(r),
\]
and call $H(B)=[\alpha(B),\beta(B)]$ its \emph{temporal hull}.
Define the ideal batch cost
\begin{equation}
  \label{eq:ideal-batch-cost}
  \phi(B)=\kappa(B)+\beta(B)-\alpha(B).
\end{equation}
For a partition $\Pi$ of all request occurrences into nonempty sets, let
\[
  \Phi(\Pi)=\sum_{B\in\Pi}\phi(B),
  \qquad
  \Phi^\star(I)=\min_{\Pi}\Phi(\Pi).
\]
The minimum is attained because the input, and hence the collection of its
set partitions, is finite.

\begin{lemma}[Partition relaxation]
  \label{lem:partition-relaxation}
  Fix a feasible schedule $\mathcal S$, and assign every request to the
  unique service that clears it.  If $\mathcal B(\mathcal S)$ is the resulting
  partition into nonempty service batches, then
  \[
    \cost(\mathcal S)
    \ge
    \sum_{B\in\mathcal B(\mathcal S)}\phi(B)
    \ge
    \Phi^\star(I).
  \]
\end{lemma}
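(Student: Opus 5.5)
The plan is to compare each service in $\mathcal S$ term by term with the ideal cost of its batch, and then to invoke the definition of $\Phi^\star$.

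First, the assignment is well defined. Since $\mathcal S$ is feasible, every request is eventually served. A cleared request is no longer pending, so it is cleared by exactly one service. Services that clear nothing have been excluded, so every batch is nonempty. Hence $\mathcal B(\mathcal S)$ is a partition of all request occurrences into nonempty sets.

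Second, fix a service issued at time $t$ with rooted subtree $S$ and batch $B$. We must check $\cost(B,t,S)\ge\phi(B)$, and the two terms are handled separately.

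\emph{Spatial term.} Every $r\in B$ is cleared, so $v(r)\in S$. Because $S$ is a connected subtree containing $\rho$, it contains the whole root path $P(\rho,v(r))$. Thus $U(B)\subseteq S$, and since edge weights are nonnegative,
\[
  w(S)\ge w(U(B))=\kappa(B).
\]

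\emph{Temporal term.} A request can only be cleared while pending, so $t\ge a(r)$ for every $r\in B$; in particular $t\ge\beta(B)$. Therefore
\[
  \max_{r\in B}\bigl(t-a(r)\bigr)=t-\alpha(B)\ge\beta(B)-\alpha(B).
\]

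Adding the two bounds gives $\cost(B,t,S)\ge\phi(B)$.

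Finally, summing over all services of $\mathcal S$ gives $\cost(\mathcal S)\ge\sum_{B\in\mathcal B(\mathcal S)}\phi(B)$. The second inequality holds because $\mathcal B(\mathcal S)$ is one admissible partition in the minimization that defines $\Phi^\star(I)$.

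There is no real obstacle. The only point needing care is the containment $U(B)\subseteq S$, which relies on connectivity of a rooted subtree. The arrival-first convention does not affect the argument, because a request cleared at time $t$ still satisfies $a(r)\le t$.
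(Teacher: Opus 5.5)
Your proposal is correct and follows essentially the same argument as the paper: bound each service's cost below by $\phi(B)$, using $U(B)\subseteq S$ for the spatial term and $t\ge\beta(B)$ for the delay term, then sum over services. You also make explicit two points the paper leaves implicit: that $\mathcal B(\mathcal S)$ is a well-defined partition into nonempty batches, and that the second inequality is just the definition of $\Phi^\star(I)$ as a minimum.
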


\begin{proof}
  Consider a service at time $t$ that selects a rooted subtree $S$ and clears
  the batch $B$.  Since $S$ contains every root-to-request path in $U(B)$,
  $w(S)\ge\kappa(B)$.  Feasibility gives $t\ge\beta(B)$, while the oldest
  request in the batch has waited $t-\alpha(B)$.  Hence this service costs at
  least
  \[
    \kappa(B)+t-\alpha(B)
    \ge
    \kappa(B)+\beta(B)-\alpha(B)
    =\phi(B).
  \]
  Summing over the nonempty services proves the claim.
\end{proof}

\begin{lemma}[Merging overlapping batches]
  \label{lem:merge-overlap}
  If two nonempty request sets $B$ and $B'$ have intersecting temporal hulls,
  then
  \[
    \phi(B\cup B')\le \phi(B)+\phi(B').
  \]
\end{lemma}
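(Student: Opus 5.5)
The inequality splits into a spatial part and a temporal part, and I would bound each separately. Writing out the definition \eqref{eq:ideal-batch-cost}, the claim reads
\[
  \kappa(B\cup B')+\beta(B\cup B')-\alpha(B\cup B')
  \le
  \kappa(B)+\kappa(B')+\bigl(\beta(B)-\alpha(B)\bigr)+\bigl(\beta(B')-\alpha(B')\bigr).
\]

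\emph{Spatial part.} Subadditivity of $\kappa$ from Lemma~\ref{lem:tree-submodular} gives $\kappa(B\cup B')\le\kappa(B)+\kappa(B')$ directly. If $B$ and $B'$ share request occurrences, the intersection term in \eqref{eq:submodularity} is nonnegative, so the same bound still holds. No disjointness assumption is needed.

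\emph{Temporal part.} For the union we have $\alpha(B\cup B')=\min\{\alpha(B),\alpha(B')\}$ and $\beta(B\cup B')=\max\{\beta(B),\beta(B')\}$, so $H(B\cup B')$ is the smallest interval containing $H(B)\cup H(B')$. Because the two hulls intersect, their union is itself an interval, and therefore $H(B\cup B')=H(B)\cup H(B')$. The length of a union of two intervals is at most the sum of their lengths; here the excess is exactly the length of the overlap, which is nonnegative. Hence
\[
  \beta(B\cup B')-\alpha(B\cup B')\le\bigl(\beta(B)-\alpha(B)\bigr)+\bigl(\beta(B')-\alpha(B')\bigr).
\]
To make this concrete, assume without loss of generality that $\alpha(B)\le\alpha(B')$. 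The intersection hypothesis is then $\alpha(B')\le\beta(B)$. There are two cases:
\begin{itemize}
  \item If $\beta(B')\ge\beta(B)$, the union hull has length $\beta(B')-\alpha(B)$. This exceeds the right-hand side by $\beta(B)-\alpha(B')\ge0$ in the favorable direction, i.e.\ the right-hand side is larger by that amount.
  \item Otherwise $H(B')\subseteq H(B)$, and the union hull is just $H(B)$, whose length is trivially at most the sum.
\end{itemize}

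Adding the spatial and temporal bounds proves the lemma. There is no real obstacle; the only care needed is to use the intersection hypothesis exactly once, in the temporal case split. It is also worth noting that degenerate hulls (single points) are covered by the same case analysis.
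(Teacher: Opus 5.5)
Your proof is correct and follows the paper's argument: subadditivity of $\kappa$ handles the spatial term, and since two intersecting temporal hulls have an interval union, the span of $H(B\cup B')$ is at most the sum of the two spans. One wording slip: in your first case it is the right-hand side that exceeds the union length by $\beta(B)-\alpha(B')$, not the other way round, as your own clarifying clause already indicates.
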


\begin{proof}
  By subadditivity,
  \[
    \kappa(B\cup B')\le\kappa(B)+\kappa(B').
  \]
  Since $H(B)$ and $H(B')$ intersect, their union is an interval and
  \[
    \beta(B\cup B')-\alpha(B\cup B')
    \le
    \bigl(\beta(B)-\alpha(B)\bigr)
    +
    \bigl(\beta(B')-\alpha(B')\bigr).
  \]
  Adding the two inequalities proves the claim.
\end{proof}

\begin{proposition}[Consecutive-block normal form]
  \label{prop:consecutive-normal-form}
  There is an optimal offline schedule whose nonempty service batches are
  consecutive blocks of arrival epochs.  Moreover, every partition of the
  epochs into consecutive blocks can be realized at its ideal batch cost:
  a block $[p,q]$ is served at time $a_q$ by the rooted footprint
  $U(R[p,q])$, at cost
  \begin{equation}
    \label{eq:block-cost}
    C(p,q)+a_q-a_p.
  \end{equation}
\end{proposition}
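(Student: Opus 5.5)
The plan is to go through the partition relaxation $\Phi^\star(I)$ and show that it has an optimal partition made of consecutive epoch blocks. Then we check that every such partition is realized by a schedule whose cost equals its $\Phi$ value.

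\textbf{Realizability.} Start with the easier half. Take any partition of $1,\dots,n$ into consecutive blocks $[p_1,q_1],\dots,[p_m,q_m]$ with $q_k<p_{k+1}$. Serve block $[p,q]$ at time $a_q$ using the subtree $U(R[p,q])$.

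We must check that this service clears exactly $R[p,q]$.
\begin{itemize}
\item By the arrival-first convention (Remark~\ref{rem:arrival-first}), all of $R_q$ is present at time $a_q$.
\item Every request of an earlier block has already been cleared by that block's earlier service.
\item No request of a later epoch has arrived yet, since $a_{q+1}>a_q$.
\end{itemize}
So the pending requests at time $a_q$ are exactly $R[p,q]$, and all of them lie in $U(R[p,q])$. If $m$ blocks share a time (impossible, since the $a_q$ are distinct), nothing would change anyway. The service then costs $\kappa(R[p,q])+a_q-a_p=C(p,q)+a_q-a_p=\phi(R[p,q])$, because $\alpha(R[p,q])=a_p$ and $\beta(R[p,q])=a_q$. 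This proves \eqref{eq:block-cost}.

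\textbf{Uncrossing.} Among the optimal partitions for $\Phi^\star(I)$, choose $\Pi$ with the fewest parts. Suppose two parts $B\ne B'$ have intersecting temporal hulls. By Lemma~\ref{lem:merge-overlap}, replacing them by $B\cup B'$ does not increase $\Phi$ and reduces the number of parts, which contradicts the choice of $\Pi$. So the hulls of the parts of $\Pi$ are pairwise disjoint closed intervals.

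Now argue that each part is a union of whole consecutive epochs.
\begin{itemize}
\item \emph{Whole epochs.} If an epoch $R_h$ met two parts, both hulls would contain $a_h$. So every epoch lies inside a single part.
\item \emph{Consecutive epochs.} Suppose a part $B$ contained epochs $h<h'$ but not some $h''$ with $h<h''<h'$. Then the part containing $R_{h''}$ has a hull containing $a_{h''}\in H(B)$, so the two hulls intersect, a contradiction.
\end{itemize}
Hence $\Pi$ is a consecutive-block partition.

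\textbf{Conclusion.} Let $\mathcal S$ be the schedule realizing $\Pi$ from the first step. Then
\[
\cost(\mathcal S)=\Phi(\Pi)=\Phi^\star(I)\le \OPT(I),
\]
where the last inequality is Lemma~\ref{lem:partition-relaxation} applied to an optimal schedule. Since $\mathcal S$ is feasible, $\cost(\mathcal S)\ge\OPT(I)$, so $\mathcal S$ is optimal and its batches are consecutive blocks.

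The main care points are the choice of the fewest-parts optimal partition, which makes the merging argument terminate cleanly, and the treatment of equal-time requests. These are handled by the epoch grouping and the arrival-first convention.
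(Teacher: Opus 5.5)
Your proof is correct and follows the paper's argument. You relax to $\Phi^\star$, uncross overlapping temporal hulls with Lemma~\ref{lem:merge-overlap}, show that disjoint hulls force whole consecutive epochs, and realize each block $[p,q]$ at time $a_q$ by $U(R[p,q])$. Your fewest-parts extremal choice is a tidier packaging of the paper's repeated merging. One small fix for the conclusion: apply Lemma~\ref{lem:partition-relaxation} to every feasible schedule rather than to ``an optimal schedule''. This gives $\Phi^\star(I)\le\OPT(I)$ without assuming an optimum exists, and your realized schedule then shows that the minimum is attained.
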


The proof has four steps: relax a schedule to the ideal costs of its service
batches; merge batches whose temporal hulls intersect; observe that disjoint
hulls force blocks of complete consecutive epochs; and realize each resulting
arrival block at its last arrival time.

\begin{proof}
  Lemma~\ref{lem:partition-relaxation} implies
  $\OPT(I)\ge\Phi^\star(I)$.  Choose a partition attaining
  $\Phi^\star(I)$.  Whenever two of its batches have
  intersecting temporal hulls, merge them.  By
  Lemma~\ref{lem:merge-overlap}, the partition cost does not increase, and
  every merge strictly decreases the number of batches.  The process
  therefore terminates with a minimum-cost abstract partition whose temporal
  hulls are pairwise disjoint.

  No arrival epoch can be split between two remaining batches, because both
  temporal hulls would contain that epoch's arrival time.  Furthermore,
  suppose one remaining batch contains requests from epochs $p$ and $q$, with
  $p<h<q$.  The entire epoch $R_h$ belongs to another batch, whose temporal
  hull contains $a_h$.  But $a_h$ also lies in the first batch's hull, a
  contradiction.  Hence every remaining batch is a consecutive block of
  complete arrival epochs.

  We now realize this partition.  Process its blocks from left to
  right.  For a block $[p,q]$, issue the service $U(R[p,q])$ at time $a_q$.
  Earlier blocks have already been cleared, every request of the current block
  has arrived, and every later epoch arrives strictly after $a_q$.  Thus any
  additional tree locations contained in the footprint hold no other pending
  requests: the service clears exactly the current block.  Its spatial cost is
  $C(p,q)$, and its oldest request arrived at $a_p$, giving the cost in
  \eqref{eq:block-cost}.

  The realized schedule has cost $\Phi^\star(I)$, and therefore
  $\OPT(I)\le\Phi^\star(I)$.  Together with the relaxation lower bound, this
  proves equality and the claimed normal form.
\end{proof}

\subsection{Dynamic program}
\label{subsec:offline-dp}

For $i\le j$, let $D(i,j)$ denote the offline optimum for the subinstance
formed by epochs $i,i+1,\ldots,j$, and set $D(i,i-1)=0$.

\begin{theorem}[Exact consecutive-block dynamic program]
  \label{thm:offline-dp}
  For every $i\le j$,
  \begin{equation}
    \label{eq:offline-dp}
    D(i,j)
    =
    \min_{i\le p\le j}
    \left\{
      D(i,p-1)+C(p,j)+a_j-a_p
    \right\}.
  \end{equation}
  In particular, $D(1,n)=\OPT(I)$.
\end{theorem}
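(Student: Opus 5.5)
The plan is to apply Proposition~\ref{prop:consecutive-normal-form} to the subinstance formed by epochs $i,\ldots,j$, and then decompose an optimal consecutive-block schedule according to its last block. First observe that the subinstance is itself a valid input. Its epochs are $a_i<\cdots<a_j$ with request sets $R_i,\ldots,R_j$, and its interval coverage costs are exactly the $C(p,q)$ for $i\le p\le q\le j$, because $C(p,q)=\kappa(R[p,q])$ depends only on the requests in the block. Hence the proposition applies verbatim, giving
\[
  D(i,j)=\min_{\Pi}\sum_{[p,q]\in\Pi}\bigl(C(p,q)+a_q-a_p\bigr),
\]
where $\Pi$ ranges over partitions of $\{i,\ldots,j\}$ into consecutive blocks. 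The proposition supplies both directions of this identity: an optimum attains this form, and every such partition is realizable at its ideal cost.

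For the inequality ``$\ge$'', I will take an optimal consecutive partition for $[i,j]$ and let $[p,j]$ be the block containing epoch $j$. The remaining blocks then partition $\{i,\ldots,p-1\}$ into consecutive blocks, possibly as the empty partition when $p=i$. By the displayed identity applied to the subinstance $[i,p-1]$, or by the convention $D(i,i-1)=0$, their total cost is at least $D(i,p-1)$. Therefore $D(i,j)\ge D(i,p-1)+C(p,j)+a_j-a_p$ for this particular $p$, and so $D(i,j)$ is at least the minimum over $p$.

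For the inequality ``$\le$'', I fix any $p\in\{i,\ldots,j\}$ and take an optimal consecutive partition of $[i,p-1]$, which exists by the proposition or is empty when $p=i$. Appending the block $[p,j]$ produces a consecutive partition of $[i,j]$. The realizability clause of Proposition~\ref{prop:consecutive-normal-form} then gives a feasible schedule of cost $D(i,p-1)+C(p,j)+a_j-a_p$, so $D(i,j)$ is at most each candidate value. Finally, $D(1,n)=\OPT(I)$ holds by definition, since the subinstance $[1,n]$ is the whole input.

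The only delicate point is the first step: checking that restricting to epochs $i..j$ preserves the hypotheses of the normal form. The relevant facts are that epoch times remain distinct and that $\kappa$ is defined on request sets, so its submodularity is inherited by the restriction. I also need that $D(i,p-1)$ as an offline optimum coincides with the minimum over consecutive partitions, which is again the proposition. Beyond that, the argument is a routine last-block decomposition.
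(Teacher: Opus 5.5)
Your proposal is correct and takes essentially the same approach as the paper: apply Proposition~\ref{prop:consecutive-normal-form} to the subinstance, lower-bound by splitting off the block containing epoch $j$, and realize each candidate by concatenating an optimal schedule for $[i,p-1]$ with the block $[p,j]$. Your added checks are fine refinements of details the paper leaves implicit, namely that the restriction to epochs $i,\ldots,j$ is a valid input and that the prefix blocks contribute ``at least'' rather than exactly $D(i,p-1)$.
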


\begin{proof}
  By Proposition~\ref{prop:consecutive-normal-form}, an optimal schedule for
  epochs $i,\ldots,j$ may be represented by a consecutive partition.  If its
  last block begins at epoch $p$, the preceding epochs contribute
  $D(i,p-1)$ and the last block contributes
  $C(p,j)+a_j-a_p$.  Minimizing over $p$ yields the lower bound in
  \eqref{eq:offline-dp}.  Conversely, concatenate an optimal schedule for
  $[i,p-1]$ with the block service from
  Proposition~\ref{prop:consecutive-normal-form}.  This realizes every
  candidate in the recurrence, proving equality.
\end{proof}

The same interval values $D(i,j)$ will later define the online deadlines.

We will repeatedly use the following bounds in the online analysis.

\begin{lemma}[Basic offline bounds]
  \label{lem:offline-basic-bounds}
  For every nonempty interval $[i,j]$,
  \begin{equation}
    \label{eq:offline-basic-bounds}
    C(i,j)
    \le
    D(i,j)
    \le
    C(i,j)+a_j-a_i.
  \end{equation}
\end{lemma}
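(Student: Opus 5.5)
The upper bound comes from a single candidate in the recurrence, and the lower bound from the consecutive-block form of an optimal schedule on $[i,j]$.

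For the upper bound $D(i,j)\le C(i,j)+a_j-a_i$, I will use the recurrence of Theorem~\ref{thm:offline-dp} with the choice $p=i$. Since $D(i,i-1)=0$, this candidate contributes exactly $C(i,j)+a_j-a_i$, and the minimum is at most any candidate. Operationally, this is one joint service of the whole interval at time $a_j$, which Proposition~\ref{prop:consecutive-normal-form} realizes at cost \eqref{eq:block-cost}.

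For the lower bound $C(i,j)\le D(i,j)$, I will use Proposition~\ref{prop:consecutive-normal-form} on the subinstance formed by epochs $i,\dots,j$. It gives an optimal schedule whose batches form a consecutive partition $[p_1,q_1],\dots,[p_m,q_m]$ of $[i,j]$, so
\[
  D(i,j)=\sum_{\ell=1}^m\bigl(C(p_\ell,q_\ell)+a_{q_\ell}-a_{p_\ell}\bigr)
  \ge\sum_{\ell=1}^m C(p_\ell,q_\ell).
\]
The inequality holds because $a_{q_\ell}\ge a_{p_\ell}$. Subadditivity of $\kappa$ (Lemma~\ref{lem:tree-submodular}), applied repeatedly to the disjoint union $R[i,j]=\bigcup_\ell R[p_\ell,q_\ell]$, gives $\sum_\ell C(p_\ell,q_\ell)\ge C(i,j)$.

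If one prefers to avoid the normal form, an alternative is induction on $j-i$ through the recurrence. For each $p$, the induction hypothesis (or $D(i,i-1)=0$ when $p=i$) gives $D(i,p-1)\ge C(i,p-1)$. Then
\[
  D(i,p-1)+C(p,j)+a_j-a_p\ \ge\ C(i,p-1)+C(p,j)\ \ge\ C(i,j),
\]
again by subadditivity, and taking the minimum over $p$ finishes the induction.

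The only care needed is the base case and the convention $C(i,i-1)=0$. I expect no genuine obstacle here.
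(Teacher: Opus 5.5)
Your proposal is correct and matches the paper's proof: the lower bound drops the nonnegative temporal spans of a consecutive partition and applies repeated subadditivity of $\kappa$, and the upper bound is the single joint block served at time $a_j$, i.e.\ the $p=i$ candidate of the recurrence. Your inductive alternative through the recurrence is equally valid; it only repackages the same subadditivity step.
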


\begin{proof}
  Consider any consecutive partition of $R[i,j]$ into batches
  $B_1,\ldots,B_s$.  Repeated subadditivity gives
  \[
    C(i,j)
    =
    \kappa\!\left(\bigcup_{h=1}^{s}B_h\right)
    \le
    \sum_{h=1}^{s}\kappa(B_h).
  \]
  Every temporal-span term is nonnegative, so the cost of the partition is at
  least $C(i,j)$.  Minimize over all consecutive partitions and invoke
  Theorem~\ref{thm:offline-dp}.  For the reverse inequality, serve all
  epochs $i,\ldots,j$ as one block at time $a_j$ using their rooted
  footprint.  Proposition~\ref{prop:consecutive-normal-form} yields cost
  $C(i,j)+a_j-a_i$.
\end{proof}

\begin{theorem}[Polynomial-time offline optimum]
  \label{thm:offline-polytime}
  Suppose the input contains $n$ arrival epochs and $N$ request occurrences.
  After sorting and grouping the arrivals, the optimum cost and an optimal
  consecutive block partition can be computed with $O(n^2)$ interval-cost
  evaluations and $O(n)$ dynamic-programming storage.  Each service subtree
  is represented implicitly by the rooted footprint of its recovered block.
  For an explicitly represented rooted tree, a direct implementation runs in
  \[
    O\bigl(n^2+nN+n|E|\bigr)
  \]
  time and uses $O(|E|+N+n)$ working storage, including the input and
  backpointers, in the standard unit-cost arithmetic model.
\end{theorem}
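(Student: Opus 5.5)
The proof evaluates the recurrence of Theorem~\ref{thm:offline-dp} with the left endpoint fixed at $i=1$, because $D(1,n)=\OPT(I)$ only needs prefix values. We therefore compute $D(1,j)$ for $j=1,\dots,n$ in increasing order. Each $D(1,j)$ takes a minimum over $p\in\{1,\dots,j\}$ of $D(1,p-1)+C(p,j)+a_j-a_p$. This gives $\sum_j j=O(n^2)$ interval-cost evaluations. We store the minimizing $p$ as a backpointer for each $j$. Following backpointers from $n$ recovers an optimal consecutive partition. By Proposition~\ref{prop:consecutive-normal-form}, each block $[p,q]$ is realized by $U(R[p,q])$ at time $a_q$, so the subtree is represented implicitly by the block. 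The storage is the $n$ prefix values plus $n$ backpointers, which is $O(n)$. Correctness follows immediately from Theorem~\ref{thm:offline-dp}.

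For the explicit tree bound, the key step is to evaluate all $C(p,j)$ with fixed right endpoint $j$ in $O(N+|E|)$ total time, scanning $p=j,j-1,\dots,1$. We use the edge-indicator representation from the proof of Lemma~\ref{lem:tree-submodular}, $\kappa(A)=\sum_e w_e\chi_e(A)$. We maintain a boolean mark on each vertex meaning "its root path is already in the current footprint," together with the running weight $C(p,j)$. When epoch $R_p$ is added, we walk from each request's location toward $\rho$. Each newly marked vertex adds the weight of its parent edge, and the walk stops at the first already-marked vertex or at $\rho$. Each vertex is marked at most once per scan, so the walking cost over one scan is $O(|E|)$ plus $O(1)$ per request, giving $O(N+|E|)$. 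After the scan we reset the marks in $O(|E|)$ time, or avoid resetting by using a timestamp equal to $j$. Over all $j$ this costs $O(nN+n|E|)$. The minimization itself adds $O(n^2)$, for a total of $O(n^2+nN+n|E|)$.

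The obstacle is arranging the loops so that each scan extends the footprint incrementally, because computing each $C(p,j)$ from scratch would cost an extra factor of $n$. The order is outer loop $j$ increasing and inner loop $p$ decreasing, which is compatible with the DP: $D(1,p-1)$ for $p\le j$ is already known when the scan for $j$ runs. We accumulate the minimum during the scan rather than storing the row of $C$ values, which keeps working storage at $O(|E|+N+n)$. That budget covers the parent pointers, edge weights and marks ($O(|E|)$), the grouped requests ($O(N)$), and the DP values and backpointers ($O(n)$). Sorting and grouping the arrivals is excluded from the bound by the statement. Finally, applying the zero-cost preprocessing of Section~\ref{sec:preliminaries} can only shrink $n$ and $N$, so it does not change the bound.
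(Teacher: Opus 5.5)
Your proposal is correct and follows essentially the same route as the paper: the prefix dynamic program $F(j)=D(1,j)$ with stored minimizing predecessors, together with, for each fixed $j$, a backward scan over $p$ that incrementally extends the marked rooted union so that each edge is charged at most once, giving $O(|E|+N)$ time per endpoint. Your vertex marks (equivalent to the paper's edge marks), the timestamp reset, and folding the minimization into the scan are implementation refinements of the same argument.
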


\begin{proof}
  Let $F(0)=0$ and $F(j)=D(1,j)$.  Equation~\eqref{eq:offline-dp} becomes
  \[
    F(j)=
    \min_{1\le p\le j}
    \left\{
      F(p-1)+C(p,j)+a_j-a_p
    \right\}.
  \]
  Once the values $C(p,j)$ are available, state $j$ takes $O(j)$ time.
  Over all endpoints this is $O(n^2)$ interval-cost evaluations.  Storing
  $F(0),\ldots,F(n)$ and one minimizing predecessor for each state takes
  $O(n)$ space, and following the predecessors recovers the optimal blocks.

  We next describe a direct computation of the interval costs without storing
  an $n$-by-$n$ table.  Fix an endpoint $j$ and scan
  $p=j,j-1,\ldots,1$.  Maintain the union of the root paths of the requests in
  $R[p,j]$, together with its total weight.  When an epoch is prepended, walk
  from each of its request locations toward the root until reaching the
  already marked rooted union, marking and charging only previously unmarked
  edges.  Since the marked edge set is always rooted and connected, every
  edge is marked at most once during the entire backward scan for this fixed
  endpoint.  Apart from these edge insertions, each request incurs only a
  constant final membership check.  The scan therefore takes
  $O(|E|+N)$ time, including initialization of the edge marks, and yields every
  $C(p,j)$ needed for state $j$.

  Repeating the scan for all $n$ endpoints leads to
  $O(n^2+nN+n|E|)$ time.  Since every epoch is nonempty and hence $N\ge n$,
  this may equivalently be written as $O(n(N+|E|))$.
  Edge marks, the input lists, the DP array, and the
  backpointers use $O(|E|+N+n)$ storage.  Sorting the ungrouped requests, if
  necessary, adds $O(N\log N)$ preprocessing time.
\end{proof}

Thus one offline optimum uses $O(n^2)$ interval-cost evaluations.

\begin{remark}[What is and is not computed]
  \label{rem:offline-complexity}
  Later sections use $D(i,j)$ as notation for arbitrary
  subinstances.  Computing the offline optimum for one input only requires the
  prefix states $D(1,j)$ above; it does not require filling all triples
  $(i,p,j)$.  Similarly, during one online busy period the start index $i$ is
  fixed, so the algorithm maintains a one-dimensional suffix of the same
  recurrence.
\end{remark}

\section{The DP-Envelope Family and the Deterministic Bound}
\label{sec:deterministic}

We now turn the offline dynamic program into an online timer.  The resulting
algorithms form a one-parameter family.  Its endpoint is deterministic and
$2$-competitive, while the complete family will be used for randomization in
Section~\ref{sec:randomized}.

\subsection{The envelope timer}
\label{subsec:envelope-timer}

Suppose that the algorithm is in a busy period whose pending arrival epochs
are the consecutive block $B=[s,j]$.  Write
\begin{equation}
  \label{eq:envelope-slack}
  C(B)=C(s,j),
  \qquad
  D(B)=D(s,j),
  \qquad
  \Env(B)=D(B)-C(B).
\end{equation}
By Lemma~\ref{lem:offline-basic-bounds}, $\Env(B)$ is nonnegative.
Submodularity is used here to make the envelope slack monotone and hence the
timer causal.

\begin{lemma}[Monotonicity of the envelope slack]
  \label{lem:slack-monotone}
  If one complete arrival epoch is appended to a nonempty interval, then
  neither its coverage cost nor its envelope slack decreases.  That is,
  for $s\le j<n$,
  \[
    C(s,j+1)\ge C(s,j)
    \quad\text{and}\quad
    \Env(s,j+1)\ge \Env(s,j).
  \]
\end{lemma}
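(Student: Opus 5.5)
Coverage monotonicity is immediate: $R[s,j]\subseteq R[s,j+1]$, so Lemma~\ref{lem:tree-submodular} (monotonicity of $\kappa$) gives $C(s,j+1)\ge C(s,j)$. The work lies in the slack inequality, which we rewrite as
\[
  D(s,j+1)-D(s,j)\ \ge\ C(s,j+1)-C(s,j).
\]
We prove it by taking an optimal last block for $D(s,j+1)$ and building a comparison solution for $D(s,j)$.

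By Theorem~\ref{thm:offline-dp}, choose $p\in[s,j+1]$ attaining the minimum in \eqref{eq:offline-dp} for $D(s,j+1)$, so that
\[
  D(s,j+1)=D(s,p-1)+C(p,j+1)+a_{j+1}-a_p .
\]
We split into two cases according to $p$.

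\emph{Case $p=j+1$.} Here the last block is the single epoch $j+1$. Then $D(s,j+1)=D(s,j)+C(j+1,j+1)$. The required inequality becomes $C(j+1,j+1)\ge C(s,j+1)-C(s,j)$. This is subadditivity of $\kappa$ applied to $R[s,j]$ and $R_{j+1}$, equivalently Corollary~\ref{cor:diminishing-returns} with $A=\varnothing$.

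\emph{Case $p\le j$.} The candidate $p$ is admissible in the recurrence for $D(s,j)$, which gives
\[
  D(s,j)\le D(s,p-1)+C(p,j)+a_j-a_p .
\]
Subtracting this from the expression for $D(s,j+1)$ yields
\[
  D(s,j+1)-D(s,j)\ \ge\ C(p,j+1)-C(p,j)+(a_{j+1}-a_j).
\]
Since $a_{j+1}>a_j$, the time term is positive. It therefore suffices to show
\[
  C(p,j+1)-C(p,j)\ \ge\ C(s,j+1)-C(s,j).
\]
This follows from Corollary~\ref{cor:diminishing-returns} with $A=R[p,j]\subseteq B=R[s,j]$ and $Y=R_{j+1}$, using that the request occurrences are distinctly labelled so the unions are exactly the interval sets. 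Alternatively, it follows from Corollary~\ref{cor:interval-quadrangle} with $(i,k,p,j)\mapsto(s,p,j+1,j+1)$.

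The only delicate point is the direction of the comparison. We must start from the optimizer of the \emph{longer} interval $[s,j+1]$ and reuse its split point for $[s,j]$, so that the coverage difference is taken over a smaller base set $R[p,j]$ against the larger base $R[s,j]$. This is exactly where diminishing returns points the right way. The case $p=j+1$ must be handled separately because that split point is not available for $[s,j]$. Note also that the temporal term enters only as the nonnegative increment $a_{j+1}-a_j$, so no temporal estimate beyond $a_{j+1}>a_j$ is needed.
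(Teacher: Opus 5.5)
Your proof is correct and is essentially the paper's argument. The paper removes the new epoch $R_{j+1}$ from the last block of an optimal consecutive partition of $[s,j+1]$ and bounds the spatial drop by diminishing returns, splitting on whether that block also contains older requests; this is exactly your case split $p\le j$ versus $p=j+1$, with you phrasing it through the recurrence and also keeping the positive temporal saving $a_{j+1}-a_j$, which the paper simply records as a non-increase.
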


\begin{proof}
  Monotonicity of $C$ follows from monotonicity of $\kappa$.  Denote
  $X=R[s,j]$, $Y=R_{j+1}$, and
  \[
    \Delta=\kappa(X\cup Y)-\kappa(X).
  \]
  Consider an optimal consecutive partition for $[s,j+1]$.  The complete
  epoch $Y$ lies in its last block.  Remove $Y$ from that block, deleting
  the block if it becomes empty.  This produces a consecutive partition of
  $[s,j]$, and its temporal-span cost cannot increase.

  If the last block also contains an old request set $A\subseteq X$, its
  spatial cost drops by
  \[
    \kappa(A\cup Y)-\kappa(A)
    \ge \kappa(X\cup Y)-\kappa(X)
    =\Delta,
  \]
  where the inequality is diminishing returns.  If the last block consists
  only of $Y$, deleting it drops the spatial cost by
  $\kappa(Y)\ge\Delta$.  It follows that
  \[
    D(s,j)\le D(s,j+1)-\Delta.
  \]
  Since $C(s,j+1)=C(s,j)+\Delta$, rearranging proves the claim for $\Env$.
\end{proof}

Fix a parameter $\theta\in[0,1]$.  The \emph{DP-envelope deadline} of the
current block $B=[s,j]$ is
\begin{equation}
  \label{eq:theta-deadline}
  \tau_\theta(B)
  =a_s+\Env(B)+\theta C(B)
  =a_s+D(B)-(1-\theta)C(B).
\end{equation}
\begin{definition}[DP-envelope algorithm]
  \label{def:dp-envelope-algorithm}
The algorithm $\DPEnv(\theta)$ operates as follows.

\begin{enumerate}
  \item When an arrival epoch finds no pending request, it starts a busy
    block $B=[s,s]$ and creates the deadline~\eqref{eq:theta-deadline}.
  \item If an epoch arrives no later than the current deadline, the entire
    epoch is inserted into $B$ and the deadline is recomputed from
    \eqref{eq:theta-deadline}.
  \item If no epoch arrives by the current deadline, the deadline timer fires
    and the algorithm serves the rooted footprint $U(R[s,j])$.  This clears
    every pending request and ends the busy period.
\end{enumerate}
\end{definition}

The arrival-first convention of Remark~\ref{rem:arrival-first} governs
equality in the second rule.  In particular, a simultaneous epoch is inserted
atomically before a service is issued.  If the recomputed deadline equals the
current time, the terminal service follows immediately after that insertion.

\begin{lemma}[Pending-set and deadline invariant]
  \label{lem:pending-deadline-invariant}
  After $\DPEnv(\theta)$ has processed epoch $j$ within a busy period that
  starts at $s$, and before its terminal service, the pending requests are
  exactly $R[s,j]$ and
  \[
    \tau_\theta([s,j])\ge a_j.
  \]
  Moreover, the deadline never decreases while that busy period is extended.
\end{lemma}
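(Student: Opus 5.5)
We argue by induction on the epochs processed within the busy period, tracking two claims together: the pending set is exactly $R[s,j]$, and $\tau_\theta([s,j])\ge a_j$. For the base case, epoch $s$ finds no pending request, so by rule~1 the pending set is $R_s=R[s,s]$. By Lemma~\ref{lem:offline-basic-bounds}, $D(s,s)=C(s,s)$ because $a_s-a_s=0$. Hence
\[
  \tau_\theta([s,s])=a_s+\theta C(s,s)\ge a_s.
\]

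For the inductive step, suppose the invariant holds after epoch $j$ and the busy period continues, so that epoch $j+1$ is processed within it. By rule~2 and the arrival-first convention of Remark~\ref{rem:arrival-first}, this can only happen if $a_{j+1}\le\tau_\theta([s,j])$. Otherwise the timer would have fired first and ended the period. The entire epoch is then inserted atomically, so the pending set becomes $R[s,j]\cup R_{j+1}=R[s,j+1]$. No service occurs between the two epochs, because the only service is the terminal one.

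Next we show the deadline never decreases. Write $\tau_\theta=a_s+\Env+\theta C$, which is the first form in~\eqref{eq:theta-deadline}. Lemma~\ref{lem:slack-monotone} gives
\[
  \Env(s,j+1)\ge\Env(s,j)
  \qquad\text{and}\qquad
  C(s,j+1)\ge C(s,j).
\]
Since $\theta\ge0$, it follows that $\tau_\theta([s,j+1])\ge\tau_\theta([s,j])$. This is the monotonicity claim.

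It remains to show $\tau_\theta([s,j+1])\ge a_{j+1}$. Combining the monotonicity just proved with the admission condition gives
\[
  \tau_\theta([s,j+1])\ge\tau_\theta([s,j])\ge a_{j+1}.
\]
This closes the induction.

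The only delicate point is the boundary case $a_{j+1}=\tau_\theta([s,j])$. We must confirm that the convention indeed admits the epoch before the timer fires. We must also confirm that, if the recomputed deadline equals the current time, the state "after processing epoch $j+1$ and before the terminal service" is still well defined with pending set $R[s,j+1]$. Both facts follow directly from the stated convention: the epoch is inserted first and the service follows immediately afterward. So the main step is really the monotonicity supplied by Lemma~\ref{lem:slack-monotone}, which rests on diminishing returns.
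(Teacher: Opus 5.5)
Your proof is correct and follows the paper's argument essentially step for step. The base case uses $D(s,s)=C(s,s)$. The inductive step combines the admission condition $a_{j+1}\le\tau_\theta([s,j])$ with Lemma~\ref{lem:slack-monotone} and $\theta\ge0$ to obtain both deadline monotonicity and $\tau_\theta([s,j+1])\ge a_{j+1}$; your derivation of $D(s,s)=C(s,s)$ from Lemma~\ref{lem:offline-basic-bounds} just makes explicit a step the paper leaves implicit.
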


\begin{proof}
  A busy period starts with no old pending request, and its first complete
  epoch $R_s$ is inserted before any service.  Since
  $D(s,s)=C(s,s)$, its initial deadline is
  $a_s+\theta C(s,s)\ge a_s$.

  Inductively, suppose that epoch $j+1$ arrives no later than the old
  deadline.  No service has occurred inside the busy period, so appending the
  entire epoch changes the pending set from $R[s,j]$ to $R[s,j+1]$.
  Lemma~\ref{lem:slack-monotone}, monotonicity of $C$, and $\theta\ge0$ show
  that the recomputed deadline is no earlier than the old one, and hence no
  earlier than $a_{j+1}$.  If no epoch arrives by the deadline instead, the
  terminal service fires and clears the entire pending set.  This
  proves all parts of the invariant.
\end{proof}

Lemma~\ref{lem:offline-basic-bounds} also yields the useful envelope anchor
$a_s+\Env(s,j)\le a_j$.  Causality itself follows from the invariant: every
quantity in~\eqref{eq:theta-deadline} is computed solely from epochs already
revealed, and an updated deadline never lies in the past.

\begin{lemma}[Exact cost of one online block]
  \label{lem:online-block-cost}
  If $B$ is a block completed by $\DPEnv(\theta)$, then its service costs
  exactly
  \begin{equation}
    \label{eq:online-block-cost}
    D(B)+\theta C(B).
  \end{equation}
\end{lemma}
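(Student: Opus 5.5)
The plan is to read the service cost of a completed block $B=[s,j]$ directly off the definition of $\DPEnv(\theta)$ and then substitute the deadline formula~\eqref{eq:theta-deadline}.

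\emph{Step 1: identify the service.} By Lemma~\ref{lem:pending-deadline-invariant}, when the timer fires the pending set is exactly $R[s,j]$, where $j$ is the last epoch inserted into the busy period. Rule~3 of Definition~\ref{def:dp-envelope-algorithm} then issues the rooted footprint $U(R[s,j])$, whose weight is $\kappa(R[s,j])=C(s,j)=C(B)$ by~\eqref{eq:coverage-cost} and~\eqref{eq:interval-coverage}. The cleared batch is precisely $R[s,j]$: the footprint may contain other tree locations, but no other request is pending there, since all earlier busy periods were fully cleared and later epochs have not yet been inserted.

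\emph{Step 2: fix the service time.} The terminal service occurs at the final deadline $t=\tau_\theta([s,j])$. I will check that this time is consistent with the invariant $t\ge a_j$. I will also check the arrival-first convention: if epoch $j+1$ arrives exactly at $t$, it is inserted first, so $j$ was not in fact the last epoch. Hence the block that is actually served has no epoch arriving at or before its final deadline, and its service time is exactly $\tau_\theta(B)$.

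\emph{Step 3: compute the cost.} The oldest cleared request arrived at $a_s$, because all requests in $R[s,j]$ arrive in $[a_s,a_j]$ and $R_s$ is nonempty. So the delay term in~\eqref{eq:service-cost} is $t-a_s=D(B)-(1-\theta)C(B)$. Adding the spatial cost $C(B)$ gives
\[
C(B)+D(B)-(1-\theta)C(B)=D(B)+\theta C(B),
\]
which is~\eqref{eq:online-block-cost}.

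The only delicate point is Step 2: I must make sure the firing time equals the final recomputed deadline. This includes the boundary case in which the recomputed deadline equals the current arrival time, where the service follows immediately after the insertion. The invariant and the arrival-first convention handle this case directly.
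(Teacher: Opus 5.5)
Your proposal is correct and matches the paper's proof: the pending-set invariant makes the service at $\tau_\theta(B)$ clear exactly $R[s,j]$ with oldest arrival $a_s$, so the delay is $\tau_\theta(B)-a_s=D(B)-(1-\theta)C(B)$, and adding the footprint cost $C(B)$ gives $D(B)+\theta C(B)$. Your extra checks on the footprint clearing nothing else and on arrival-first ties are sound, but they only spell out what Lemma~\ref{lem:pending-deadline-invariant} already provides.
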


\begin{proof}
  By Lemma~\ref{lem:pending-deadline-invariant}, no service occurs within a
  busy period and the epoch at its start remains pending until the deadline.
  The maximum waiting time is therefore
  \[
    \tau_\theta(B)-a_s
    =D(B)-(1-\theta)C(B).
  \]
  The footprint service has spatial cost $C(B)$.  Adding these two terms
  proves~\eqref{eq:online-block-cost}.
\end{proof}

\begin{remark}[Implementation]
  \label{rem:envelope-implementation}
  Every deadline uses only the interval coverage costs and the recurrence in
  Theorem~\ref{thm:offline-dp}.  Hence $\DPEnv(\theta)$ is a polynomial-time
  online algorithm.  During a busy block starting at $s$, it stores the prefix
  states $D(s,s),D(s,s+1),\ldots$ already computed.  When a new endpoint $j$
  is accepted, one backward scan computes $C(p,j)$ for every $s\le p\le j$,
  and these values together with the cached prefix states compute only the new
  state $D(s,j)$.  Thus a busy block of $b$ epochs and $M_B$ request
  occurrences is processed, rather than recomputed from scratch after every
  arrival, in $O(b^2+bM_B+b|E|)$ time by the scans of
  Theorem~\ref{thm:offline-polytime}.  Since busy blocks are disjoint, the
  work over the complete input is conservatively bounded by
  $O(n^2+nN+n|E|)$ time and $O(|E|+N+n)$ working storage.  Equivalently, the
  abstract implementation makes at most $O(n^2)$ value queries and one
  realization query per completed busy block.  All DP states and interval
  costs concern the currently revealed prefix; no future arrival is queried.
\end{remark}

\subsection{Greedy partitions induced by the timer}
\label{subsec:theta-partitions}

Set $\lambda_\theta=1-\theta$ and, for every interval, define
\begin{equation}
  \label{eq:theta-envelope}
  h_\theta(i,j)=D(i,j)-\lambda_\theta C(i,j),
  \qquad
  h_\theta(i,i-1)=0.
\end{equation}
An interval $[i,k]$ is \emph{$\theta$-connected} if
\begin{equation}
  \label{eq:theta-connected}
  a_r-a_i\le h_\theta(i,r-1)
  \qquad\text{for every }r=i+1,\ldots,k.
\end{equation}
This is precisely the condition that $\DPEnv(\theta)$ accepts every epoch
of the interval into the busy block that starts at $i$.  Given a finite
sequence of epochs, its \emph{$\theta$-greedy partition} is obtained by
taking the longest $\theta$-connected prefix, then repeating on the remaining
suffix.  At every nonfinal boundary $k$, arrival-first rejection gives
\begin{equation}
  \label{eq:theta-greedy-cut}
  a_k-a_i>h_\theta(i,k-1),
\end{equation}
where $i$ is the first epoch of the preceding block.

For use at parameter breakpoints, call a consecutive partition
\emph{weakly $\theta$-greedy} if every one of its blocks is
$\theta$-connected and every nonfinal boundary satisfies the weak form of
\eqref{eq:theta-greedy-cut}, with $>$ replaced by $\ge$.

The next two inequalities form the structural core of the analysis.

\begin{lemma}[Extension and cut inequalities]
  \label{lem:extension-cut}
  Fix $\theta\in[0,1]$ and indices $i\le k\le j$.
  \begin{enumerate}
    \item If $[i,k]$ is $\theta$-connected, then
    \begin{equation}
      \label{eq:extension-inequality}
      h_\theta(i,j)
      \ge a_k-a_i+h_\theta(k,j).
    \end{equation}
    \item If $k>i$, $[i,k-1]$ is $\theta$-connected, and
    $a_k-a_i\ge h_\theta(i,k-1)$, then
    \begin{equation}
      \label{eq:cut-inequality}
      h_\theta(i,j)
      \ge h_\theta(i,k-1)+h_\theta(k,j).
    \end{equation}
  \end{enumerate}
\end{lemma}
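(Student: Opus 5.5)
The plan is to prove both parts simultaneously by strong induction on $j$, for fixed $\theta$ and $\lambda=\lambda_\theta=1-\theta\in[0,1]$. In each case I would expand $D(i,j)$ via Theorem~\ref{thm:offline-dp}. Let $p$ be the start of an optimal last block, so that $D(i,j)=D(i,p-1)+C(p,j)+a_j-a_p$. Then
\[
  h_\theta(i,j)=D(i,p-1)+C(p,j)+a_j-a_p-\lambda C(i,j).
\]
I would split according to whether this last block straddles the cut index $k$ or lies entirely to its right. The two spatial tools are subadditivity and monotonicity of $C$ (Lemma~\ref{lem:tree-submodular}) and the interval quadrangle inequality (Corollary~\ref{cor:interval-quadrangle}). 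The two temporal tools are $\theta$-connectivity and the upper bound $D(k,j)\le C(k,j)+a_j-a_k$ from Lemma~\ref{lem:offline-basic-bounds}.

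\emph{Straddling case} ($p\le k$ for Extension, $p<k$ for Cut). Here I would bound the right-hand sides crudely. For Extension, $h_\theta(k,j)\le (1-\lambda)C(k,j)+a_j-a_k$, so the right side is at most $a_j-a_i+(1-\lambda)C(k,j)$. For Cut, additionally use the hypothesis $h_\theta(i,k-1)\le a_k-a_i$, which gives the same bound. On the left, $\theta$-connectivity at $r=p$ gives $a_p-a_i\le h_\theta(i,p-1)=D(i,p-1)-\lambda C(i,p-1)$; when $p=i$ this holds trivially. Hence
\[
  h_\theta(i,j)\ge a_j-a_i+\lambda C(i,p-1)+C(p,j)-\lambda C(i,j).
\]
It therefore suffices to show
\[
  \lambda C(i,p-1)+C(p,j)\ge\lambda C(i,j)+(1-\lambda)C(k,j).
\]
This follows by writing $C(p,j)=\lambda C(p,j)+(1-\lambda)C(p,j)$. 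Subadditivity gives $C(i,p-1)+C(p,j)\ge C(i,j)$, and monotonicity gives $C(p,j)\ge C(k,j)$ because $p\le k$.

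\emph{Right case} ($p>k$ for Extension, $p\ge k$ for Cut). The subcase $p=k$ of Cut is direct: we have $D(i,j)=D(i,k-1)+C(k,j)+a_j-a_k\ge D(i,k-1)+D(k,j)$, and $C(i,j)\le C(i,k-1)+C(k,j)$. Otherwise $k\le p-1<j$, so the induction hypothesis applies to $(i,k,p-1)$. It gives $h_\theta(i,p-1)\ge a_k-a_i+h_\theta(k,p-1)$ for Extension, or $h_\theta(i,p-1)\ge h_\theta(i,k-1)+h_\theta(k,p-1)$ for Cut; the hypotheses of each part refer only to indices up to $k$, so they persist. Substituting into the expansion and using
\[
  D(k,j)\le D(k,p-1)+C(p,j)+a_j-a_p,
\]
which holds because this candidate appears in the recurrence for $D(k,j)$, the remaining requirement is
\[
  \lambda\bigl[C(i,p-1)+C(k,j)-C(i,j)-C(k,p-1)\bigr]\ge0.
\]
This is exactly the quadrangle inequality for $i\le k\le p\le j$.

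The main obstacle is the bookkeeping of the $\lambda$-weighted coverage terms. In particular, one must check that in the straddling case the crude bound on $h_\theta(k,j)$ loses nothing beyond what connectivity at $p$ recovers. One must also check that the induction is well founded: the hypothesis is invoked only at the strictly smaller right endpoint $p-1$, and the base cases $k=i$ for Extension and $j=k$ reduce to identities or to the direct case above.
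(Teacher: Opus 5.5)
Your proposal is correct and follows essentially the paper's proof. You use the same split of the last-block recurrence: connectivity plus subadditivity/monotonicity and the one-batch bound on $h_\theta(k,j)$ when the final block crosses the cut, and otherwise the prefix induction at $p-1$, the interval quadrangle inequality, and the recurrence for $D(k,j)$. The only differences are cosmetic. You bound the optimal candidate $p$ rather than every candidate, you place the Extension subcase $p=k$ in the straddling case, and you treat the Cut subcase $p=k$ directly rather than through the paper's unified constant $L$.
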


\paragraph{Proof roadmap.}
The two statements are proved together because they use the same split in
the last-block recurrence for $D(i,j)$.  For a candidate last block starting
at $p$, the case $p<k$ crosses the proposed cut; connectedness, together with
monotonicity and subadditivity of $C$, then supplies the required lower bound
directly.  In the case $p\ge k$, strong induction propagates the prefix
contribution through $p-1$, while the interval quadrangle inequality
transfers the coverage correction from $[i,j]$ to $[k,j]$.  Bounding every
candidate $p$ in these two cases proves both the Extension and Cut
inequalities.

\begin{proof}
  We prove both statements simultaneously by strong induction on the span
  $j-i$.  The base case $j=i$ is immediate: necessarily $k=i=j$, the
  extension inequality is an equality, and the cut statement does not apply.
  Hence suppose $j>i$.  Subtracting $\lambda_\theta C(i,j)$ from the offline
  recurrence gives
  \begin{equation}
    \label{eq:theta-recurrence-candidate}
    h_\theta(i,j)=\min_{i\le p\le j} Q_p,
  \end{equation}
  where
  \[
    Q_p=
    D(i,p-1)+C(p,j)+a_j-a_p-\lambda_\theta C(i,j).
  \]
  It suffices to lower-bound every $Q_p$.

  \emph{Case A: the final DP block crosses the cut ($p<k$).}
  Put
  \[
    x=C(i,p-1),\quad y=C(p,j),\quad
    z=C(i,j),\quad q=C(k,j).
  \]
  If $p>i$, connectedness at epoch $p$ implies
  \[
    D(i,p-1)+a_i-a_p\ge\lambda_\theta x;
  \]
  for $p=i$, the same inequality holds with both sides zero.  Subadditivity
  yields $z\le x+y$, while monotonicity yields $q\le y$.  Therefore
  \begin{align}
    Q_p
    &\ge a_j-a_i+\lambda_\theta x+y-\lambda_\theta z \\
    &\ge a_j-a_i+\theta y
     \ge a_j-a_i+\theta q.                         \label{eq:p-before-k}
  \end{align}
  Serving $[k,j]$ as one batch and using
  \eqref{eq:theta-envelope} yields
  \begin{equation}
    \label{eq:one-batch-h-upper}
    h_\theta(k,j)\le a_j-a_k+\theta q.
  \end{equation}
  Equations~\eqref{eq:p-before-k}--\eqref{eq:one-batch-h-upper} prove the
  desired lower bound in the extension case.  They also prove it in the cut
  case, because its hypothesis includes
  $h_\theta(i,k-1)\le a_k-a_i$.

  \emph{Case B: the final DP block begins at or after the cut ($p\ge k$).}
  In the extension case set $L=a_k-a_i$; in the cut
  case set $L=h_\theta(i,k-1)$.  If $p=k$, then the inequality below is
  immediate when $k=i$, since $h_\theta(i,i-1)=L=0$.  When $k>i$,
  $\theta$-connectedness at epoch $k$ gives
  $h_\theta(i,k-1)\ge a_k-a_i=L$ in the extension case; in the cut case,
  equality holds by the definition of $L$.  If $p>k$, the induction
  hypothesis, applied to the strictly shorter interval ending at $p-1$,
  yields in either case
  \begin{equation}
    \label{eq:induction-prefix}
    h_\theta(i,p-1)\ge L+h_\theta(k,p-1).
  \end{equation}
  Thus~\eqref{eq:induction-prefix} holds for every $p\ge k$.  Rewriting it in
  terms of $D$ gives
  \begin{equation}
    \label{eq:induction-prefix-d}
    D(i,p-1)
    \ge L+D(k,p-1)
      +\lambda_\theta\bigl(C(i,p-1)-C(k,p-1)\bigr).
  \end{equation}
  The interval quadrangle inequality~\eqref{eq:interval-quadrangle} and
  $\lambda_\theta\ge0$ yield
  \[
    \lambda_\theta\bigl(C(i,p-1)-C(i,j)\bigr)
    \ge
    \lambda_\theta\bigl(C(k,p-1)-C(k,j)\bigr).
  \]
  Using this inequality,~\eqref{eq:induction-prefix-d}, and the recurrence for
  $D(k,j)$, we obtain
  \begin{align*}
    Q_p
    &\ge L+D(k,p-1)+C(p,j)+a_j-a_p
           -\lambda_\theta C(k,j) \\
    &\ge L+D(k,j)-\lambda_\theta C(k,j)
     =L+h_\theta(k,j).
  \end{align*}
  This is the required bound in both cases.  Taking the minimum over $p$ in
  \eqref{eq:theta-recurrence-candidate} completes the simultaneous induction.
\end{proof}

\begin{corollary}[Superadditivity over greedy blocks]
  \label{cor:greedy-superadditivity}
  If $B_1,\ldots,B_m$ form a weakly $\theta$-greedy partition of $[i,j]$,
  then
  \begin{equation}
    \label{eq:greedy-superadditivity}
    h_\theta(i,j)\ge\sum_{\ell=1}^{m}h_\theta(B_\ell).
  \end{equation}
\end{corollary}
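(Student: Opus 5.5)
The plan is to argue by induction on the number $m$ of blocks, peeling off the first block with the Cut inequality of Lemma~\ref{lem:extension-cut}. For $m=1$ the partition is the single block $[i,j]$, and the claim is an equality.

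For $m\ge2$, write $B_1=[i,k-1]$, so the second block begins at epoch $k$ with $i<k\le j$. The hypotheses of the cut case of Lemma~\ref{lem:extension-cut} follow directly from weak $\theta$-greediness. First, every block is $\theta$-connected, so $[i,k-1]$ is $\theta$-connected. Second, the boundary at $k$ is nonfinal, so the weak form of \eqref{eq:theta-greedy-cut} gives $a_k-a_i\ge h_\theta(i,k-1)$. The cut inequality \eqref{eq:cut-inequality} then yields
\[
  h_\theta(i,j)\ge h_\theta(i,k-1)+h_\theta(k,j).
\]

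Next, the remaining blocks $B_2,\ldots,B_m$ form a partition of $[k,j]$. I check that this partition is again weakly $\theta$-greedy. Each of its blocks is $\theta$-connected. Each of its nonfinal boundaries is a nonfinal boundary of the original partition, and the weak cut condition at such a boundary refers only to the first epoch of the preceding block, which is unchanged. The induction hypothesis therefore gives $h_\theta(k,j)\ge\sum_{\ell=2}^{m}h_\theta(B_\ell)$. Combining this with the displayed inequality proves \eqref{eq:greedy-superadditivity}.

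I expect no real obstacle, since the substantive work sits in Lemma~\ref{lem:extension-cut}. The only point needing care is that weak greediness is inherited by the suffix partition, which holds because the boundary conditions are local to consecutive blocks. The weak inequality is exactly what the cut case requires, so the argument covers breakpoint partitions as well as strict greedy ones.
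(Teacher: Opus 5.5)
Your proposal is correct and follows the paper's proof: the weak boundary condition together with connectedness of the first block supplies the Cut premise of Lemma~\ref{lem:extension-cut}, and you then recurse on the suffix. Your explicit induction on $m$, including the check that the suffix partition is again weakly $\theta$-greedy, spells out in full what the paper compresses into ``repeat on the remaining suffix.''
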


\begin{proof}
  At the first block boundary, the weak boundary condition is precisely the
  cut premise in Lemma~\ref{lem:extension-cut}.  Apply the cut inequality and
  then repeat on the remaining suffix.  In particular, the conclusion applies
  to every actual $\theta$-greedy partition, whose rejected boundaries are
  strict.
\end{proof}

For later use, we also record how these partitions vary with the parameter.
The intuition is that once a larger-$\theta$ block crosses an old boundary,
the Extension inequality forces it to absorb the entire next
smaller-$\theta$ block.

\begin{lemma}[Coarsening]
  \label{lem:theta-coarsening}
  If $0\le\theta_1\le\theta_2\le1$, then every block of the
  $\theta_2$-greedy partition is a union of consecutive blocks of the
  $\theta_1$-greedy partition.
\end{lemma}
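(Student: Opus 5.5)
The plan is to prove the equivalent statement that every boundary of the $\theta_2$-greedy partition is also a boundary of the $\theta_1$-greedy partition. Equivalently, no $\theta_1$-greedy block is split by a $\theta_2$-boundary.

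The first ingredient is monotonicity of connectedness in $\theta$. Since $h_\theta(i,j)=D(i,j)-(1-\theta)C(i,j)$ and $C\ge0$, $h_\theta$ is nondecreasing in $\theta$. Hence every $\theta_1$-connected interval is $\theta_2$-connected. Consequently, starting from a common block start $i$, the longest $\theta_2$-connected prefix is at least as long as the longest $\theta_1$-connected prefix.

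The argument proceeds by induction along the sequence, maintaining the invariant that the current $\theta_2$-block starts at an epoch $i$ that is also the start of a $\theta_1$-block. Let the $\theta_1$-blocks from $i$ onward be $[i,k_1-1],[k_1,k_2-1],\ldots$. By the first ingredient, the $\theta_2$-block $[i,m]$ satisfies $m\ge k_1-1$. The key claim is: if $[i,m]$ contains $k_\ell$ for some $\ell\ge1$, i.e.\ it crosses a $\theta_1$-boundary, then it also contains $k_{\ell+1}-1$. In other words, it absorbs the whole next $\theta_1$-block. Granting this, $m+1$ is a $\theta_1$-boundary (or $m=n$), and the invariant propagates to the next $\theta_2$-block.

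To prove the key claim, we show that $\theta_2$-connectedness from $i$ persists for each $r$ with $k_\ell<r\le k_{\ell+1}-1$, i.e.\ $a_r-a_i\le h_{\theta_2}(i,r-1)$. Since $[i,k_\ell]$ is $\theta_2$-connected (as $[i,m]\ni k_\ell$ is), the Extension inequality of Lemma~\ref{lem:extension-cut} with $k=k_\ell$ and $j=r-1$ gives
\[
h_{\theta_2}(i,r-1)\ge a_{k_\ell}-a_i+h_{\theta_2}(k_\ell,r-1).
\]
Because $[k_\ell,k_{\ell+1}-1]$ is $\theta_1$-connected, and hence $\theta_2$-connected, we have $h_{\theta_2}(k_\ell,r-1)\ge a_r-a_{k_\ell}$. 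Adding the two yields $h_{\theta_2}(i,r-1)\ge a_r-a_i$, as required. By induction on $r$, the whole next $\theta_1$-block is accepted, provided we also know that $[i,r-1]$ is $\theta_2$-connected at each step; this holds because we are extending the same connected prefix one epoch at a time. The induction on $\ell$ then continues.

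The main obstacle is bookkeeping rather than inequalities. We must handle the final block (no boundary at $n$), and check that the Extension premise requires connectedness of $[i,k_\ell]$ itself, including epoch $k_\ell$, which holds only once the $\theta_2$-block has actually crossed. We also note that we never need the strict cut inequality~\eqref{eq:theta-greedy-cut}, so equality cases at breakpoints cause no trouble.
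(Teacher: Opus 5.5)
Your proposal is correct and follows the paper's proof essentially step for step. Both arguments first use $C\ge0$ to get that $h_\theta$ is nondecreasing in $\theta$, so the $\theta_2$-block contains the first $\theta_1$-block. Both then apply the Extension inequality with $k$ equal to the first epoch of the next $\theta_1$-block, showing that once the $\theta_2$-block crosses a $\theta_1$-boundary it absorbs that entire block, and both repeat this along the remaining suffix.
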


\begin{proof}
  Since $C$ is nonnegative,
  $h_{\theta_2}(i,j)\ge h_{\theta_1}(i,j)$ for every interval.  Hence the
  first $\theta_2$-block contains the first $\theta_1$-block, say $[i,u-1]$.
  Suppose it also accepts epoch $u$, the first epoch of the next
  $\theta_1$-block.  For every later epoch $r$ of that old block,
  \[
    a_r-a_u
    \le h_{\theta_1}(u,r-1)
    \le h_{\theta_2}(u,r-1).
  \]
  Moreover, $[i,u]$ is $\theta_2$-connected.  The extension inequality
  therefore yields
  \[
    h_{\theta_2}(i,r-1)
    \ge a_u-a_i+h_{\theta_2}(u,r-1)
    \ge a_r-a_i.
  \]
  Hence the $\theta_2$-block absorbs that entire old block.  Repeating shows
  that it can stop only at a $\theta_1$-block boundary.  Apply the same
  argument to the remaining suffix.
\end{proof}

\subsection{The deterministic upper bound and optimality}
\label{subsec:deterministic-ratio}

The deterministic algorithm is the endpoint $\DPEnv(1)$.  Its deadline for
a current block $B=[s,j]$ is simply
\begin{equation}
  \label{eq:deterministic-deadline}
  \tau_1(B)=a_s+D(B).
\end{equation}

\begin{theorem}[Deterministic upper bound]
  \label{thm:deterministic-upper}
  On every finite rooted tree with nonnegative edge weights, $\DPEnv(1)$ is
  deterministically $2$-competitive.
\end{theorem}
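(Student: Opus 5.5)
We argue exactly as in the line warm-up, now using the tree lemmas. Let $P_1=\{B_1,\ldots,B_m\}$ be the busy blocks produced by $\DPEnv(1)$ on the input $I$, listed in temporal order. The definition of $\theta$-connectedness in~\eqref{eq:theta-connected} is precisely the acceptance rule of $\DPEnv(1)$, so $P_1$ is the $1$-greedy partition of $[1,n]$. Every nonfinal boundary satisfies the strict rejection condition~\eqref{eq:theta-greedy-cut}. We will first express the online cost blockwise and then compare it with $D(1,n)=\OPT(I)$ from Theorem~\ref{thm:offline-dp}.

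\emph{Step 1: exact online cost.} By Lemma~\ref{lem:online-block-cost} with $\theta=1$, each block costs exactly $D(B_\ell)+C(B_\ell)$, so
\[
  \ALG_1=\sum_{\ell=1}^{m}\bigl(D(B_\ell)+C(B_\ell)\bigr).
\]
Lemma~\ref{lem:pending-deadline-invariant} guarantees that the busy periods are disjoint and that together they clear every request. Hence this sum accounts for every service that $\DPEnv(1)$ issues.

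\emph{Step 2: superadditivity at $\theta=1$.} At $\theta=1$ we have $\lambda_1=0$, so $h_1=D$. Corollary~\ref{cor:greedy-superadditivity} applies to the greedy partition $P_1$ and gives
\[
  \OPT(I)=D(1,n)\ge\sum_{\ell}D(B_\ell).
\]
The reverse inequality also holds, since concatenating optimal schedules for the blocks is feasible, but only the displayed direction is needed. This step carries the real content, and that content sits in the Cut inequality of Lemma~\ref{lem:extension-cut}. At $\theta=1$ the Cut inequality reduces to the following claim: if the timer rejected epoch $k$ because $a_k-a_i>D(i,k-1)$, then splitting the instance at $k$ loses nothing for the offline optimum. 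The only thing I would check is that the corollary's hypotheses (blocks $\theta$-connected, boundaries weakly greedy) hold for the actual run; they do, because rejection boundaries are strict.

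\emph{Step 3: spatial cost.} Lemma~\ref{lem:offline-basic-bounds} gives $C(B_\ell)\le D(B_\ell)$ for each block. Combining the three steps,
\[
  \ALG_1\le 2\sum_\ell D(B_\ell)\le 2\OPT(I).
\]
The empty input is trivial, with both sides zero. If zero-cost requests were removed in preprocessing, they are served at zero cost and do not change $\OPT$. No step is a serious obstacle, since the heavy lifting is already done by Lemma~\ref{lem:extension-cut} and its corollary.
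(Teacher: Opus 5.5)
Your proposal is correct and follows the paper's proof essentially step for step. It uses the exact block cost $D(B_\ell)+C(B_\ell)$ from Lemma~\ref{lem:online-block-cost}, then greedy superadditivity at $\theta=1$ (where $h_1=D$) to get $\sum_\ell D(B_\ell)\le D(1,n)=\OPT(I)$, and then $C(B_\ell)\le D(B_\ell)$. The paper additionally records the reverse concatenation inequality to obtain the identity $\sum_\ell D(B_\ell)=\OPT(I)$. As you note, that identity is not needed for the ratio, but the paper reuses it later for $Q(1)=\OPT(I)$ in Lemma~\ref{lem:q-endpoints}.
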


\begin{proof}
  The empty input is immediate.  On a nonempty input, let
  $B_1,\ldots,B_m$ denote the busy blocks produced by the algorithm.  Since
  $h_1=D$, Corollary~\ref{cor:greedy-superadditivity} gives
  \begin{equation}
    \label{eq:d-sum-upper}
    \OPT(I)=D(1,n)\ge\sum_{\ell=1}^{m}D(B_\ell).
  \end{equation}
  Conversely, concatenate optimal consecutive partitions internal to the
  blocks $B_1,\ldots,B_m$.  This is a feasible consecutive partition of the
  whole input, so
  \[
    D(1,n)\le\sum_{\ell=1}^{m}D(B_\ell).
  \]
  Consequently, the busy-block decomposition satisfies the exact identity
  \begin{equation}
    \label{eq:busy-block-opt-decomposition}
    \sum_{\ell=1}^{m}D(B_\ell)=D(1,n)=\OPT(I).
  \end{equation}
  By
  Lemma~\ref{lem:online-block-cost} and the lower bound $C(B)\le D(B)$,
  \[
    \ALG(I)
    =\sum_{\ell=1}^{m}\bigl(D(B_\ell)+C(B_\ell)\bigr)
    \le2\sum_{\ell=1}^{m}D(B_\ell)
    =2\OPT(I).
  \]
\end{proof}

\begin{corollary}[Optimal deterministic ratio]
  \label{cor:deterministic-tight}
  The optimal deterministic competitive ratio on every nondegenerate rooted
  tree is exactly $2$.
\end{corollary}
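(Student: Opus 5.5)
The claim has an upper half and a lower half. The upper bound is Theorem~\ref{thm:deterministic-upper}: $\DPEnv(1)$ is $2$-competitive on every finite rooted tree with nonnegative weights. What remains is a matching lower bound: no deterministic online algorithm is $\alpha$-competitive for any $\alpha<2$ on any nondegenerate rooted tree. I will reduce to the fixed-node (TCP-type) setting and rerun the classical adversary argument of Dooly, Goldman, and Scott.

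\emph{Reduction.} By nondegeneracy there is a node $v$ with $c:=w(P(\rho,v))>0$, and I place every request at $v$. Any service that clears a request at $v$ must contain $P(\rho,v)$, so it pays at least $c$. Conversely, serving exactly $P(\rho,v)$ costs $c$ and clears everything pending. An online algorithm gains nothing by buying a larger subtree, since all requests sit at $v$; and $\kappa(A)=c$ for every nonempty $A$. The instance is therefore exactly fixed-setup batching: each batch costs setup $c$ plus its maximum waiting time. Rescaling time by $c$, I may take $c=1$.

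\emph{Adversary.} Fix a deterministic online algorithm. The adversary releases one request at time $0$. Whenever the algorithm serves (leaving nothing pending), it immediately releases a new request at that same service time. Under the arrival-first convention, ties at equal times could make the new request join the service just issued, so I would release it an infinitesimal $\epsilon$ later, or argue directly with the convention. If the algorithm never serves some request, its cost is infinite, so assume service times $t_1<t_2<\dots$. The adversary stops after $k$ services. Write $g_\ell$ for the gap between the $\ell$-th release and the $\ell$-th service. Then
\[
\ALG=\sum_{\ell\le k}(1+g_\ell)=k+\sum_\ell g_\ell .
\]
Each request is released at the previous service time, so the requests form a chain whose consecutive gaps are $g_\ell$.

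\emph{Offline comparison.} I compare two explicit offline strategies. Strategy (i) serves every request immediately, at cost $k$. Strategy (ii) uses the consecutive-block DP of Theorem~\ref{thm:offline-dp}: it groups consecutive requests and pays $1$ plus the block's time span. The standard averaging argument uses two interleaved offline schedules that each serve at alternating online service times, so each one absorbs every other gap. Their costs sum to at most $k+\sum_\ell g_\ell$ plus lower-order terms. Hence
\[
\OPT\le \tfrac12\bigl(\ALG\bigr)+O(1),
\]
and I combine this with $\OPT\le k$. Because the competitive definition here is strict multiplicative with no additive constant, I must make sure the $O(1)$ slack is absorbed. To do that I let $k\to\infty$ and treat separately the case where $\sum_\ell g_\ell$ stays bounded, in which strategy (i) already gives ratio close to $2$ only when the gaps are near $1$. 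This case balancing is the step I expect to be the main obstacle. My fallback is to cite the known deterministic lower bound for this exact objective from~\cite{dooly2001tcp}, as Appendix~\ref{app:known-lower-bounds} records it, and to verify only that the fixed-node restriction above matches their model, including the per-batch maximum delay and the strict ratio.

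Combining the lower bound on the fixed-node restriction with Theorem~\ref{thm:deterministic-upper} gives exactly $2$.
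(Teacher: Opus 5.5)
Your proof is correct. Its overall structure is the paper's proof. The upper bound is Theorem~\ref{thm:deterministic-upper}. The lower bound comes from embedding the fixed-node instance at a node $v$ with $c=w(P(\rho,v))>0$ and citing Dooly--Goldman--Scott, so your fallback already proves the corollary.

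Your self-contained sketch also works, and the step you call the main obstacle is not one. Every online service that clears $v$ costs at least $c$, so $\ALG\ge kc$. Write this as $\ge$, not $=$, since an algorithm may buy a larger subtree. The two alternating pairings $\{1,2\},\{3,4\},\dots$ and $\{1\},\{2,3\},\dots$ together cost $(k+1)c+\sum_{\ell<k}(g_\ell+\epsilon)\le\ALG+c+(k-1)\epsilon$. Since $\OPT$ is at most their average,
\[
  \frac{\ALG}{\OPT}\ge\frac{2\ALG}{\ALG+c+(k-1)\epsilon}\ge\frac{2kc}{kc+c+(k-1)\epsilon}\longrightarrow 2
\]
once $\epsilon\to0$ (for example $\epsilon=c/k$). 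The second inequality uses only $\ALG\ge kc$ and the monotonicity of $x\mapsto 2x/(x+b)$. No case split on $\sum_\ell g_\ell$ and no comparison with the immediate-service schedule are needed. The case analysis you sketch there is not coherent as written and should simply be dropped.

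Do keep $\epsilon>0$. Under the arrival-first convention, a request released exactly at a service time is inserted before that service. The algorithm's behaviour at that instant can then change, which invalidates the service time obtained by simulating it with no further arrivals.

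The paper's appendix argues differently. It uses the exact fixed-node formula $\OPT=c+\sum_i\min\{g_i,c\}$ and the fact that $(Nc+S)/(c+S+\Delta)$ decreases in $S$, giving the bound $(2N-1)c/(Nc+\Delta)$. Your pairing argument avoids that formula by exhibiting two explicit offline schedules. The price is a slightly weaker finite-$k$ bound, $2k/(k+1)$ instead of $2-1/N$ in the limit of small perturbations, with the same limit $2$.
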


\begin{proof}
  Dooly, Goldman, and Scott established the matching lower bound for the
  fixed-node TCP restriction with one maximum-waiting-time charge per
  batch~\cite{dooly2001tcp}.  This restriction embeds at any node of positive
  root-path cost.  The claim therefore follows from
  Theorem~\ref{thm:deterministic-upper}.  For completeness, Appendix
  \ref{app:known-lower-bounds} records a self-contained fixed-node proof.
\end{proof}

\section{The Optimal Randomized Ratio}
\label{sec:randomized}

Global-Shift couples all busy periods through one random parameter.  Before
the first arrival, the \emph{DP-Envelope Global-Shift} algorithm samples
$\Theta\in[0,1]$ with distribution function and density
\begin{equation}
  \label{eq:theta-distribution}
  F(\theta)=\frac{e^\theta-1}{e-1},
  \qquad
  f(\theta)=\frac{e^\theta}{e-1},
\end{equation}
and then runs $\DPEnv(\Theta)$.  The same realization of $\Theta$ is used for
every busy period and every deadline update.  This global coupling is
essential to the proof: it produces a single coarsening chain of partitions
for the entire fixed input.

The density in~\eqref{eq:theta-distribution} is induced by this chain rather
than chosen independently of it.  Between breakpoints, the central identity
is
\[
  \boxed{\ALG_\theta=Q(\theta)+Q'(\theta).}
\]
Multiplication by $e^\theta$ turns the online cost into a total derivative.
Coarsening makes the
omitted breakpoint contributions favorable, and normalizing the resulting
weight on $[0,1]$ gives $e^\theta/(e-1)$.

\subsection{The parameter potential}
\label{subsec:parameter-potential}

Fix a nonempty finite input $I$.  For a deterministic parameter
$\theta\in[0,1]$, let $P_\theta$ denote the consecutive partition of all epochs
into the busy blocks produced by $\DPEnv(\theta)$, and let
$\ALG_\theta(I)$ denote its cost.  Lemma~\ref{lem:online-block-cost} yields
\begin{equation}
  \label{eq:fixed-theta-alg}
  \ALG_\theta(I)
  =\sum_{B\in P_\theta}\bigl(D(B)+\theta C(B)\bigr).
\end{equation}
Define the potential
\begin{equation}
  \label{eq:q-potential}
  Q(\theta)
  =\sum_{B\in P_\theta}h_\theta(B)
  =\sum_{B\in P_\theta}
    \bigl(D(B)-(1-\theta)C(B)\bigr).
\end{equation}

\begin{lemma}[Finite parameter decomposition]
  \label{lem:finite-parameter-decomposition}
  The map $\theta\mapsto P_\theta$ is constant on each component of
  $[0,1]$ after removing finitely many parameter values.  As $\theta$
  increases, every genuine change replaces a partition by a strict
  coarsening.
\end{lemma}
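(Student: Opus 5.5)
The plan is to combine the Coarsening Lemma~\ref{lem:theta-coarsening} with a finiteness argument for the acceptance tests. Fix the input $I$ with epochs $1,\dots,n$. By the definition of $\theta$-connectedness~\eqref{eq:theta-connected}, the $\theta$-greedy partition is determined by the truth values of finitely many comparisons
\[
  a_r-a_i\le h_\theta(i,r-1)=D(i,r-1)-(1-\theta)C(i,r-1),
  \qquad 1\le i<r\le n .
\]
The numbers $D(i,r-1)$, $C(i,r-1)$ and $a_r-a_i$ do not depend on $\theta$. So each comparison is an affine inequality in $\theta$, and its truth set is $\{\theta:\theta C(i,r-1)\ge a_r-a_i-D(i,r-1)+C(i,r-1)\}$. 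Since $C\ge 0$, this set is either empty, all of $[0,1]$, or a closed up-ray $[\theta_{i,r},1]$ with a single threshold $\theta_{i,r}$.

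Let $\Lambda$ be the finite set of all thresholds $\theta_{i,r}$ lying in $[0,1]$; it has at most $\binom n2$ elements. On each connected component of $[0,1]\setminus\Lambda$, every comparison has a constant truth value. The greedy construction (longest connected prefix, then recursion on the suffix) reads only these truth values, so $P_\theta$ is constant on each component. This gives the first assertion. The arrival-first convention only affects behaviour at the thresholds themselves, and those have been removed.

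For the second assertion, take $\theta_1<\theta_2$ in two different components. Lemma~\ref{lem:theta-coarsening} says every block of $P_{\theta_2}$ is a union of consecutive blocks of $P_{\theta_1}$, so $P_{\theta_2}$ is a coarsening of $P_{\theta_1}$. If the two partitions differ, then some $\theta_2$-block must contain at least two $\theta_1$-blocks, so the coarsening is strict. Consequently the partitions along the increasing sequence of components form a chain that never refines, and the number of blocks is nonincreasing and strictly drops at each genuine change. This also shows, independently of $\Lambda$, that there are at most $n-1$ genuine changes.

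The one point needing care is the claim that the greedy partition depends only on the comparison truth values. In particular, the block boundaries must be chosen by a rule that is fixed given those values, namely the first failing $r$ after each start $i$. This holds by construction, so I expect no real obstacle. The breakpoint values themselves, where weak inequalities hold with equality, are exactly the removed set. Their one-sided behaviour is left to the later analysis using weakly $\theta$-greedy partitions.
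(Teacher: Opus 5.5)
Your proposal is correct and follows essentially the same route as the paper: each acceptance test $a_r-a_i\le h_\theta(i,r-1)$ is affine in $\theta$ with nonnegative slope $C(i,r-1)$, so it changes truth value at most once, and finitely many such tests fix the greedy run away from finitely many thresholds. The strict-coarsening claim then follows from Lemma~\ref{lem:theta-coarsening}, and your extra remarks (the explicit up-ray truth sets and the bound of at most $n-1$ genuine changes) are correct refinements of that same argument.
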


\begin{proof}
  Every possible acceptance test has the form
  \begin{equation}
    \label{eq:affine-acceptance-test}
    a_k-a_i
    \le
    D(i,k-1)-C(i,k-1)+\theta C(i,k-1)
  \end{equation}
  for some $i<k$.  If $C(i,k-1)>0$, equality in
  \eqref{eq:affine-acceptance-test} occurs at most one value of $\theta$;
  if $C(i,k-1)=0$, the test is independent of $\theta$.  There are only
  finitely many index pairs, so away from their equality values all possible
  tests, and hence the complete greedy run, are unchanged.  The coarsening
  assertion is Lemma~\ref{lem:theta-coarsening}.
\end{proof}

For an interior breakpoint $z$, choose $\varepsilon>0$ so that there is no
other breakpoint in $(z-\varepsilon,z+\varepsilon)$.  Let $P_{z^-}$ and
$P_{z^+}$ denote the constant partitions on the parameter intervals
$(z-\varepsilon,z)$ and $(z,z+\varepsilon)$, respectively, and define
\[
  Q(z^-)=\sum_{B\in P_{z^-}}h_z(B),
  \qquad
  Q(z^+)=\sum_{B\in P_{z^+}}h_z(B).
\]

\begin{lemma}[Nonnegative breakpoint jumps]
  \label{lem:q-nonnegative-jumps}
  At every interior breakpoint $z$,
  \begin{equation}
    \label{eq:q-nonnegative-jump}
    Q(z+)\ge Q(z-).
  \end{equation}
\end{lemma}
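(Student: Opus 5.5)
The plan is to show that $P_{z^+}$ is a coarsening of $P_{z^-}$ in which every block of $P_{z^-}$ is weakly $z$-greedy within the block of $P_{z^+}$ containing it, and then to apply Corollary~\ref{cor:greedy-superadditivity} at the parameter $z$ itself.

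\emph{Step 1: coarsening.} By Lemma~\ref{lem:theta-coarsening}, applied with $\theta_1\in(z-\varepsilon,z)$ and $\theta_2\in(z,z+\varepsilon)$, every block $B^+\in P_{z^+}$ is a union of consecutive blocks $B_1,\ldots,B_m$ of $P_{z^-}$.

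\emph{Step 2: weak $z$-greediness.} We show that $B_1,\ldots,B_m$ form a weakly $z$-greedy partition of $B^+$. There are two requirements.

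First, each $B_\ell$ must be $z$-connected. It is $\theta$-connected for every $\theta$ slightly below $z$, so each acceptance test $a_r-a_i\le h_\theta(i,r-1)$ holds for those $\theta$. The right-hand side is affine in $\theta$, hence continuous, so letting $\theta\uparrow z$ gives the test at $\theta=z$. Monotonicity of $h_\theta$ in $\theta$ would also suffice here.

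Second, each nonfinal boundary must satisfy the weak cut inequality at $z$. At the boundary between $B_\ell=[i,k-1]$ and $B_{\ell+1}$, the strict rejection \eqref{eq:theta-greedy-cut} holds for $\theta<z$ close to $z$, namely $a_k-a_i>h_\theta(i,k-1)$. Letting $\theta\uparrow z$ and using continuity gives $a_k-a_i\ge h_z(i,k-1)$.

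\emph{Step 3: conclusion.} Corollary~\ref{cor:greedy-superadditivity} at $\theta=z$ yields $h_z(B^+)\ge\sum_\ell h_z(B_\ell)$. Summing over all $B^+\in P_{z^+}$ gives $Q(z^+)\ge Q(z^-)$, since the definitions of $Q(z^\pm)$ both evaluate $h_z$ on the respective partitions.

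The main obstacle is Step 2, specifically making the limits legitimate. The partitions are constant only on the open parameter intervals, so the greedy conditions must be passed to the endpoint $z$ using one-sided limits. This is exactly why the weakly greedy notion, with $\ge$ at boundaries, was introduced.

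One technicality needs care: $z$-connectedness of the $P_{z^-}$ blocks requires $\theta$-connectedness for all $\theta<z$ near $z$. This holds because $P_\theta$ is constant on $(z-\varepsilon,z)$ by Lemma~\ref{lem:finite-parameter-decomposition}, so the same block list is greedy throughout that interval. Note also that only the $z^-$ side is needed for the greedy conditions; the $z^+$ side enters only through the coarsening relation of Step 1.
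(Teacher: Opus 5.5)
Your proposal is correct and follows the paper's proof essentially step for step: coarsening via Lemma~\ref{lem:theta-coarsening}; left limits of the internal acceptance inequalities and of the strict rejection inequalities, which make the old blocks a weakly $z$-greedy partition of each new block; then Corollary~\ref{cor:greedy-superadditivity} at $\theta=z$, summed over the blocks of $P_{z^+}$. Your observation that constancy of $P_\theta$ on $(z-\varepsilon,z)$ is what justifies these one-sided limits is exactly the point the paper leaves implicit.
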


The jump proof follows four steps.  The partition immediately to the right is
a coarsening of the partition to the left.  Within each new block, the old
blocks form a weakly $z$-greedy partition.  Greedy superadditivity at $z$ then
shows that the new $h_z$ term dominates the sum of the old terms.  Summing this
comparison over new blocks proves that $Q$ jumps upward.

\begin{proof}
  By coarsening, every block $G\in P_{z^+}$ is a union of consecutive blocks
  $B_1,\ldots,B_s\in P_{z^-}$.  At
  parameter $z$, every old block remains $z$-connected: take the left limit
  of each of its internal acceptance inequalities.  Similarly, every old
  rejected boundary satisfies the weak cut inequality at $z$.  Indeed, the
  strict rejection inequality to the left becomes weak after taking the
  limit.

  Hence the old blocks form a weakly $z$-greedy partition within $G$.
  Corollary~\ref{cor:greedy-superadditivity} yields
  \[
    h_z(G)\ge\sum_{q=1}^{s}h_z(B_q).
  \]
  Summing over all new blocks proves~\eqref{eq:q-nonnegative-jump}.  If
  several boundaries disappear simultaneously, all old blocks absorbed by
  the same new block are handled in this single application.
\end{proof}

\begin{lemma}[Endpoint bounds]
  \label{lem:q-endpoints}
  The one-sided limits of the potential satisfy
  \begin{equation}
    \label{eq:q-endpoints}
    Q(0+)\ge0,
    \qquad
    Q(1-)\le\OPT(I).
  \end{equation}
\end{lemma}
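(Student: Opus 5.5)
We handle the two one-sided limits separately, since each reduces to an inequality already established for a single parameter value.

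\emph{The bound $Q(0+)\ge 0$.} By Lemma~\ref{lem:finite-parameter-decomposition}, $P_\theta$ equals a fixed partition $P_{0^+}$ for all $\theta$ in some interval $(0,\varepsilon)$. On that interval, $Q(\theta)=\sum_{B\in P_{0^+}}\bigl(D(B)-(1-\theta)C(B)\bigr)$ is affine in $\theta$. We read $Q(0+)$ as its limit as $\theta\downarrow0$, namely $\sum_{B\in P_{0^+}}\bigl(D(B)-C(B)\bigr)=\sum_{B}\Env(B)$. If $Q(0+)$ is instead defined by evaluating $h_0$ on $P_{0^+}$, as in the breakpoint convention, the value is identical. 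Each term is nonnegative by Lemma~\ref{lem:offline-basic-bounds}, since $C(B)\le D(B)$. Hence $Q(0+)\ge0$.

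\emph{The bound $Q(1-)\le\OPT(I)$.} Let $P_{1^-}$ be the constant partition on $(1-\varepsilon,1)$. Then
\[
  Q(1-)=\lim_{\theta\uparrow1}\sum_{B\in P_{1^-}}h_\theta(B)=\sum_{B\in P_{1^-}}D(B),
\]
because $h_1=D$. We need $\sum_{B\in P_{1^-}}D(B)\le D(1,n)=\OPT(I)$, using Theorem~\ref{thm:offline-dp}. This follows from Corollary~\ref{cor:greedy-superadditivity} at $\theta=1$, once we check that $P_{1^-}$ is weakly $1$-greedy on $[1,n]$.

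\emph{Verifying that $P_{1^-}$ is weakly $1$-greedy.} For all $\theta<1$ close to $1$, each block of $P_{1^-}$ is $\theta$-connected, so $a_r-a_i\le h_\theta(i,r-1)$ holds at every internal epoch. Every nonfinal boundary $k$ satisfies $a_k-a_i>h_\theta(i,k-1)$. Both $h_\theta(i,r-1)$ and $h_\theta(i,k-1)$ are affine, hence continuous, in $\theta$. Letting $\theta\uparrow1$ therefore gives $1$-connectedness of every block and the weak cut inequality $a_k-a_i\ge h_1(i,k-1)$ at every boundary, which is the same limiting argument as in Lemma~\ref{lem:q-nonnegative-jumps}. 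Applying the corollary with $h_1=D$ yields $D(1,n)\ge\sum_{B\in P_{1^-}}D(B)$, which completes the proof.

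\emph{Main subtlety.} We cannot simply use $P_1$ and Theorem~\ref{thm:deterministic-upper}. At $\theta=1$ itself the partition may differ from $P_{1^-}$ (it may be coarser), so we need superadditivity for the left-limit partition. The weak form of the greedy condition is what makes this possible.
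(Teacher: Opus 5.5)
Your proof is correct, and the $Q(0+)$ half is the same as the paper's. For $Q(1-)\le\OPT(I)$ you take a more direct route than the paper. You let $\theta\uparrow1$ in the acceptance and strict rejection inequalities, which shows that the left-limit partition $P_{1^-}$ is itself weakly $1$-greedy on all of $[1,n]$. A single application of Corollary~\ref{cor:greedy-superadditivity} then gives $D(1,n)\ge\sum_{B\in P_{1^-}}D(B)$.

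The paper instead goes through $P_1$. It invokes Lemma~\ref{lem:theta-coarsening} to see that $P_1$ coarsens $P^-$, and applies greedy superadditivity separately inside each block $G\in P_1$ to obtain $Q(1)\ge Q(1-)$. It then uses the identity $Q(1)=\sum_{B\in P_1}D(B)=D(1,n)$ from the proof of Theorem~\ref{thm:deterministic-upper}.

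Your version needs neither the coarsening lemma nor the concatenation half of that identity, so it is shorter and uses fewer ingredients. The paper's version treats $\theta=1$ as one more nonnegative breakpoint jump, in parallel with Lemma~\ref{lem:q-nonnegative-jumps}. It also records the extra fact $Q(1)\ge Q(1-)$, which the integral bound does not need. Your observation that $P_1$ may be strictly coarser than $P_{1^-}$ is the point both arguments must handle, and your limit argument handles it cleanly.
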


\begin{proof}
  The basic offline lower bound yields $D(B)\ge C(B)$ for every interval.
  Hence every summand $h_0(B)=D(B)-C(B)$ is nonnegative, proving the first
  inequality.

  At $\theta=1$, the proof of Theorem~\ref{thm:deterministic-upper} showed
  \[
    Q(1)
    =\sum_{B\in P_1}D(B)
    =D(1,n)
    =\OPT(I).
  \]
  Choose $\varepsilon>0$ so that $P_\theta$ is constant on
  $(1-\varepsilon,1)$, and denote this left-limit partition by $P^-$.  By
  Lemma~\ref{lem:theta-coarsening}, $P_1$ is a coarsening of $P^-$.  Taking
  $\theta$ upward to $1$ in the acceptance inequalities internal to the
  blocks of $P^-$ and in their strict rejected-boundary inequalities shows
  that, inside every block $G\in P_1$, its constituent $P^-$ blocks form a
  weakly $1$-greedy partition.  Corollary
  \ref{cor:greedy-superadditivity} therefore gives
  \[
    h_1(G)
    \ge
    \sum_{\substack{B\in P^-\\B\subseteq G}}h_1(B).
  \]
  Summing over $G\in P_1$ yields $Q(1)\ge Q(1-)$.  In particular, this
  argument allows $P_1$ to be a strict coarsening of $P^-$ when an endpoint
  equality is accepted.  Since $Q(1)=\OPT(I)$, we conclude that
  $Q(1-)\le\OPT(I)$.
\end{proof}

\subsection{Randomized upper bound}
\label{subsec:randomized-upper}

\begin{theorem}[Global-Shift upper bound]
  \label{thm:randomized-upper}
  For every finite input $I$ fixed independently of the sampled parameter,
  the DP-Envelope Global-Shift algorithm satisfies
  \begin{equation}
    \label{eq:randomized-upper}
    \E[\ALG(I)]\le\frac{e}{e-1}\OPT(I).
  \end{equation}
  In particular, it is $e/(e-1)$-competitive against an oblivious adversary.
\end{theorem}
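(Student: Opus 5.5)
The plan is to write $\E[\ALG(I)]=\int_0^1 f(\theta)\ALG_\theta(I)\,d\theta$ and evaluate this integral through the potential $Q$ from \eqref{eq:q-potential}. The empty input is trivial, so fix a nonempty $I$. By Lemma~\ref{lem:finite-parameter-decomposition}, let $0=z_0<z_1<\cdots<z_m=1$ be the breakpoints. On each open interval $(z_{\ell-1},z_\ell)$ the partition $P_\theta$ is a fixed partition $P^{(\ell)}$. Breakpoints form a finite set and do not affect the integral, and $\ALG_\theta$ is bounded, so it suffices to work piece by piece.

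On one piece, \eqref{eq:q-potential} shows that $Q(\theta)=\sum_{B\in P^{(\ell)}}\bigl(D(B)-C(B)\bigr)+\theta\sum_{B\in P^{(\ell)}}C(B)$ is affine in $\theta$. Its derivative is $Q'(\theta)=\sum_{B}C(B)$. Comparing with \eqref{eq:fixed-theta-alg}, we get $Q(\theta)+Q'(\theta)=\sum_B\bigl(D(B)+\theta C(B)\bigr)=\ALG_\theta$. Hence $\frac{d}{d\theta}\bigl(e^\theta Q(\theta)\bigr)=e^\theta\ALG_\theta$ on the piece, and
\[
  \int_{z_{\ell-1}}^{z_\ell}e^\theta\ALG_\theta\,d\theta
  =e^{z_\ell}Q(z_\ell^-)-e^{z_{\ell-1}}Q(z_{\ell-1}^+),
\]
where the one-sided limits are the affine extensions. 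These agree with the definitions of $Q(z^\pm)$ before Lemma~\ref{lem:q-nonnegative-jumps}, because evaluating the fixed partition's sum at $h_z$ is exactly the continuous extension.

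Summing over the pieces telescopes. The result is $e\,Q(1^-)-Q(0^+)$ minus $\sum_{\ell=1}^{m-1}e^{z_\ell}\bigl(Q(z_\ell^+)-Q(z_\ell^-)\bigr)$. Lemma~\ref{lem:q-nonnegative-jumps} makes each subtracted jump nonnegative. Lemma~\ref{lem:q-endpoints} gives $Q(0^+)\ge0$ and $Q(1^-)\le\OPT(I)$. Together these yield $\int_0^1e^\theta\ALG_\theta\,d\theta\le e\,\OPT(I)$. Dividing by $e-1$ gives \eqref{eq:randomized-upper}.

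The main point to check is measurability of $\theta\mapsto\ALG_\theta$ and the legitimacy of the expectation. Both follow because the map is piecewise affine with finitely many pieces. Also, the realized run with sampled $\Theta$ is exactly $\DPEnv(\Theta)$ on the fixed input, since the input is oblivious to $\Theta$. The substantive difficulties, coarsening and jump signs, are already handled by earlier lemmas. The only delicate point left is matching the endpoint pieces: the first piece starts at $Q(0^+)$ and the last ends at $Q(1^-)$, including the case $m=1$ with no interior breakpoints.
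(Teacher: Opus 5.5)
Your proposal is correct and follows the paper's proof essentially step for step. You establish $Q+Q'=\ALG_\theta$ on breakpoint-free pieces, integrate $\frac{d}{d\theta}\bigl(e^\theta Q\bigr)=e^\theta\ALG_\theta$ piecewise so that it telescopes to $eQ(1^-)-Q(0^+)$ minus the weighted jumps, and then invoke Lemmas~\ref{lem:q-nonnegative-jumps} and~\ref{lem:q-endpoints}; your observation that the affine one-sided limits coincide with the paper's definitions of $Q(z^\pm)$ is exactly the identification the paper relies on.
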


\begin{proof}
  The empty input has zero cost, so assume that $I$ is nonempty.  On an open
  parameter interval containing no breakpoint, $P_\theta$ is fixed and
  \eqref{eq:q-potential} gives
  \begin{equation}
    \label{eq:q-derivative}
    Q'(\theta)=\sum_{B\in P_\theta}C(B).
  \end{equation}
  Equations~\eqref{eq:fixed-theta-alg}, \eqref{eq:q-potential}, and
  \eqref{eq:q-derivative} imply
  \[
    Q(\theta)+Q'(\theta)=\ALG_\theta(I).
  \]
  Therefore, between breakpoints,
  \begin{equation}
    \label{eq:weighted-potential-derivative}
    \frac{d}{d\theta}\bigl(e^\theta Q(\theta)\bigr)
    =e^\theta\ALG_\theta(I).
  \end{equation}

  Since there are only finitely many breakpoints,
  $\theta\mapsto\ALG_\theta(I)$ is piecewise affine and bounded, hence
  Riemann and Lebesgue integrable.  Changing its values at the breakpoints
  does not affect the integral.
  Integrate~\eqref{eq:weighted-potential-derivative} separately on the
  finitely many breakpoint-free intervals.  Writing the breakpoints in
  increasing order and summing the resulting one-sided endpoint values gives
  a telescoping sum.  If $\mathcal Z$ is the set of interior breakpoints, this
  is the exact identity
  \begin{equation}
    \label{eq:piecewise-integral-identity}
    \int_0^1 e^\theta\ALG_\theta(I)\,d\theta
    =eQ(1-)-Q(0+)
     -\sum_{z\in\mathcal Z}e^z\bigl(Q(z+)-Q(z-)\bigr).
  \end{equation}
  The jump terms have a minus sign because the ordinary integrals over the
  smooth pieces do not include the upward jumps of $e^\theta Q(\theta)$.
  Lemmas~\ref{lem:q-nonnegative-jumps} and~\ref{lem:q-endpoints} now imply
  \begin{equation}
    \label{eq:weighted-alg-upper}
    \int_0^1 e^\theta\ALG_\theta(I)\,d\theta
    \le e\OPT(I).
  \end{equation}
  Finally, using the density in~\eqref{eq:theta-distribution},
  \[
    \E[\ALG(I)]
    =\frac{1}{e-1}\int_0^1e^\theta\ALG_\theta(I)\,d\theta
    \le\frac{e}{e-1}\OPT(I).
  \]
\end{proof}

\begin{remark}[Why ties and the global parameter matter]
  \label{rem:randomized-scope}
  The distribution of $\Theta$ is continuous, so the values of the algorithm
  at finitely many breakpoints have measure zero in the expectation.
  Nevertheless, arrival-first ties are used in the weak-limit jump argument
  and at $\theta=1$.  Moreover, the proof fixes one input and integrates its
  entire family $P_\theta$.  It therefore neither analyzes independent
  resampling between busy periods nor applies to an adaptive adversary that
  chooses future arrivals after observing service times.
\end{remark}

\begin{corollary}[Optimal randomized ratio]
  \label{cor:randomized-tight}
  Against an oblivious adversary, the optimal randomized competitive ratio
  on every nondegenerate rooted tree is exactly
  \[
    \frac{e}{e-1}.
  \]
\end{corollary}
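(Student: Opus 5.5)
The upper bound $e/(e-1)$ is Theorem~\ref{thm:randomized-upper}, so it remains to prove the matching lower bound on every nondegenerate tree. I will reduce to a single node $v$ with $c=w(P(\rho,v))>0$. By scaling time, assume $c=1$. If every request sits at $v$, any service clearing a nonempty batch pays at least $1$ plus the batch's maximum wait. Shrinking the service to $P(\rho,v)$ is never worse, so the tree instance is exactly fixed-setup batching with per-batch maximum delay. I will then apply Yao's principle: it suffices to build a distribution over finite inputs on which every deterministic online algorithm has expected cost at least $(e/(e-1)-\varepsilon)$ times the expected optimum, for every $\varepsilon>0$. The paper promises a finite-support distribution, so I will aim for an exactly analyzable finite family.

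The hard instances are as follows. The first request arrives at time $0$. The adversary then places a second request at a random time $T$, drawn from a discretized truncated exponential-type distribution on a grid $\{k/m : 0\le k\le m\}$, or places no second request at all with some residual probability. A cleaner variant, mirroring the classical TCP ski-rental analysis, uses a stream of requests arriving densely up to a random stopping time $T\in[0,1]$. Under this variant, $\OPT$ equals $1+T$ if one service at time $T$ is used, or roughly a second setup otherwise. Any deterministic algorithm facing the first request, before anything else happens, is characterized by the time $x$ at which it would first serve, given the arrivals seen so far. The stream is designed so that the algorithm gains no information except "arrivals continue", so $x$ is effectively one number. If $x<T$, the algorithm pays $1+x$ and then must pay at least one more setup for later arrivals. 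If $x\ge T$, it pays about $1+T$ plus the delay incurred while waiting.

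The key step is to choose the law of $T$ with density proportional to $e^{t}$ on $[0,1]$, plus an atom, as the discrete analogue of the $e^\theta/(e-1)$ density. With this choice the expected ratio $\E[\ALG_x]/\E[\OPT]$ becomes independent of $x$ and equal to $e/(e-1)$ in the continuum limit. I expect this to follow from the same integrating-factor identity as before, now in dual form: the adversary's density solves $g'=g$. I will then discretize to $m$ grid points and verify that the ratio is at least $e/(e-1)-O(1/m)$, which yields the bound for any ratio below $e/(e-1)$ and hence exactly $e/(e-1)$ by combining with the upper bound.

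I expect the main obstacle to be the precise accounting for the per-batch maximum objective. Younger requests join a batch for free, so the adversary's continued arrivals must genuinely force either a long wait or an extra setup. I also need the offline cost on each scenario to be computed exactly, using the dynamic program of Theorem~\ref{thm:offline-dp} on these small instances. If the single-stopping-time family does not balance exactly, my fallback is to write the finite zero-sum game (algorithm chooses $x$ on the grid, adversary chooses $T$) and exhibit the adversary's optimal mixed strategy by solving the triangular indifference equations. These equations have a geometric-type solution whose value tends to $e/(e-1)$.
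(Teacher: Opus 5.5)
\textbf{There is a genuine gap in the lower bound.} The upper-bound citation and the reduction to a single node $v$ are fine. The problem is the hard distribution: every family you propose is a \emph{single-stopping-time} game, and under per-batch maximum delay such a game cannot reach $e/(e-1)$.

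\emph{The dense-stream variant.} By Lemma~\ref{lem:fixed-node-formula}, a stream with gaps $\eta<1$ ending at $T$ has $\OPT=1+T$, because splitting the stream only adds a setup. An online algorithm learns that the stream has stopped after one missing gap. So the rule ``wait while arrivals continue, then serve'' pays $1+T+\eta$, and its ratio tends to $1$. The waiting time $x$ is therefore not ``effectively one number''.

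\emph{The two-request variant} (first request at $0$, second at $G$ or none). Here both $\ALG$ and $\OPT$ pay the final setup, which dilutes the ski-rental gain. Write $\mu=\E\min\{G,1\}$ and $p=\Pr[G<\infty]$, so that $\E[\OPT]=\mu+p\ge 1$.
\begin{itemize}
\item Serving at $x=0$ gives ratio at most $1+p$.
\item The threshold with density $e^x/(e-1)$ gives ratio at most $\bigl(\tfrac{e}{e-1}\mu+p\bigr)/(\mu+p)\le 1+\tfrac{1}{(e-1)(1+p)}$.
\end{itemize}
Balancing the two bounds shows the best deterministic response has ratio at most about $1.41<e/(e-1)$ for \emph{every} law of $G$. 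As a concrete check, one round of the paper's own distribution gives only $(1+q_m)/(2q_m)\approx 1.29$.

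Your fallback, the triangular zero-sum game over $x$ and $T$, is still a one-round game and has the same defect. Separately, the equalizing adversary law is not density $\propto e^{t}$. It has survival $e^{-g}$ (it solves $S'=-S$) with an atom $e^{-1}$ on ``no further arrival''. The $e^\theta/(e-1)$ density belongs to the algorithm's side.

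\textbf{The missing idea is repetition, which amortizes the unavoidable final setup.} The paper chains up to $K$ requests at $v$. After each request it draws an independent gap from $\Pr[G>jc/m]=(1-1/m)^j$, terminating with probability $(1-1/m)^m$. It charges each request either the next gap (if the request is not last in its online batch) or $c$ plus the residual wait (if last); these charges sum to at most $\ALG$. The discrete equalizer (Lemma~\ref{lem:discrete-equalizer}) shows that every non-final request has conditional expected charge at least $c$ against every deterministic strategy. Meanwhile $\OPT=\sum_i\min\{G_i,c\}$, with expectation $cq_m$ per nonforced round. The resulting ratio $A_{m,K}/B_{m,K}$ tends to $1/q_m$ only as $K\to\infty$, and then to $e/(e-1)$ as $m\to\infty$. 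Without this multi-round structure and the per-request accounting, your construction does not establish the matching lower bound.
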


\begin{proof}
  Appendix~\ref{app:known-lower-bounds} proves the matching lower bound on
  the fixed-node restriction.  Since that restriction embeds at any node of
  positive root-path cost, the claim follows from the appendix lower bound
  and Theorem~\ref{thm:randomized-upper}.
\end{proof}

\section{Static Submodular Service Systems}
\label{sec:extensions}

The tree analysis is now complete; this section identifies exactly which
spatial properties were used.  The tree enters the upper-bound analysis only
through the cost of jointly serving a set of requests.  The resulting
theorem shows that tree geometry is one realization of a more general static
submodular service principle; no metric representation is required.

\subsection{Service systems}

Let $\mathcal U$ denote a fixed universe of possible request occurrences.  A
\emph{static service system} specifies which service objects are feasible for
each finite request set $A\subseteq\mathcal U$.  Let
\[
  \kappa(A)
\]
denote the minimum spatial cost of a service object that serves all requests
in $A$.  We impose the following conditions.

\begin{enumerate}
  \item The empty set can be served at zero cost, so
  $\kappa(\varnothing)=0$.
  \item If $A\subseteq B$, then $\kappa(A)\le\kappa(B)$.
  \item The function $\kappa$ is submodular:
  \[
    \kappa(A)+\kappa(B)
    \ge
    \kappa(A\cup B)+\kappa(A\cap B).
  \]
  \item For every finite nonempty $A$, a minimum-cost service object can be
  executed at any time $t\ge\max_{r\in A}a(r)$ whenever the current pending
  set is exactly $A$.  It clears all requests in $A$ and incurs spatial cost
  exactly $\kappa(A)$, in addition to the batch-delay charge
  $\max_{r\in A}(t-a(r))$.  There are no further time-dependent costs,
  eligibility constraints, or side effects, and execution does not change
  the cost or feasibility of later service objects.
\end{enumerate}

The last condition is the realizability and execution contract.  Both the
offline block schedule and DP-Envelope invoke a service only when its target
set is the complete pending set: earlier blocks have been cleared and later
requests have not yet arrived.  The condition also rules out models in which
a service changes the cost of future services, even if each individual batch
admits a well-defined optimization problem.  When computational efficiency is
part of the claim, we assume a polynomial-time
\emph{value-and-realization oracle}: on input $A$, it returns both
$\kappa(A)$ and an executable minimum-cost service object.  The algorithms
query this oracle only on sets of requests already revealed.

For an interval of arrival epochs, define
\[
  C(i,j)=\kappa(R[i,j]).
\]
Submodularity implies both subadditivity and the interval quadrangle
inequality
\[
  C(i,p-1)+C(k,j)
  \ge C(i,j)+C(k,p-1)
  \qquad (i\le k\le p\le j).
\]
These are exactly the spatial inequalities used in the preceding sections.

\subsection{Weighted shared-resource coverage}

A concrete nonmetric example is obtained from a finite set $\mathcal E$ of
resources with nonnegative weights $w_e$.  Each request $r$ requires a set
$\mathcal R(r)\subseteq\mathcal E$ of resources.  A batch $A$ is served by
activating the union of its requirements, at cost
\begin{equation}
  \label{eq:resource-coverage}
  \kappa(A)
  =\sum_{e\in\mathcal E}w_e\,
    \mathbf 1\!\left[e\in\bigcup_{r\in A}\mathcal R(r)\right].
\end{equation}
This weighted coverage function is normalized, nonnegative, nondecreasing,
and submodular, and it is statically realizable by activating exactly the
displayed union.  Rooted-tree aggregation is a structured instance of the
same construction: the resources are the tree edges, and request $r$
requires the edges on $P(\rho,v(r))$.

\begin{theorem}[Static submodular service theorem]
  \label{thm:submodular-framework}
  Consider a realizable static service system whose minimum joint service
  cost is nonnegative, normalized, nondecreasing, and submodular.  Under the
  per-batch maximum-delay objective, the following statements hold.
  \begin{enumerate}
    \item The offline optimum admits a consecutive-arrival-block normal form
    and satisfies the recurrence
    \[
      D(i,j)=
      \min_{i\le p\le j}
      \left\{
        D(i,p-1)+C(p,j)+a_j-a_p
      \right\}.
    \]
    \item The algorithm $\DPEnv(1)$ is deterministically
    $2$-competitive.
    \item Sampling one global parameter with density
    $e^\theta/(e-1)$ and running $\DPEnv(\theta)$ is
    $e/(e-1)$-competitive against an oblivious adversary.
  \end{enumerate}
  If the system admits a polynomial-time value-and-realization oracle, then
  the offline and online algorithms run in polynomial time as well.
\end{theorem}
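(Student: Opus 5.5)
The plan is an audit rather than a new argument: I will re-run the tree proofs of Sections~\ref{sec:offline}--\ref{sec:randomized} with the abstract $\kappa$ in place of the rooted footprint. For each lemma I will record exactly which property of $\kappa$ it invokes. The tree enters only through Lemma~\ref{lem:tree-submodular} and its corollaries, plus the realization step in Proposition~\ref{prop:consecutive-normal-form}. So I first re-derive those consequences from conditions 1--3 alone. Subadditivity follows from submodularity with $\kappa(A\cap B)\ge 0$, where nonnegativity comes from monotonicity and normalization. Diminishing returns (Corollary~\ref{cor:diminishing-returns}) follows from submodularity plus monotonicity, as in its proof. The interval quadrangle inequality (Corollary~\ref{cor:interval-quadrangle}) follows because distinctly labelled occurrences make $R[i,p-1]\cap R[k,j]=R[k,p-1]$.

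For part~1, Lemma~\ref{lem:partition-relaxation} needs only that any service clearing $B$ has spatial cost at least $\kappa(B)$. This is the definition of $\kappa$ as a minimum, together with $t\ge\beta(B)$. Lemma~\ref{lem:merge-overlap} uses subadditivity, and the combinatorial uncrossing in Proposition~\ref{prop:consecutive-normal-form} is unchanged. The one step needing care is realization. In the tree I argued that the footprint clears exactly the current block. Here I instead invoke condition~4 directly: processing blocks left to right, the pending set at time $a_q$ is exactly $R[p,q]$, since earlier blocks are cleared and later epochs arrive strictly after. So a minimum-cost object is executable at cost $\kappa(R[p,q])+a_q-a_p$, and it has no side effects on later blocks. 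Theorem~\ref{thm:offline-dp} and Lemma~\ref{lem:offline-basic-bounds} then follow verbatim.

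For parts~2 and~3, I will check each online lemma in turn. Lemma~\ref{lem:slack-monotone} uses monotonicity and diminishing returns. Lemma~\ref{lem:pending-deadline-invariant} is purely temporal, and Lemma~\ref{lem:online-block-cost} uses condition~4, because the terminal service is issued when the pending set is exactly the busy block. Lemma~\ref{lem:extension-cut} uses monotonicity, subadditivity, the quadrangle inequality and the recurrence. Corollary~\ref{cor:greedy-superadditivity}, Lemma~\ref{lem:theta-coarsening} (via $C\ge0$), Theorem~\ref{thm:deterministic-upper}, Lemmas~\ref{lem:finite-parameter-decomposition}--\ref{lem:q-endpoints} and Theorem~\ref{thm:randomized-upper} then go through word for word, since they refer only to $C$, $D$ and $h_\theta$. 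For polynomial time, I replace the backward edge-marking scan of Theorem~\ref{thm:offline-polytime} and Remark~\ref{rem:envelope-implementation} by $O(n^2)$ value-oracle calls plus one realization call per block. All of these queries concern revealed requests only.

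The main obstacle is checking that no tree-specific fact is hidden in these proofs. The points to watch are the exact-pending-set claim at service times, the preprocessing of zero-cost requests, and the $C(i,k-1)=0$ case in Lemma~\ref{lem:finite-parameter-decomposition}. The preprocessing is needed only for the lower bounds, so it can be dropped here. The zero-cost case is already handled generically in that lemma. I expect the execution contract of condition~4 to be the only genuinely new ingredient.
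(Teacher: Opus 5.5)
Your proposal is correct and follows essentially the same route as the paper. The paper's proof is likewise an audit: it derives subadditivity (for temporal uncrossing), diminishing returns (for monotonicity of the envelope slack) and the interval quadrangle inequality (for Extension/Cut, greedy superadditivity and coarsening) from the hypotheses, invokes the realizability contract as a separate assumption to execute offline blocks and DP-Envelope busy-block services on exactly the pending set, and obtains polynomial time from value queries on revealed intervals plus one realization query per served block.
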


\begin{proof}
  Normalization, nonnegativity, and submodularity imply subadditivity, which
  is the property used by the temporal uncrossing argument of
  Section~\ref{sec:offline}.  This proves the consecutive-block normal form
  and the recurrence.

  The online proof uses two further consequences of submodularity.  Diminishing
  returns shows that the envelope slack $D-C$ does not decrease when an epoch
  is appended.  Submodularity on overlapping epoch intervals yields the
  quadrangle inequality above.  The former keeps the online deadline causal;
  the latter proves the Extension and Cut inequalities, greedy
  superadditivity, and coarsening of the parameterized partitions.  The
  deterministic analysis in Section~\ref{sec:deterministic} and the potential
  integration in Section~\ref{sec:randomized} therefore carry over verbatim.

  Static realizability is a separate assumption: it does not follow from
  submodularity.  For every offline block, all earlier blocks have already
  been cleared and all later requests arrive after its execution time, so the
  pending set is exactly that block.  Likewise, each DP-Envelope service is
  issued on its complete pending busy block.  The execution contract therefore
  realizes every service used in the preceding proofs, and its no-side-effect
  clause preserves the cost and feasibility of later service objects.

  Every call made by the algorithms is a value query on a revealed interval
  of requests.  Whenever a recovered offline block or an online block is
  served, one realization query returns its executable minimum-cost service
  object.  Hence a polynomial-time value-and-realization oracle leads to
  polynomial implementations.
\end{proof}

The constants in Theorem~\ref{thm:submodular-framework} cannot be improved
over the complete class.  A fixed positive setup cost for every nonempty
request set is normalized, nondecreasing, and submodular; it contains the
single-location lower-bound instances recorded in Appendix
\ref{app:known-lower-bounds}.  This class-wide observation does not assert
tightness for each individual service system.

\section{Conclusion}
\label{sec:conclusion}

We have determined the exact offline solvability and optimal online
competitive ratios of multi-level aggregation with one maximum-waiting-time
charge per service batch.  The main conclusion is a collapse of the hierarchy:
on every nondegenerate
rooted tree, DP-Envelope achieves the optimal deterministic ratio $2$, while
Global-Shift achieves the optimal randomized ratio $e/(e-1)$ against an
oblivious adversary.  These constants coincide with the fixed-node ratios,
by the known deterministic lower bound and our randomized lower bound on the
fixed-node restriction.  Thus arbitrary branching is no harder, at the level
of competitive ratio, than a single positive-cost service location.  The
offline optimum also has a consecutive-block normal form and is computable
in polynomial time.

The offline dynamic program plays two roles: it computes the benchmark and
determines when the online algorithm should act.  Submodularity keeps the
resulting timer causal and orders the parameterized greedy partitions by
coarsening.  The deterministic proof uses the endpoint of this chain; the
randomized proof integrates a potential along the full chain, with the
density $e^\theta/(e-1)$ arising from its integrating factor.  The same
argument applies to every realizable static service system with a normalized,
nondecreasing, submodular joint service cost.  This identifies static
submodular sharing, rather than tree geometry, as the common source of the
upper bounds.

One direction for future work is to understand whether Global-Shift can be
made robust against an adaptive adversary.  Another is to characterize the
within-batch delay functions for which temporal uncrossing and the
DP-Envelope partition inequalities remain valid.  These questions require
new structural arguments; neither follows from the service-cost extension
alone.

\section*{Disclosure of AI Assistance}
OpenAI Codex was used extensively throughout the preparation of this manuscript.  
Its assistance included organizing source material, formalizing and adversarially auditing proofs, drafting and revising LaTeX, constructing computational sanity checks, and compiling the paper.  
The research questions and directions, the underlying algorithmic ideas, and the source materials were supplied or developed by the authors and their collaborators.  
The authors reviewed the resulting arguments and manuscript and take full responsibility for all claims and any remaining errors.

\appendix
\section{Fixed-Node Lower Bounds}
\label{app:known-lower-bounds}

This appendix records the fixed-node lower bounds used to certify the
optimality statements in Corollaries~\ref{cor:deterministic-tight} and
\ref{cor:randomized-tight}.  Dooly, Goldman, and Scott proved the deterministic
bound for the same maximum-delay TCP objective~\cite{dooly2001tcp}; we include
a self-contained proof for completeness.  We then give our finite-support
distributional proof of the randomized lower bound.  Since a fixed-node
instance embeds at any node of positive root-path cost, both bounds apply to
every nondegenerate rooted tree.

\begin{lemma}[Fixed-node formula]
  \label{lem:fixed-node-formula}
  Suppose all requests are located at a node $v$ whose root-path cost is
  $c=w(P(\rho,v))>0$, and denote their distinct arrival times by
  $a_1<\cdots<a_N$, where $N\ge1$.  With gaps
  $g_i=a_{i+1}-a_i$,
  \begin{equation}
    \label{eq:fixed-node-opt}
    \OPT=c+\sum_{i=1}^{N-1}\min\{g_i,c\}.
  \end{equation}
\end{lemma}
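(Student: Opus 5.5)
We specialize the exact offline machinery of Section~\ref{sec:offline} to the fixed-node case. Every nonempty request set has coverage cost $\kappa(A)=c$, since each request lies at $v$ and its footprint is $P(\rho,v)$. By Proposition~\ref{prop:consecutive-normal-form}, an optimal schedule serves consecutive blocks of epochs. A block $[p,q]$ costs $c+a_q-a_p$.

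Hence $\OPT$ is the minimum, over partitions of $\{1,\dots,N\}$ into consecutive blocks, of $\sum_{\text{blocks}}\bigl(c+a_q-a_p\bigr)$. We reorganize this sum gap by gap. Each gap $g_i$ with $1\le i\le N-1$ is either internal to some block or is a block boundary. The temporal span of a block is exactly the sum of its internal gaps, so $\sum_{\text{blocks}}(a_q-a_p)$ is the sum of $g_i$ over internal gaps. The number of blocks is one plus the number of boundary gaps. The cost of the partition is therefore
\[
  c+\sum_{i\ \text{internal}} g_i+\sum_{i\ \text{boundary}} c .
\]

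Choosing the partition amounts to choosing, independently for each $i$, whether gap $i$ is a boundary. Any subset of the $N-1$ gaps can be declared boundaries, and this determines a unique consecutive partition. The objective is separable across gaps, so each gap contributes $\min\{g_i,c\}$ at the optimum. The leading $c$ accounts for the first block. This gives \eqref{eq:fixed-node-opt}.

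The optimum is attained by cutting exactly at the gaps with $g_i>c$. Proposition~\ref{prop:consecutive-normal-form} realizes that partition at its ideal cost, which gives the matching upper bound.

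The only point needing care is the bijection between consecutive partitions and subsets of boundary gaps. Once that is in place, the identity $\text{span}=\sum\text{internal gaps}$ follows by telescoping. Equivalently, one could verify by induction that the recurrence of Theorem~\ref{thm:offline-dp} gives $D(1,j)=D(1,j-1)+\min\{g_{j-1},c\}$. The separable argument is cleaner, and we expect no real obstacle.
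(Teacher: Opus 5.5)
Your proposal is correct and follows essentially the same route as the paper: reduce to consecutive blocks via Proposition~\ref{prop:consecutive-normal-form}, note that every nonempty batch at $v$ has spatial cost $c$, and decompose the cost gap by gap into an initial setup $c$ plus an independent choice of $g_i$ (keep) or $c$ (cut) for each gap. Your explicit treatment of the bijection between consecutive partitions and sets of boundary gaps, together with the telescoping of temporal spans, is the same argument the paper compresses into the phrase ``these choices are independent across the gaps.''
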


\begin{proof}
  By Proposition~\ref{prop:consecutive-normal-form}, it suffices to partition
  the arrival sequence into consecutive blocks.  The first block contributes
  one setup cost $c$.  For every adjacent gap $g_i$, keeping its endpoints in
  the same block adds $g_i$ to the temporal spans, whereas cutting between
  them adds one new setup cost $c$.  These choices are independent across
  the gaps, so minimizing each contribution yields~\eqref{eq:fixed-node-opt}.
\end{proof}

\subsection{Deterministic lower bound}

\begin{theorem}[Dooly--Goldman--Scott~\cite{dooly2001tcp}]
  \label{thm:deterministic-lower}
  On every nondegenerate rooted tree, no deterministic online algorithm has
  competitive ratio smaller than $2$.  The lower bound uses requests at one
  fixed node and does not require branching.
\end{theorem}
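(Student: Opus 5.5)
We would prove the bound on the fixed-node restriction: place every request at one node $v$ with $c=w(P(\rho,v))>0$. By scaling time we may take $c=1$. By Lemma~\ref{lem:fixed-node-formula}, the offline optimum is then explicit. Fix a deterministic online algorithm $A$; we will build an adaptive input against it. Adaptivity is allowed in the deterministic setting, since the adversary can simulate $A$.

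\emph{Adversary.} Release one request at time $0$. Thereafter, whenever $A$ serves, release a new request at the same instant, immediately after the service. By the arrival-first convention, a simultaneous arrival would be cleared by that service, so the new request arrives strictly later: at time $t+\varepsilon$ for an arbitrarily small $\varepsilon>0$. We expect the $\varepsilon$ terms to vanish in the limit. Stop after $m$ services, with $m$ large, or stop the first time $A$ waits a long time.

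\emph{Accounting.} Suppose the successive services occur after waits $x_1,x_2,\ldots,x_m$, each measured from the arrival of the single pending request. Then
\[
  \ALG=\sum_{i=1}^{m}(1+x_i).
\]
If $A$ never serves some pending request, its cost is infinite and we are done. In this input the arrival gaps are $g_i=x_i+\varepsilon$. The fixed-node formula therefore gives
\[
  \OPT\le 1+\sum_{i<m}\min\{x_i+\varepsilon,1\}.
\]
Now compare term by term. If $x_i\le1$, then $1+x_i\ge 2x_i$ when this term is charged to $\OPT$'s $x_i$, but that charge alone is too weak. The cleaner route is to compare $A$ with two offline strategies: serve every request individually, costing $m$; or serve everything in one batch, costing $1+\sum_i x_i$ up to $\varepsilon$ terms. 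Taking the better of these two,
\[
  \OPT\le\min\Bigl\{m,\;1+\sum_i x_i\Bigr\}.
\]
Write $X=\sum_i x_i$. Then
\[
  \ALG=m+X\ge 2\min\{m,X\}\ge 2\OPT-2.
\]
The additive $-2$ is the problem, because the paper's competitive ratio is strictly multiplicative.

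\emph{Removing the additive constant.} We handle this with the strict multiplicative definition. For any $\alpha<2$, we need $\ALG>\alpha\OPT$ on some finite instance. Let $m\to\infty$ in the construction above. If $\min\{m,X\}\to\infty$, then $\ALG/\OPT\to$ at least $2$. Otherwise $X$ stays bounded while $m\to\infty$, and then $\ALG/\OPT\ge(m+X)/(1+X)\to\infty$. Either way the ratio exceeds $\alpha$ for some finite $m$, and sending $\varepsilon\to0$ costs only an arbitrarily small error.

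The main obstacle is the per-gap bookkeeping. The clean estimate $\OPT\le\min\{m,1+X\}$ must be justified carefully from Lemma~\ref{lem:fixed-node-formula}. We also need to make sure the $\varepsilon$ offsets, which are forced by arrival-first, do not break the limit argument. Finally, the construction embeds in any nondegenerate tree: use a node $v$ with positive root-path cost.
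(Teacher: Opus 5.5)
Your adversary is the paper's own construction: simulate $A$ with no further arrivals and release the next request a small offset after each service. Bounding $\OPT$ by the two schedules ``serve each request on arrival'' and ``serve everything at the last arrival'' is also legitimate.

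The gap is the step you flag but do not carry out: the offsets. With a constant offset $\varepsilon$ per gap, the one-batch schedule costs $1+\sum_{i<m}(x_i+\varepsilon)$. The term you drop is therefore $(m-1)\varepsilon$, which grows with $m$ and does not vanish as $m\to\infty$. At fixed $\varepsilon$, the inequalities your limit argument uses are false:
\begin{itemize}
\item If $A$ serves each request on arrival, then $X=0$ and $\OPT=1+(m-1)\varepsilon$. The ratio tends to $1/\varepsilon$, not $\infty$.
\item For the algorithm that always waits $1-\varepsilon$, every gap equals $1$, so $\OPT=m$ while $\ALG=m(2-\varepsilon)$. Thus $\ALG\ge 2\OPT-2$ fails once $m\varepsilon>2$. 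The ratio equals $2-\varepsilon$ for every $m$, even though $\min\{m,X\}\to\infty$.
\end{itemize}
``Sending $\varepsilon\to0$'' afterwards does not rescue this. The waits $x_i$, hence $X$ and which branch of your dichotomy occurs, may depend on $\varepsilon$, which $A$ can read off from the arrival gaps. The dichotomy is only meaningful for a fixed offset sequence, where the instances are nested in $m$ and the online $x_i$ do not depend on $m$. So the quantifiers must be ordered: fix $\alpha<2$, fix the offsets, then choose $m$.

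The repair is short, in either of two ways.
\begin{itemize}
\item Take summable offsets $\varepsilon_i$ with $\sum_i\varepsilon_i\le\Delta$, fixed before $m$. Then $\OPT\le\min\{m,1+X+\Delta\}$, and both of your cases go through.
\item More cleanly, drop the limit argument. From $\ALG\ge m+X$ and $\OPT\le\min\{m,1+X+\Delta\}$, split on whether $X\ge m-1-\Delta$. In the second case $(m+X)/(1+X+\Delta)$ is decreasing in $X$. So in both cases $\ALG/\OPT\ge(2m-1-\Delta)/m$, uniformly over all behaviors of $A$.
\end{itemize}
This uniform bound is essentially the paper's $(2N-1)c/(Nc+\Delta)$. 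The paper reaches it through the per-gap estimate $\min\{x+y,c\}\le\min\{x,c\}+y$ and monotonicity in $S=\sum_{i<N}\min\{x_i,c\}$. Because the bound is uniform, $N$ and $\Delta$ can be chosen before the input is generated. Even a constant per-gap $\varepsilon<1$ then suffices, since $\Delta/m\le\varepsilon$.

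Two small corrections. First, $\ALG=\sum_i(1+x_i)$ should be $\ge$, since a service may use a subtree larger than $P(\rho,v)$. Second, rescaling time alone does not change $c$. Either carry $c$ through, or say explicitly that time and cost are both measured in units of $c$.
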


\begin{proof}
  Fix a deterministic algorithm $\mathcal A$, a node $v$ with root-path cost
  $c>0$, and $\varepsilon>0$.  We construct a finite input on $v$ alone.
  Release the first request at $a_1=0$.  Immediately after request $i$
  arrives, with no earlier request pending, simulate $\mathcal A$ on the
  continuation having no further arrivals.  Let $t_i$ denote the first time at
  which it clears the request at $v$, put $x_i=t_i-a_i$, and, if $i<N$, set
  the next arrival to
  \begin{equation}
    \label{eq:lower-arrivals}
    a_{i+1}=t_i+\delta_i,
  \end{equation}
  where every $\delta_i>0$ and
  $\Delta=\sum_{i=1}^{N-1}\delta_i$ will be chosen arbitrarily small.  If
  such a $t_i$ does not exist, the input ending there already makes
  $\mathcal A$ infeasible, so it cannot have a finite competitive ratio.
  Otherwise,~\eqref{eq:lower-arrivals} defines a finite sequence of $N$
  arrivals, with exactly one request pending at each service.

  Every service that clears one of these requests has spatial cost at least
  $c$.  Hence, with
  \[
    S=\sum_{i=1}^{N-1}\min\{x_i,c\},
  \]
  we have
  \begin{equation}
    \label{eq:lower-alg-cost}
    \ALG\ge\sum_{i=1}^{N}(c+x_i)\ge Nc+S.
  \end{equation}
  The arrival gaps are $g_i=x_i+\delta_i$.  Lemma~\ref{lem:fixed-node-formula}
  and $\min\{x+y,c\}\le\min\{x,c\}+y$ imply
  \begin{equation}
    \label{eq:lower-opt-cost}
    \OPT
    \le c+S+\Delta.
  \end{equation}
  Since $0\le S\le(N-1)c$, choosing $\Delta<(N-1)c$ makes
  $(Nc+S)/(c+S+\Delta)$ decreasing in $S$.  Therefore
  \begin{equation}
    \label{eq:deterministic-lower-ratio}
    \frac{\ALG}{\OPT}
    \ge
    \frac{Nc+S}{c+S+\Delta}
    \ge
    \frac{(2N-1)c}{Nc+\Delta}.
  \end{equation}
  Taking $N$ sufficiently large and then $\Delta/c$ sufficiently small makes
  the last expression exceed $2-\varepsilon$.

  Although the construction was described in response to service times, it
  is a valid fixed input against a deterministic algorithm: once
  $\mathcal A$ is fixed, its behavior can be simulated recursively and all
  arrival times in~\eqref{eq:lower-arrivals} can be written down before the
  resulting input is presented.  No random outcome is being observed.
\end{proof}

\subsection{Randomized lower bound}
\label{subsec:randomized-lower}

Fix a node $v$ of positive root-path cost
\[
  c=w(P(\rho,v))>0,
\]
and place every request at $v$.  We construct a finite-support distribution
over finite inputs and analyze every deterministic strategy against that
distribution.

Fix integers $m,K\ge2$, put $\delta=1/m$, and define
\begin{equation}
  \label{eq:lower-grid-distribution}
  g_j=\frac{jc}{m},
  \qquad
  p_j=\delta(1-\delta)^{j-1}
  \quad (j=1,\ldots,m),
  \qquad
  s_m=(1-\delta)^m.
\end{equation}
These probabilities sum to one because
$\sum_{j=1}^{m}p_j=1-s_m$.  Write
\begin{equation}
  \label{eq:q-m}
  q_m=1-s_m=1-\left(1-\frac1m\right)^m.
\end{equation}

The input begins with a request at time zero.  After each of the first
$K-1$ generated requests, independently choose the next gap to be $g_j$ with
probability $p_j$, or terminate the input with probability $s_m$.  If the
$K$th request is generated, terminate there.  The resulting input has at most
$K$ requests and finite support.  Equivalently, the entire gap sequence can
be sampled before the online execution begins, so this is an oblivious input
distribution.

We next establish the one-round equalizer property behind the construction.
For this purpose, fix a realized input and index its requests in arrival order.
Since all requests are at $v$, every useful service clears every request
then pending, so each actual online batch is consecutive in arrival order.
For every actual online batch consisting of requests
$r_p,\ldots,r_q$ and served at time $t$, assign the charges
\begin{equation}
  \label{eq:online-request-accounts}
  \chi_i=a_{i+1}-a_i \quad (p\le i<q),
  \qquad
  \chi_q=c+t-a_q.
\end{equation}
Their sum is $c+t-a_p$, the cost of this batch when the minimum root-to-$v$
path is used.  Since every service clearing $v$ has spatial cost at least
$c$, summing~\eqref{eq:online-request-accounts} over the online batches gives
the pathwise inequality
\begin{equation}
  \label{eq:accounts-below-alg}
  \sum_i\chi_i\le\ALG(I).
\end{equation}
Arrival-first processing ensures that a request cleared by a service at its
arrival time is the last request of that batch and receives charge $c$.

The next distribution is an equalizer for the deterministic choice of the
next service time.  Its decreasing survival probabilities make every grid
threshold incur the same expected charge $c$; between grid points, that
charge only increases.  In the continuum limit the survival profile becomes
exponential, $q_m\to1-e^{-1}$, and the reciprocal $1/q_m$ yields the constant
$e/(e-1)$.

\begin{lemma}[Discrete equalizer]
  \label{lem:discrete-equalizer}
  Condition on any history after arrival-first processing at a newly generated
  request, and suppose that another random transition from
  \eqref{eq:lower-grid-distribution} remains.  For every deterministic online
  strategy, the conditional expected charge of this request is at least
  $c$.
\end{lemma}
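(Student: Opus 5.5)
Fix the history at a newly generated request $r_i$ at time $a_i$, after arrival-first processing, with at least one random transition remaining. The online strategy is deterministic, so the history fixes the time it would clear $r_i$ if no further request arrived. Call it $t^\ast=a_i+x$ with $x\in[0,\infty]$; if $x=\infty$ the strategy is infeasible on the terminating continuation, which has positive probability, and the expected cost is infinite. The point is that whether $r_i$ is the last request of its batch depends only on whether the next arrival comes before or after $t^\ast$. Until the next arrival occurs, the strategy sees no new information and so follows its no-arrival plan.

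The charge $\chi_i$ therefore depends only on $x$ and on the realized transition. If the next gap $g_j$ satisfies $g_j\le x$, then by arrival-first the next request arrives no later than the planned service, $r_i$ is not last, and $\chi_i=g_j$. If $g_j>x$, or the input terminates, then $r_i$ is last and $\chi_i=c+x$. Any earlier service of $r_i$ at time $a_i+x$ also makes $r_i$ last with this charge, so this case analysis covers every strategy. The conditional expectation is then
\[
  \Psi(x)=\sum_{j:\,g_j\le x}p_j g_j+\Bigl(1-\sum_{j:\,g_j\le x}p_j\Bigr)(c+x).
\]
It remains to show $\Psi(x)\ge c$ for every $x\ge0$.

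Since $\Psi$ is increasing in $x$ between consecutive grid points (the set $\{j:g_j\le x\}$ is fixed there), it suffices to check $x=g_k=kc/m$ for $k=0,\ldots,m$. For $x\ge c$ the bound is immediate, because every term is at least $\min\{g_j,c+x\}$ and we already have $\Psi\ge c$ at $x=c$. At $x=g_k$ the survival mass is $(1-\delta)^k$, so the target is
\[
  \sum_{j=1}^{k}\delta(1-\delta)^{j-1}\frac{jc}{m}+(1-\delta)^k c\Bigl(1+\frac km\Bigr)\;\ge\;c .
\]
I would prove this by induction on $k$, showing that the left side is constant (equal to $c$) in $k$. Moving from $k$ to $k+1$ adds $\delta(1-\delta)^k(k+1)c/m$ from the new gap. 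The survival term changes by $(1-\delta)^{k+1}c(1+\tfrac{k+1}{m})-(1-\delta)^k c(1+\tfrac km)=(1-\delta)^k c\bigl[(1-\delta)\tfrac1m-\delta(1+\tfrac km)\bigr]$. With $\delta=1/m$ the total change is
\[
  (1-\delta)^k\frac{c}{m}\Bigl[\frac{k+1}{m}+1-\frac1m-1-\frac km\Bigr]=0 .
\]
The value at $k=0$ is $c$, so the sum equals $c$ for every $k$.

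The main obstacle is the bookkeeping, not the algebra. Two points need care. First, the deterministic strategy's reaction in the case $g_j\le x$ must be argued not to affect $\chi_i$. Second, the tie case $g_j=x$ must be resolved by arrival-first, so that the next request joins the batch and $\chi_i=g_j$. This is exactly why the weak inequality $g_j\le x$ defines the first sum.
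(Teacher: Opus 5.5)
Your proposal is correct and follows essentially the paper's argument: you use the same conditional expected charge $L(x)=\Psi(x)$, note that it is increasing on each interval between grid points, and show that every grid value equals $c$, verifying this by an induction on the closed-form values where the paper computes the same increment as the slope $S_{j-1}$ times $c/m$ minus the downward jump $cp_j$. The only blemish is the separate justification you give for $x\ge c$, which is unnecessary: on $[g_m,\infty)$ the index set $\{j: g_j\le x\}$ is fixed, so your general monotonicity reduction already covers that range.
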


\begin{proof}
  If the new request is cleared at its arrival time, it is the last request
  of its batch and~\eqref{eq:online-request-accounts} assigns it charge $c$.
  Suppose instead that it remains pending.  Let $x>0$ denote the time until the
  strategy's next service that clears $v$ if no further request arrives.  If
  no such service exists, termination has probability $s_m>0$ and leaves the
  request unserved, leading to infinite expected cost.  We may therefore assume
  that $x$ is finite.

  If the sampled finite gap satisfies $G\le x$, the next request arrives
  before any service clearing $v$, or at the same time but is processed first.
  Thus the current request is not last in its batch and its charge is $G$.
  If $G>x$ or the input terminates, the service occurs first and the current
  request is last in its batch, with charge $x+c$.  Let $L(x)$ denote the
  conditional expectation of this charge.  Equivalently,
  \begin{equation}
    \label{eq:equalizer-lx}
    L(x)
    =\sum_{\ell:\,g_\ell\le x}p_\ell g_\ell
     +\left(s_m+\sum_{\ell:\,g_\ell>x}p_\ell\right)(x+c).
  \end{equation}
  For $j=0,\ldots,m$, put
  \[
    S_j=(1-\delta)^j,
    \qquad g_0=0.
  \]
  Here $S_j$ is the probability that the random transition is strictly
  larger than $g_j$, including termination.  On each open interval
  $(g_j,g_{j+1})$, $j=0,\ldots,m-1$, the function $L$ is affine with positive
  slope $S_j$; on $(g_m,\infty)$ its slope is $S_m=s_m>0$.
  At a grid point $g_j$, $j=1,\ldots,m$, the arrival-first rule and
  $p_j=\delta S_{j-1}$ give
  \begin{equation}
    \label{eq:equalizer-step}
    L(g_j)-L(g_{j-1})
    =(g_j-g_{j-1})S_{j-1}-cp_j
    =\frac{c}{m}S_{j-1}-cp_j
    =0.
  \end{equation}
  Since $L(0)=c$, it follows that $L(x)\ge c$ throughout $[0,c]$.
  For $x>c$, only the termination term still depends on $x$, so $L$ is
  increasing there as well.  Hence $L(x)\ge c$ for every $x\ge0$.
\end{proof}

\begin{lemma}[Expected costs of the finite game]
  \label{lem:finite-game-costs}
  Let $I$ denote an input drawn from the preceding $(m,K)$-distribution, and define
  \begin{align}
    A_{m,K}&=\sum_{i=0}^{K-1}q_m^i,                                      \label{eq:a-mk}\\
    B_{m,K}&=q_m\sum_{i=0}^{K-2}q_m^i+q_m^{K-1}.                         \label{eq:b-mk}
  \end{align}
  Every deterministic online algorithm satisfies
  \begin{equation}
    \label{eq:finite-game-online}
    \E_I[\ALG(I)]\ge cA_{m,K},
  \end{equation}
  whereas
  \begin{equation}
    \label{eq:finite-game-offline}
    \E_I[\OPT(I)]=cB_{m,K}.
  \end{equation}
\end{lemma}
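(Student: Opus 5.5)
The online bound will follow by summing the per-request charges~\eqref{eq:online-request-accounts} and applying Lemma~\ref{lem:discrete-equalizer}. The offline value will follow from Lemma~\ref{lem:fixed-node-formula}, using the fact that every gap $g_j\le c$.

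\emph{Online bound.} Let $N$ denote the number of generated requests. Request $i$ (for $i\ge1$) is generated exactly when the first $i-1$ transitions were all finite gaps, so $\Pr[N\ge i]=q_m^{i-1}$ for $i\le K$. By~\eqref{eq:accounts-below-alg}, we have $\E[\ALG]\ge\E[\sum_i\chi_i]=\sum_{i=1}^{K}\E[\chi_i\mathbf 1\{N\ge i\}]$. For $i<K$, condition on the history at the arrival of request $i$. At that point another random transition remains, so Lemma~\ref{lem:discrete-equalizer} gives conditional expected charge at least $c$. For $i=K$ the input terminates, so the request is last in its batch and receives charge at least $c$ (exactly $c$ plus a nonnegative wait; if it is never served the cost is infinite). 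The same holds at the terminal request of any shorter input, but that case is already covered by the lemma's conditional expectation. Taking expectations gives $\E[\ALG]\ge c\sum_{i=1}^{K}q_m^{i-1}=cA_{m,K}$. The one subtlety is measurability: the deterministic strategy's quantity $x$ must be a function of the history only. This holds because the future gaps are independent of the past.

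\emph{Offline value.} Every realized gap is $g_j=jc/m\le c$. By Lemma~\ref{lem:fixed-node-formula}, $\OPT=c+\sum_{i=1}^{N-1}g_{(i)}$, where $g_{(i)}$ is the realized $i$th gap. Hence $\E[\OPT]=c+\sum_{i=1}^{K-1}\Pr[N\ge i+1]\,\E[\text{gap}_i\mid\text{gap}_i\text{ finite}]$. The conditional mean gap is $\mu=\sum_j p_jg_j/q_m$, and the main computation is to evaluate it. Writing $\sum_j p_j g_j=\frac{c}{m}\sum_{j=1}^m j\delta(1-\delta)^{j-1}$ and using $\sum_{j\ge1} j\,p_j=\sum_{k=0}^{m-1}S_k-mS_m$ (tail-sum formula) gives
\[
\sum_{j}p_jg_j=\frac{c}{m}\Bigl(\frac{1-(1-\delta)^m}{\delta}-m(1-\delta)^m\Bigr)=c\bigl(q_m-s_m\bigr)=c(2q_m-1).
\]
Therefore the $i$th gap term contributes $q_m^{i-1}\cdot c(2q_m-1)$. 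This yields
\[
\E[\OPT]=c\Bigl(1+(2q_m-1)\sum_{i=0}^{K-2}q_m^{i}\Bigr).
\]

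\emph{Matching the stated form.} It remains to check that this equals $cB_{m,K}$. Write $\sum_{i=0}^{K-2}q_m^i=(1-q_m^{K-1})/(1-q_m)$. Then $B_{m,K}-1=q_m\sum_{i=0}^{K-2}q_m^i+q_m^{K-1}-1=(q_m-1+q_m)\sum_{i=0}^{K-2}q_m^i$, because $q_m^{K-1}-1=-(1-q_m)\sum_{i=0}^{K-2}q_m^i$. The two expressions therefore agree. I expect the main obstacle to be this bookkeeping, namely the tail-sum identity $\sum_j p_jg_j=c(q_m-s_m)$, together with correctly handling the terminal request in the online charge argument.
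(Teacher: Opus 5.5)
Your proof is correct. The online half is the paper's argument: request $i$ is reached with probability $q_m^{i-1}$, Lemma~\ref{lem:discrete-equalizer} applies for $i<K$, and the forced terminal request gets a direct charge of at least $c$.

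The offline half reaches the same closed form by different bookkeeping.
\begin{itemize}
\item \emph{Your route.} You use $g_j\le c$ to write $\OPT=c+\sum_{i=1}^{N-1}(a_{i+1}-a_i)$, charging the setup $c$ up front. You then compute the mean finite gap with the tail-sum identity $\sum_j p_jg_j=c(q_m-s_m)=c(2q_m-1)$. This needs a separate algebraic check that $1+(2q_m-1)\sum_{i=0}^{K-2}q_m^i=B_{m,K}$.
\item \emph{The paper's route.} It writes the fixed-node formula per generated request, $\OPT(I)=\sum_{i}\min\{G_i,c\}$ with $G_i=\infty$ for the last request. The final setup cost is thus charged to whichever round terminates. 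A single tail sum gives $\E[\min\{G,c\}]=cq_m$ for every nonforced round and $c$ for the forced one. This produces $B_{m,K}$ term by term, with no matching step.
\end{itemize}
The paper's accounting lines up round by round with the online charges: $c$ against $cq_m$ per nonforced round. That is what makes the limit $1/q_m$ in Theorem~\ref{thm:randomized-lower} transparent. Your version costs a little more algebra but is equally valid.

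One small point of phrasing: $x$ depends only on the history simply because the strategy is online. The independence of the next transition from the past is a separate fact. It is what makes the conditional law of that transition the grid law assumed in Lemma~\ref{lem:discrete-equalizer}.
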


\begin{proof}
  Request $i$ is generated with probability $q_m^{i-1}$.  For each of the first
  $K-1$ requests, Lemma~\ref{lem:discrete-equalizer} yields conditional
  expected charge at least $c$.  The $K$th request, if generated, either is
  cleared at its arrival time and receives charge $c$, or remains pending
  and is last in its eventual batch, receiving charge $c+x\ge c$ after a
  residual wait $x$.  If it is never cleared, the online cost is infinite.
  Summing over the generated requests and using
  \eqref{eq:accounts-below-alg} proves \eqref{eq:finite-game-online}.

  For the offline cost, let $G_i=a_{i+1}-a_i$ when request $i+1$ exists and
  set $G_i=\infty$ when request $i$ is last, including forced termination at
  $K$.  Lemma~\ref{lem:fixed-node-formula} can then be written pathwise as
  \begin{equation}
    \label{eq:terminal-gap-opt}
    \OPT(I)=\sum_{i\text{ generated}}\min\{G_i,c\}.
  \end{equation}
  For each nonforced transition, a discrete tail sum yields
  \begin{align}
    \E[\min\{G,c\}]
    &=\frac{c}{m}\sum_{j=0}^{m-1}\Pr\!\left[G>\frac{jc}{m}\right] \\
    &=\frac{c}{m}\sum_{j=0}^{m-1}(1-\delta)^j
     =cq_m.                                                            \label{eq:offline-round-expectation}
  \end{align}
  The forced terminal contribution at request $K$, if generated, is $c$.
  Multiplying by the reach probabilities gives
  \eqref{eq:finite-game-offline}.
\end{proof}

\begin{theorem}[Randomized fixed-node lower bound]
  \label{thm:randomized-lower}
  On every nondegenerate rooted tree, no randomized online algorithm against
  an oblivious adversary has competitive ratio smaller than
  \begin{equation}
    \label{eq:randomized-lower-constant}
    \frac{e}{e-1}.
  \end{equation}
  The lower bound uses requests at one fixed node and does not require
  branching.
\end{theorem}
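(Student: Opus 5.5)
The plan is to apply Yao's minimax principle to the finite-support $(m,K)$-distribution built above. Any randomized online algorithm against an oblivious adversary is a distribution over deterministic online algorithms. If a randomized algorithm is $\alpha$-competitive, then $\E[\ALG(I)]\le\alpha\OPT(I)$ for every input $I$. Averaging over the input distribution and exchanging the two expectations gives some deterministic strategy $\mathcal A$ with
\[
  \E_I[\mathcal A(I)]\le\alpha\,\E_I[\OPT(I)].
\]
The exchange is legitimate because the distribution has finite support and all costs are nonnegative. The strict multiplicative definition of competitiveness, with no additive constant, is what allows the pointwise inequality to be averaged with no error term.

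Lemma~\ref{lem:finite-game-costs} then gives $cA_{m,K}\le\alpha\,cB_{m,K}$. Since $c>0$, this means
\[
  \alpha\ge A_{m,K}/B_{m,K}
\]
for every $m,K\ge2$.

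Next I take limits. Fix $m$, so that $q_m\in(0,1)$, and let $K\to\infty$. Then $A_{m,K}\to1/(1-q_m)$ and $B_{m,K}\to q_m/(1-q_m)$, because the term $q_m^{K-1}$ tends to $0$. The ratio therefore tends to $1/q_m$, and we get $\alpha\ge1/q_m$ for every $m$. Finally let $m\to\infty$. Then $q_m=1-(1-1/m)^m\to1-e^{-1}$, which yields
\[
  \alpha\ge\frac{1}{1-e^{-1}}=\frac{e}{e-1}.
\]
For the strict "no ratio smaller than" form, suppose $\alpha<e/(e-1)$. Then some finite $(m,K)$ already satisfies $A_{m,K}/B_{m,K}>\alpha$, a contradiction.

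The embedding into an arbitrary nondegenerate tree follows the deterministic case. Choose a node $v$ with $c=w(P(\rho,v))>0$ and place every request at $v$. Any service clearing $v$ costs at least $c$, and the path $P(\rho,v)$ realizes exactly $c$, so the fixed-node analysis applies verbatim. The accounting in~\eqref{eq:accounts-below-alg} already uses only this lower bound on spatial cost.

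I expect the only delicate point to be the Yao step itself, in particular making sure that infinite-cost (infeasible) deterministic strategies do not break the averaging. I would handle this by observing that such strategies have infinite expected cost under the distribution, since every support input has positive probability. They therefore cannot be the minimizing strategy and can be discarded. The rest is the routine limit computation.
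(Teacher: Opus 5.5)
Your proposal is correct and follows essentially the same route as the paper: you apply Yao-style averaging (justified by Tonelli and finite support) to the $(m,K)$-distribution, invoke Lemma~\ref{lem:finite-game-costs} to get the bound $A_{m,K}/B_{m,K}$, and take $K\to\infty$ and then $m\to\infty$ to reach $1/q_m\to e/(e-1)$. The only difference is the direction of the averaging: you start from an assumed $\alpha$-competitive randomized algorithm and extract one good deterministic strategy, whereas the paper averages the deterministic bound over the seed and then uses finite support to exhibit a single bad input.
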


\begin{proof}
  Lemma~\ref{lem:finite-game-costs} shows that the distributional ratio
  against every deterministic strategy is at least
  \begin{equation}
    \label{eq:r-mk}
    R_{m,K}=\frac{A_{m,K}}{B_{m,K}}.
  \end{equation}
  For fixed $m$, letting $K$ tend to infinity yields
  \[
    R_{m,K}\longrightarrow\frac1{q_m}.
  \]
  By~\eqref{eq:q-m},
  \[
    \frac1{q_m}
    \longrightarrow
    \frac1{1-e^{-1}}
    =\frac{e}{e-1}
    \qquad\text{as }m\to\infty.
  \]
  Hence, for every $\eta>0$, finite integers $m$ and $K$ can be chosen so
  that $R_{m,K}>e/(e-1)-\eta$.

  Now fix an arbitrary randomized algorithm, sample the input independently
  of its complete random seed, and then condition on that seed.  This leaves
  a deterministic strategy, so Lemma~\ref{lem:finite-game-costs} applies.
  Averaging again over the seed, justified by Tonelli's theorem for the
  nonnegative costs, yields
  \[
    \E_{I,\mathcal A}[\ALG_{\mathcal A}(I)]
    \ge R_{m,K}\E_I[\OPT(I)].
  \]
  Since the input distribution has finite support, some fixed input in its
  support satisfies
  \[
    \E_{\mathcal A}[\ALG_{\mathcal A}(I)]
    \ge R_{m,K}\OPT(I).
  \]
  If a randomized algorithm had a uniform competitive ratio
  $\gamma<e/(e-1)$, choose $\eta$ and then $m,K$ so that
  $R_{m,K}>\gamma$, obtaining a contradiction.  This proves the theorem under
  the strict multiplicative convention of Section~\ref{sec:preliminaries}.
\end{proof}

\begin{remark}[Continuous limiting distribution]
  \label{rem:continuous-lower-distribution}
  As the grid is refined with $j/m\to g/c$, the survival probabilities
  $(1-1/m)^j$ converge to $e^{-g/c}$.  The limiting equalizer has density
  $e^{-g/c}/c$ on $[0,c]$ and termination mass $e^{-1}$.  The finite grid and
  finite horizon above avoid any need to invoke an infinite-support or
  almost-surely finite version of Yao's principle.
\end{remark}

\section{Why Local Balance Rules Do Not Suffice}
\label{app:balance-separation}

This appendix makes precise the separations mentioned in the introduction.
We first apply the classical \emph{Balance} rule for line aggregation with
additive delays~\cite{bienkowski2013chain} without modifying its trigger.
We then study a multiplicity-insensitive adaptation in which each dyadic
prefix is controlled by its oldest pending request.  The former can have a
polynomial ratio; the latter has a tight logarithmic ratio.

\subsection{The original additive potential}

Throughout this appendix we use the continuous half-line rooted at $0$: a
length-$g$ service activates the prefix $[0,g]$ and pays spatial cost $g$.
Every finite execution uses only finitely many dyadic endpoints and can
equivalently be represented by a finite rooted path containing those endpoints
as zero-request vertices.  For a dyadic level $g\in\{2^i:i\in\mathbb Z\}$,
let $W(t,g)$ denote the sum of the waiting times accumulated by the pending
requests in $[0,g]$.  The classical rule declares $g$ tight when
$W(t,g)=g/4$ and serves the largest tight level.  All requests below are placed
at location $1$, have unit delay rate in the additive potential, and have
pairwise distinct arrival times.

\begin{proposition}[Separation from additive-delay Balance]
  \label{prop:balance-separation}
  For arbitrarily large $N$, there is a rooted-line instance with $N$
  requests on which the unmodified additive-delay \emph{Balance} rule has
  competitive ratio $\Omega(\sqrt N)$ under the per-batch maximum-delay
  objective.
\end{proposition}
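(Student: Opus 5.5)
The plan is to exploit the fact that the additive potential $W(t,1)$ grows at rate equal to the \emph{number} of pending requests at location $1$. A single isolated request therefore has to wait a long time before its level becomes tight. Because all requests sit at location $1$, the relevant levels are $g\ge 1$. With one pending request, level $g$ becomes tight once $W=g/4$, so level $1$ is the first to tighten, at waiting time $1/4$. I will first pin down exactly which level Balance serves in this single-location regime, namely the level $1$ prefix of cost $1$ (the prefix up to the request). Everything below is normalized so that one service costs $\Theta(1)$.

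The instance spaces the requests out so that the additive potential stays low while the oldest request keeps waiting. I will place $N$ requests at location $1$ with consecutive gaps of order $1/\sqrt N$, and choose the arrival times so that Balance keeps a long batch open. With $k$ pending requests that arrived at spacing $\gamma$, the accumulated $W$ is about $\gamma k^2/2$. Tightness $W=1/4$ is therefore reached only when $k\approx 1/\sqrt{2\gamma}$.

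I will compare two quantities over one batch. Balance's batch cost is $1+(\text{oldest wait})\approx 1+k\gamma$. The offline optimum, by Lemma~\ref{lem:fixed-node-formula} with $c=1$, equals $1+\sum\min\{g_i,1\}$. Taking $\gamma$ tiny makes the additive potential tighten only after roughly $k\approx\gamma^{-1/2}$ arrivals, and by then each batch's cost is $\Theta(1)$ anyway.

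That plain version gives only constant ratios, so the real construction must have Balance serve \emph{many} small batches while the optimum serves few. I reverse the roles. Suppose requests arrive one at a time at gaps of just over $1/4$. Balance then serves each request alone, at wait $1/4$, paying about $5/4$ per request and $\Theta(N)$ in total. The optimum, with $N$ gaps of $1/4<1$, pays $1+(N-1)/4$, also $\Theta(N)$.

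To obtain a $\sqrt N$ gap, I will use the potential's multiplicity sensitivity in the opposite direction. The gadget sends $\sqrt N$ simultaneous-ish requests at location $1$, which make $W$ hit $1/4$ almost instantly, after time about $1/(4\sqrt N)$. Each burst is therefore served quickly. I then repeat the bursts. The delicate point is to make the optimum merge everything. I will arrange $\sqrt N$ bursts, each of $\sqrt N$ requests, with inter-burst spacing $1/(4\sqrt N)$ so that a new burst arrives just after each service. Balance then pays $\Theta(1)$ per burst, $\Theta(\sqrt N)$ in total. The optimum serves all bursts as one block of temporal span $\sqrt N\cdot \Theta(1/\sqrt N)=\Theta(1)$, at cost $O(1)$, which gives ratio $\Omega(\sqrt N)$. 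I would also require distinct arrival times within a burst, spreading each burst over a negligible interval.

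The main obstacle is verifying Balance's exact trigger behavior. That means checking that the largest tight level is indeed the one costing $\Theta(1)$, and that no higher dyadic level tightens first and serves across burst boundaries. It also means confirming that each burst's tightness time is $\Theta(1/\sqrt N)$ and does not depend on leftover requests. Since each service clears everything pending at location $1$, the potential resets to zero, and I will check this inductively burst by burst.
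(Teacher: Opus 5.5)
Your final construction is correct and is essentially the paper's proof. You take roughly $\sqrt N$ bursts of $\sqrt N$ requests at location $1$, with each burst spread over so short an interval that the level-$1$ additive potential stays below $1/4$ until the burst is complete. Each new burst is released just after the previous level-$1$ service, so Balance pays at least $1$ per burst, while the optimum serves everything as one block of temporal span $\Theta(1)$ at cost $O(1)$. The paper makes this quantitative with $k=m=q$ bursts, intra-burst spacing $h=1/(8m^2)$, and each new burst starting $\varepsilon$ after the previous service. Your earlier paragraphs are dead ends and can be dropped. The trigger checks you list (levels below $1$ are empty, levels above $1$ carry the same potential but a larger threshold, and every service clears all pending requests so the potential resets) are exactly the ones the paper makes.
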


\begin{proof}
  Fix integers $k,m\ge2$ and put
  \[
    h=\frac{1}{8m^2},
    \qquad
    \delta=\frac{1}{4m}+\frac{m-1}{16m^2}.
  \]
  The input consists of $k$ bursts.  If burst $i$ starts at time $s_i$, its
  $m$ requests arrive at the distinct times
  \[
    s_i,s_i+h,\ldots,s_i+(m-1)h.
  \]
  Set $s_1=0$.  After the service caused by burst $i$, wait an additional
  $\varepsilon>0$ and start burst $i+1$.  Since \emph{Balance} is
  deterministic, this recursively specifies a fixed finite input.

  At the last arrival of a burst, the additive potential of level $1$ is
  \[
    h\sum_{q=0}^{m-1}q
    =\frac{m-1}{16m}
    <\frac14.
  \]
  Hence no service occurs before all $m$ requests of the burst have arrived.
  Thereafter the potential grows at rate $m$.  At time $s_i+\delta$ it is
  exactly
  \[
    m\delta-h\frac{m(m-1)}2
    =\frac14.
  \]
  No smaller level contains a request, and every larger level has a larger
  tightness threshold.  Thus level $1$ is the first tight level, and
  \emph{Balance} issues one length-$1$ service that clears the complete
  burst.  Consequently,
  \begin{equation}
    \label{eq:balance-bad-alg}
    \ALG=k(1+\delta)
  \end{equation}
  under the maximum-delay objective.

  All interarrival gaps are smaller than $1$ when $\varepsilon$ is
  sufficiently small.  Since every request is at the same unit-cost
  location, the fixed-node formula of Lemma~\ref{lem:fixed-node-formula}
  shows that one service at the final arrival is optimal.  Its cost, and
  hence the offline value, is
  \begin{equation}
    \label{eq:balance-bad-opt}
    \OPT
    =1+(k-1)(\delta+\varepsilon)+(m-1)h.
  \end{equation}

  Finally, take $k=m=q$ and $\varepsilon=q^{-3}$.  The number of
  requests is $N=q^2$, while
  \[
    \ALG=q(1+\delta)=q+O(1)
  \]
  and
  \[
    \OPT
    \longrightarrow
    1+\frac{5}{16}
    =\frac{21}{16}.
  \]
  Therefore
  \[
    \frac{\ALG}{\OPT}
    \ge
    \left(\frac{16}{21}-o(1)\right)q
    =\Omega(\sqrt N).
  \]
\end{proof}

The instance has spatial aspect ratio $1$ and activates only one dyadic
level.  The loss therefore does not arise from the number of scales.  It
comes from a more basic mismatch: the additive potential grows once for
each pending request, while the objective charges only the oldest waiting
time in the batch.

\subsection{A tight logarithmic bound for Max-Balance}

We next consider the direct dyadic analogue that is insensitive to
multiplicity.  As in Section~\ref{sec:preliminaries}, requests on a
zero-cost root path are served immediately and removed; empty epochs are
then discarded.  Thus every remaining finite rooted-line input has a
positive minimum request location, which makes the event sequence below
locally finite.  Let $g_i=2^i$, for $i\in\mathbb Z$, denote the dyadic levels.
For a time $t$ and a level $g$, define
\[
  M(t,g)
  =\max\{t-a(r): r\text{ is pending at }t,\ x(r)\le g\},
\]
where the maximum of an empty set is zero.  A level $g$ is \emph{tight} when
$M(t,g)=g/4$.  Whenever one or more levels become tight, \emph{Max-Balance}
serves the largest tight level.  Arrivals at the service time are inserted
before tightness is evaluated.

\begin{theorem}[Tight ratio of Max-Balance]
  \label{thm:max-balance-logarithmic}
  On rooted-line instances with $N$ requests,
  \emph{Max-Balance} is $O(\log(N+1))$-competitive.  Moreover, its
  competitive ratio is $\Omega(\log(N+1))$.
\end{theorem}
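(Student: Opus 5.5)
The plan is to prove the upper and lower bounds separately, each against the offline value $\OPT$ computed by Theorem~\ref{thm:offline-dp} on the line.

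\emph{Upper bound.} I would use a charging argument that compares every Max-Balance service with the blocks of a fixed optimal consecutive partition (Proposition~\ref{prop:consecutive-normal-form}).

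Consider a service at level $g$ issued at time $t$. Tightness means some pending request $r$ with $x(r)\le g$ has waited exactly $g/4$. Since $g$ is the \emph{largest} tight level, the level $2g$ is not tight, so every pending request in $[0,2g]$ has waited less than $g/2$. The service therefore costs at most $g+g/4=O(g)$, whatever the oldest request in the cleared batch is. I would attribute this cost to the pair (level $g$, the offline block containing the witness $r$).

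Next I would show that each offline block $B$ receives only $O(1)$ services per dyadic level $g\le 2\kappa(B)+O(\text{span}(B))$. The idea is that two level-$g$ services both charged to $B$ must be separated in time by $g/4$ of waiting by a request of $B$, and hence span at least $g/4$ of $B$'s temporal hull. Summed geometrically over levels, this gives $O(\phi(B))$ per level.

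The remaining task is to bound the number of relevant levels. Each service clears all pending requests below its level. A witness request can therefore be charged at only one level, which caps the number of active levels per block by the number of distinct requests it contains. Combined with the geometric decay of costs across levels, this yields total charge $O(\log(N+1))\,\phi(B)$ per block.

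I expect this level-counting step to be the main obstacle, for two reasons. First, the ratio between $\kappa(B)$ and the smallest location is unbounded, so the bound must come from $N$ and not from the aspect ratio. Second, levels far below $\kappa(B)$ with few requests must be charged to requests rather than to scales. To handle this, I would first show that only levels within factor $N$ of some request-determined scale can each cost $\Omega(\phi(B)/N)$. Everything smaller would be summed geometrically.

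\emph{Lower bound.} I would formalize the introduction's bad instance, which keeps one request alive at each scale. Take $L=\Theta(\log N)$ and place requests at locations $2^{-i}$ for $i=0,\dots,L$. Release them repeatedly in a schedule where each small-scale request becomes tight at level $2^{-i}$ just before a larger one arrives. This forces Max-Balance to serve every scale separately, roughly once per round. Each round costs the online algorithm $\Theta(1)$ per active scale, while the offline schedule waits and uses a single service of length $1$, with total delay $O(1)$.

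The construction is adaptive on a deterministic algorithm, so it is a fixed input, as in Theorem~\ref{thm:deterministic-lower}. It gives $\ALG=\Omega(L)$ and $\OPT=O(1)$ using $N=O(L)$ requests.
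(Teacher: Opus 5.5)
\textbf{The lower bound has a genuine gap.} As you describe it, the construction cannot work. A service at scale $2^{-i}$ costs only $\Theta(2^{-i})$. So a round in which each scale is served once costs $\sum_i\Theta(2^{-i})=O(1)$ in total, not $\Theta(1)$ per scale. Your final accounting is also impossible: $\ALG=\Omega(L)$ and $\OPT=O(1)$ with $N=O(L)$ would give ratio $\Omega(N)$. That contradicts the $O(\log(N+1))$ upper bound you just proved, and your own choice $L=\Theta(\log N)$.

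The missing idea is that every scale must stay busy for the whole horizon. Keep one request alive at each location $g_i=2^i$ ($0\le i\le K$, horizon $T=2^K$), and release its successor $\delta$ after each clearance. Level $g_i$ would otherwise become tight, so each such request is cleared within $g_i/4$. That gives $\Omega(T/g_i)$ clearances at $g_i$, so each scale contributes $\Omega(T)$ service length. Do not claim the scales are served ``separately'': this is unjustified, since a long service also clears the shorter scales. Instead use that a length-$g_j$ service clears at most one request per location $g_i\le g_j$, and that $\sum_{i\le j}g_i<2g_j$. This yields $\ALG=\Omega(KT)$ against $\OPT\le 2T$. The input then has $N=\Theta(2^K)$ requests, not $O(K)$. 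You must also prove the upper bound $N=O(T)$ to turn $\Omega(K)$ into $\Omega(\log(N+1))$. The argument: disjointness of witness intervals allows at most $4T/g_j+1$ level-$j$ services, and each clears at most $j+1$ requests.

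\textbf{The upper bound is a workable variant, with two repairs.} You charge per optimal block. The paper instead bounds each level's total service length by $4\OPT$ against all of $\OPT$ (cases $T\in I_r$ and $T>t$). It then counts levels globally via $g_{i^\star}\le 4\OPT$ and $\sum_i m_i\le N$.

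\emph{First repair.} A block $B$ can receive on the order of $\mathrm{span}(B)/g$ level-$g$ services, not $O(1)$. The correct count $O(1+\mathrm{span}(B)/g)$ still gives $O(\phi(B))$ per level.

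\emph{Second repair.} Confining the charged levels to $g<2\kappa(B)$ requires the annulus fact $x(r)\in(g/2,g]$ for the witness. This follows from the invariant $M(t,g')\le g'/4$ at every level, not from non-tightness of level $2g$; the same invariant gives delay exactly $g/4$.

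With these in place, levels below $2\kappa(B)/(|B|+1)$ are paid for by the at most $|B|$ witnesses in $B$. The remaining $O(\log(|B|+1))$ levels cost $O(\phi(B))$ each, which completes your per-block bound.
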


\begin{proof}
  We begin with the upper bound.  Consider a \emph{Max-Balance} service of
  length $g_i$ at time $t$.  Its maximum delay is $g_i/4$, and hence its
  total cost is $5g_i/4$.  Fix an oldest request $r$ witnessing tightness and
  define its witness interval
  \[
    I_r=[a(r),t],
    \qquad |I_r|=g_i/4.
  \]
  Immediately outside trigger processing, every level satisfies
  $M(t,g)\le g/4$: ages grow continuously, arrivals introduce only age-zero
  requests, and a service can only decrease $M$.  The request therefore lies
  in the annulus $(g_{i-1},g_i]$.  Otherwise
  $M(t,g_{i-1})\ge g_i/4>g_{i-1}/4$, contradicting this invariant.  Witness
  intervals belonging to two different
  length-$g_i$ services are disjoint: after the first service, the witness of
  the next such service must arrive strictly later.  The strictness also
  follows at a coincident endpoint from the arrival-first convention.

  Fix a level $i$ and map each of its witnesses to the batch in which
  $\OPT$ serves it.  Let $T$ denote the time, $A$ the length, and $\Delta$ the
  maximum delay of that optimal batch.  If $T\in I_r$, then
  \[
    A\ge x(r)>g_i/2.
  \]
  Since the level-$i$ witness intervals are disjoint, a fixed optimal service
  lies in at most one of them.  The total $g_i$ charged in this case is
  therefore at most twice the service cost of $\OPT$.

  Otherwise $T>t$.  Since $r$ belongs to this optimal batch,
  $T-\Delta\le a(r)$, and hence
  \[
    I_r\subseteq[T-\Delta,T].
  \]
  All level-$i$ intervals mapped to the same optimal batch are disjoint, so
  their total length is at most $\Delta$.  Their total $g_i$ is consequently
  at most $4\Delta$.  Summing the two cases gives
  \begin{equation}
    \label{eq:max-balance-one-level}
    \sum_{\substack{\text{Max-Balance}\\
                     \text{services at level }i}} g_i
    \le 4\OPT.
  \end{equation}
  Thus the complete algorithmic cost contributed by any fixed level is at
  most $5\OPT$.

  It remains to bound the number of relevant levels without introducing the
  spatial aspect ratio.  Put $B=\OPT$, let $m_i$ denote the number of
  length-$g_i$ services, and let $g_{i^\star}$ denote the largest active level.
  The preceding charging of any one witness shows that
  $g_{i^\star}\le4B$.  Moreover, distinct algorithmic services have distinct
  witnesses, and hence
  \begin{equation}
    \label{eq:max-balance-witness-count}
    \sum_i m_i\le N.
  \end{equation}
  For $k=i^\star-i\ge0$, the per-level bound and the exact service cost give
  \[
    \ALG_i
    \le
    \min\left\{5B,\frac54m_i g_i\right\}
    \le
    5B\min\{1,m_i2^{-k}\}.
  \]
  Let $L=\lceil\log_2(N+1)\rceil$.  The levels with $k\le L$ contribute at
  most $5B(L+1)$.  By~\eqref{eq:max-balance-witness-count}, the remaining
  tail contributes at most
  \[
    5B\sum_{k>L}m_{i^\star-k}2^{-k}
    \le 5B\,2^{-L}\sum_i m_i
    \le5B.
  \]
  This proves
  \[
    \ALG
    \le5\bigl(\lceil\log_2(N+1)\rceil+2\bigr)\OPT.
  \]

  For the lower bound, fix an integer $K\ge1$, set
  \[
    g_i=2^i\quad (i=0,\ldots,K),
    \qquad T=g_K,
  \]
  and set $\delta=1/(4T)$.  Initially release one request
  at every location $g_i$.  Whenever an algorithmic service clears the
  current request at $g_i$ at time $t$, release its successor at the same
  location at time $t+\delta$ if and only if $t+\delta\le T$.
  Since \emph{Max-Balance} is deterministic, the construction can be
  simulated in advance and therefore defines one fixed finite input.

  Every active request at $g_i$ is cleared within $g_i/4$ time units:
  otherwise level $g_i$ becomes tight and forces a service of length at least
  $g_i$.  Let $C_i$ denote the number of clearances at $g_i$ by time $T$.
  Completed lifetimes, the $\delta$-gaps before their successors, and a final
  residual interval of length at most $g_i/4+\delta$ cover $[0,T]$.
  Consequently,
  \begin{equation}
    \label{eq:max-balance-scale-clearances}
    C_i
    \ge
    \frac{T}{g_i/4+\delta}-1,
    \qquad
    g_iC_i
    \ge
    \frac{4T^2}{T+1}-g_i.
  \end{equation}

  We also need an upper bound on the number of requests created by this
  construction.  Let $s_j$ denote the number of exact length-$g_j$ services by
  time $T$.  Their witness intervals have length $g_j/4$ and are disjoint,
  so $s_j\le4T/g_j+1$.  Each such service clears at most one active request
  at each of the $j+1$ locations $g_0,\ldots,g_j$.  Including the initial
  requests, the total number of released requests therefore satisfies
  \begin{align}
    N
    &\le K+1+\sum_{j=0}^K(j+1)s_j \\
    &=O\left(K^2+T\sum_{j=0}^K\frac{j+1}{2^j}\right)
     =O(T).                                      \label{eq:max-balance-n-upper}
  \end{align}
  On the other hand,~\eqref{eq:max-balance-scale-clearances} at $i=0$
  gives $N\ge C_0=\Omega(T)$.  Hence
  \begin{equation}
    \label{eq:max-balance-lower-request-count}
    N=\Theta(T)=\Theta(2^K).
  \end{equation}

  Finally, a length-$g_j$ service can clear at most one current request at
  each location $g_i\le g_j$, and
  $\sum_{i=0}^jg_i<2g_j$.  Summing over the services and using
  \eqref{eq:max-balance-scale-clearances} yields
  \[
    \sum_{\text{Max-Balance services}}\!\!\!\!\!\!\!\!\text{length}
    \ge\frac12\sum_{i=0}^K g_iC_i
    \ge
    \frac12\left(
      (K+1)\frac{4T^2}{T+1}-\sum_{i=0}^K g_i
    \right)
    =\Omega(KT).
  \]
  On the other hand, $\OPT$ may wait until time $T$ and issue one
  length-$T$ service, at cost at most $2T$.  Consequently,
  \[
    \frac{\ALG}{\OPT}
    =\Omega(K)
    =\Omega(\log(N+1)),
  \]
  where the last equality follows from
  \eqref{eq:max-balance-lower-request-count}.
\end{proof}

\bibliographystyle{plain}
\bibliography{references}

\end{document}